\numberwithin{equation}{section}
 \newcommand{\bra}[1]{\langle #1|}
 \newcommand{\ket}[1]{|#1\rangle}
 \newcommand{\braket}[2]{\left\langle #1|#2\right\rangle}
 \newcommand{\ketbra}[2]{|#1\rangle\langle #2 |}
 \newtheorem{theorem}{Theorem}
\def\Ddots{\mathinner{\mkern1mu\raise\p@
\vbox{\kern7\p@\hbox{.}}\mkern2mu
\raise4\p@\hbox{.}\mkern2mu\raise7\p@\hbox{.}\mkern1mu}}
\begin{document}
\begin{center}{\Large \textbf{
Density matrices in integrable face models
}}\end{center}

\begin{center}
Holger Frahm\textsuperscript{*} and Daniel Westerfeld\textsuperscript{$\dagger$}
\end{center}

\begin{center}
Institut f{\"u}r Theoretische Physik,
  Leibniz Universit{\"a}t Hannover\\ 
  Appelstra{\ss}e 2, 30167 Hannover, Germany
\\
* frahm@itp.uni-hannover.de\\
$\dagger$ daniel.westerfeld@itp.uni-hannover.de
\end{center}

\begin{center}
\today
\end{center}


\section*{Abstract}
{\bf
    Using the properties of the local Boltzmann weights of integrable interaction-round-a-face (IRF or face) models we express local operators in terms of generalized transfer matrices.  This allows for the derivation of discrete functional equations for the reduced density matrices in inhomogeneous generalizations of these models.  We apply these equations to study the density matrices for IRF models of various solid-on-solid type and quantum chains of non-Abelian $\bm{su(2)_3}$ or Fibonacci anyons.  Similar as in the six vertex model we find that reduced density matrices for a sequence of consecutive sites can be 'factorized', i.e.\ expressed in terms of nearest-neighbour correlators with coefficients which are independent of the model parameters.  Explicit expressions are provided for correlation functions on up to three neighbouring sites.
}

\vspace{10pt}
\noindent\rule{\textwidth}{1pt}
\tableofcontents\thispagestyle{fancy}
\noindent\rule{\textwidth}{1pt}
\vspace{10pt}

\section{Introduction}
Exact Bethe ansatz solutions of integrable lattice models provide valuable insights into properties which can be related to their spectrum such as thermodynamic properties and the nature of low energy excitations.  The computation of general correlation functions in this framework is much more involved.
For certain integrable vertex models and in particular the spin-$1/2$ Heisenberg model, however, manageable expressions for correlation functions have been obtained using methods based on the representation theory of quantum algebras, functional equations of $q$-Knizhnik-Zamolodchikov (qKZ) type, or the algebraic Bethe ansatz \cite{JMMN92,JiMi96,KiMT00,KMST02,GoKS04}.

Considering inhomogeneous generalizations of these models a remarkable property of the corresponding reduced density matrices has been established:
in Refs.~\cite{BoKS03,BJMST06,BJMST06a} it was found that correlation functions of spins on $N$ consecutive sites in the ground state of the infinite length antiferromagnetic Heisenberg chain are solutions of the qKZ equation (or a reduced version thereof) and can be expressed as sums of terms factorizing into products of nearest neighbour (two-point) functions of the generalized models.  Their coefficients are recursively defined elementary functions of $N$ spectral parameters and do not depend on model parameters such as the system size or choice of inhomogeneities.
While this approach based on the qKZ equations is limited to infinite chains the factorization was also observed in studies of the three-site reduced density matrix for isotropic chains of finite length (or at finite temperature) \cite{BGKS06,DGHK07}.
Later, using the fermionic structure in the space of operators of the XXZ model \cite{Boos07_HiddenGrassmann,Boos09a_HiddenGrassmann_ii}, this property has been proven to hold for arbitrary correlation functions of the XXZ Heisenberg chain in an external magnetic field and at finite temperature \cite{JiMS09_HiddenGrassmann_iii}.

In another approach to the computation of correlation functions of the Heisenberg chain discrete functional equations of reduced qKZ-type have been derived starting from the local properties of the six vertex model \cite{Sato.etal11,AuKl12}.  These equations can be shown to characterize the generalized $N$-site reduced density matrix $D_N$ of
the model uniquely when complemented with an asymptotic reduction relating the latter to $D_{N-1}$ when one of the spectral parameters is sent to infinity \cite{AuKl12}.  Taking the factorized form of the reduced density matrix from Ref.~\cite{BJMST06} as an ansatz it is found that the latter does indeed satisfy these relations.  Therefore,
this approach provides an alternative, though not constructive proof of the factorization property.

Less is known for integrable interaction-round-a-face (IRF) models. For an important class of IRF models, the Andrews-Baxter-Forrester (ABF) series of solid-on-solid (SOS) models, Baxter's corner transfer matrix \cite{Baxter76} has been used to compute local height probabilities in the infinite lattice \cite{AnBF84}.  Vertex operators for the ABF models introduced in this approach can be related to representations of quantum group symmetries present in the infinite system and their correlation functions satisfy qKZ equations \cite{FJMM94}
and the algebra formed by the vertex operators has been bosonized to obtain integral representations for multi-point local height probabilities in restricted SOS (or RSOS) models in the thermodynamic limit \cite{LuPu96}.
As for the vertex models the integrability of IRF models is based on a Yang-Baxter equation satisfied by the Boltzmann weights.  For the SOS models these weights can be arranged in an $R$-matrix depending on an additional dynamical parameter. Due to this formulation certain aspects of the Quantum Inverse Scattering Method can be employed for the analysis of these models: the corresponding modified (dynamical) Yang-Baxter algebra allows for the solution of the spectral problem by means of an algebraic Bethe ansatz \cite{FeVa96} or an adaption \cite{Niccoli13} of the framework of Sklyanin's Separation of Variables \cite{Skly92}.  Similarly, the inverse problem relating local spin operators to elements of the Yang-Baxter algebra \cite{KiMT99,GoKo00} has been solved for the dynamical vertex models and also allows to express local heights in this context \cite{LeTe13,LeTe14,LeNT16}.  A remaining difficulty for the application of the algebraic Bethe ansatz approach to the calculation of local height probabilities arises from the expressions of matrix elements of local height operators. Even for the simple case of the cyclic SOS model with rational crossing parameter these appear to be more complicated than in the (non-dynamical) six-vertex model \cite{LeTe14,LeNT16}. 
Factorization properties of the reduced density matrices in integrable face models have, to our knowledge, not been studied so far.

In this paper we address some of the issues appearing in the calculation of correlation functions in generic face models in particular on finite lattices without refering to a possible formulation as a dynamical vertex model. In the following section we specify the possible configurations of an IRF model and the Hilbert space of the related quantum chains of non-Abelian anyons. The Boltzmann weights of local configurations for these models are used to define a family of transfer matrices with generalized boundary conditions. In Section ~\ref{sec:densmat} the solution of the inverse problem is presented for a class of local operators for an inhomogeneous IRF model with certain local properties of the Boltzmann weights, in particular unitarity, together with an initial condition for their dependence on the spectral parameter. Expectation values of these operators in eigenstates of the transfer matrix are encoded in generalized $N$-point density matrices $D_N(\lambda_1,\dots,\lambda_N)$ with independently chosen spectral parameters $\lambda_n$.  In Section~\ref{sec:funceq} we derive a set of linear functional equations of reduced qKZ-type for $D_N$ which holds for a discrete set of values of the spectral parameter $\lambda_N$.

In a final section we consider several SOS models and the related chain of Fibonacci anyons where we find that these functional equations together with the analytical properties of the density matrices derived from those of the local Boltzmann weights do in fact uniquely determine the $D_N$ in these models.  Assuming that the factorization property for the $N$-point density matrices mentioned above also holds for integrable IRF models we propose an algorithm for the efficient computation of the structure coefficients.

\section{Integrable face models}
\label{sec:models}
Interaction-round-a-face (IRF or face) models are classical
statistical models defined on a square lattice with a spin (or height)
$a_\ell$ assigned to each site $\ell$.  The heights take values from a
set $\mathfrak{S}$ possibly subject to adjacency rules constraining
their values on neighbouring vertices.  These rules are conveniently
presented in the form of a graph with nodes $a\in\mathfrak{S}$ and
adjacency matrix
\begin{equation}
  \label{eq:adjac}
  A_{ab}\equiv\begin{cases}
    1,\;\text{spins $a$ and $b$ are allowed to be adjacent}\\
    0,\;\text{spins $a$ and $b$ are not allowed to be adjacent}
  \end{cases}.
\end{equation}
The energy of the face model for a given height configuration is
determined by local Boltzmann weights depending on the spins on the
vertices surrounding an elementary face \cite{AnBF84}.  These weights are
allowed to depend on an arbitrary (spectral) parameter $u$ and are
depicted graphically as
\begin{equation}
  \label{eq:def_Bweights}
  W \left(
    \begin{matrix}
      d & c\\
      a & b
    \end{matrix}\,\middle|\, 
    u\right)=
  \begin{tikzpicture}[baseline=(current bounding box.center)]
    \draw (1,0.5)--(2,0.5);
    \draw (1,-0.5)--(2,-0.5);
    
    \draw (1,-0.5)--(1,0.5);
    \draw (2,-0.5)--(2,0.5);
    
    \node [above] at (1,0.5) {$d$} ;
    \node [below] at (1,-0.5) {$a$} ;
    \node [above] at (2,0.5) {$c$} ;
    \node [below] at (2,-0.5) {$b$} ;
    \node at (1.5,0) {$u$};
  \end{tikzpicture}=\begin{tikzpicture}[baseline=(current bounding box.center)]
  \def \d {0.5 * 1.41421356237}
  \draw (0,0)--(\d,\d)--(2*\d,0)--(\d,-\d)--(0,0);
  \node [left] at (0,0) {$a$};
  \node [above] at (\d,\d) {$d$};
  \node [right] at (2*\d,0) {$c$};
  \node [below] at (\d,-\d) {$b$};
  \node at (\d,0) {$u$};
\end{tikzpicture}\,.
\end{equation}

Related quantum models describing interacting non-Abelian anyons in
one spatial dimension can be obtained from face models in their
Hamiltonian limit, see e.g.\ \cite{FeTr2007,FiFl14,FiFl18,Fin13}.  Mathematically these anyon models can be described
by braided tensor categories \cite{Kita06} consisting of a collection
of objects $\{\psi_a\}$ (including an identity).  They are equipped
with a set of fusion rules
\begin{equation}
\label{eq:fusionrules}
  \psi_a \otimes \psi_b = \bigoplus_c N_{ab}^c \, \psi_c\,.
\end{equation}
This rule for the fusion of objects $\psi_a$ and $\psi_b$ into $\psi_c$ can be represented graphically, where the vertex (to be read from top-left to bottom-right)
\begin{equation*}
    \begin{tikzpicture}
     \def \len {2}
     \draw[dashed] (0,0)--(\len,0);
     \draw[dashed] (0.25*\len,0.5*\len)--(0.5*\len,0);
     \node[above] at (0.25*\len,0.5*\len) {$b$};
     \node[left] at (0,0) {$a$};
     \node[right] at (\len,0) {$c$};
    \end{tikzpicture}\,,
\end{equation*}
is allowed provided that $N_{ab}^c\neq 0$.

The fusion rules allow to construct the Hilbert space of a chain of $L$ interacting anyons with topological charge $\psi_{a_*}$ and their
possible local interactions. In this paper we consider tensor
categories which are free of multiplicities, i.e.\
$N_{ab}^c\in\{0,1\}$, and starting with an auxiliary anyon
$\psi_{a_0}$ an orthogonal basis of 'fusion path' states is
constructed by fusing $\psi_{a_\ell}$ and $\psi_{a_*}$ into
$\psi_{a_{\ell+1}}$ resulting in
\begin{equation}\label{eq:def_fuspath}
  \ket{a_0\, a_1\,\dots\, a_L} \text{~with~}
    N_{a_{\ell}a_{*}}^{a_{\ell+1}}=1 \text{~for~} \ell=0\ldots L-1\,
\end{equation}
or, using the graphical representation of these consecutive fusion processes,
\begin{equation*}
    \begin{tikzpicture}[baseline={([yshift=-15pt]current bounding box.center)}]
     \def \m {3}
     \draw (0,0)--(1.25*\m,0);
     \node[right] at (1.3*\m,0) {\dots};
     \draw (1.6*\m,0) -- (2.35*\m,0);
     \draw[dashed] (0.35*\m,0.3*\m)--(0.5*\m,0);
     \draw[dashed] (0.85*\m,0.3*\m)--(\m,0);
     \draw[dashed] (1.7*\m,0.3*\m)--(1.85*\m,0);
     \node[below] at (0.25*\m,0) {$a_{0}$}; 
     \node[below] at (0.75*\m,0) {$a_1$};
     \node[below] at (2.1*\m,0) {$a_{L}$};
     \node[above] at (0.35*\m,0.3*\m) {$a_*$};
     \node[above] at (0.85*\m,0.3*\m) {$a_*$};
     \node[above] at (1.7*\m,0.3*\m) {$a_*$};
    \end{tikzpicture}.
\end{equation*}
Note that the sequence $\{\bm{a}\}=(a_0\, a_1\,\dots\, a_L)$ coincides
with a possible height configuration along a horizontal line of
vertices in the face model on a lattice with $L\times N$ faces
provided that the possible topological charges are labelled by the
elements of $\mathfrak{S}$ and
$N_{a_{\ell}a_{*}}^{a_{\ell+1}} = A_{a_\ell a_{\ell+1}}$.

Clearly the Hilbert space $\mathcal{H}^L$ spanned by these fusion paths can be decomposed into sectors $\mathcal{H}^L_{\alpha\beta}$ labeled by the auxiliary spins $\alpha=a_0$ and $\beta=a_L$.  Below we consider anyon chains (face models) with periodic boundary conditions (in the horizontal direction). Therefore we have to identify $a_0$ and $a_L$ which allows to remove the label $a_0$ from the basis states (\ref{eq:def_fuspath}).  As a result the model is defined on the Hilbert space ('quantum space')
\begin{equation}\label{eq:def_hilb_per}
  \mathcal{H}^L_{\mathrm{per}}= \bigoplus_{\alpha \in \mathfrak{S}}
  \mathcal{H}^L_{\alpha\alpha}\,.
\end{equation}
Note that states in $\mathcal{H}^L_{\mathrm{per}}$ correspond to
periodic paths of length $L$ on the adjacency graph.

Similarly, the sequences
$\{\underline{\bm{\alpha}}\}= (\alpha_0\, \alpha_1\, \dots\, \alpha_N)$
with $N_{\alpha_n a_*}^{\alpha_{n+1}}= A_{\alpha_n\alpha_{n+1}}=1$ of
heights on vertical lines can be identified with fusion paths spanning
an 'auxiliary' Hilbert space $\mathcal{V}^N$ of anyons.  Here and
below we use greek indices for the auxiliary height variables on
vertical lines and latin indices to label the height variables
corresponding to an anyonic quantum state on horizontal lines.

We represent the matrix elements of generic linear operators $B$ on
the space $\mathcal{V}^N \hat{\otimes} \mathcal{H}^L$ as~\footnote{The
  symbol $\hat{\otimes}$ indicates that the index of the joint vertex
  of the two factors coincides.}
\begin{equation}\label{eq:aux_op}
  \left(\bra{\bm{a}} \hat{\otimes} \bra{\underline{\bm{\alpha}}}\right)
  B
 \left(\ket{\underline{\bm{\beta}}} \hat{\otimes} \ket{\bm{b}}\right)=
\begin{tikzpicture}[baseline=(current bounding box.center)]
  \def \b {8}
  \def \c {2}
  \draw (0,\c)--(\b,\c)--(\b,-\c)--(0,-\c)--(0,\c);
  \node[scale=2] at (\b/2,0) {$B$};
  \node[above] at (0,\c) {$\alpha_0=a_0$};
  \node[above] at (\b/2,\c) {\dots};
    \node[below] at (\b/2,-\c) {\dots};
  \node[left] at (0,\c/2) {$\alpha_1$};
  \node[left] at (0,0) {$\vdots$};
  \node[left] at (0,-\c/2) {$\alpha_{N-1}$};
  \node[below] at (0,-\c) {$\alpha_N=b_0$};
  \node[above] at (\b,\c) {$\beta_0=a_L$};
  \node[right] at (\b,\c/2) {$\beta_1$};
  \node[right] at (\b,-\c/2) {$\beta_{N-1}$};
  \node[below] at (\b,-\c) {$\beta_N=b_L$};
  \node[right] at (\b,0) {$\vdots$};
 \end{tikzpicture}.
\end{equation}
The matrix elements of $B$ in $\mathcal{V}^N$ are linear operators on $\mathcal{H}^L$ (and vice versa):
\begin{equation}
\label{eq:def_mat_elem}
  B^{\{\underline{\bm{\alpha}}\}\{\underline{\bm{\beta}}\}} =
  \bra{\underline{\bm{\alpha}}} B \ket{\underline{\bm{\beta}}}\,,
  \quad
  B^{\{\bm{a}\}}_{\{\bm{b}\}} = \bra{\bm{a}} B \ket{\bm{b}}\,.
\end{equation}

As an example, we define an operator $T(u)$ on
$\mathcal{V}^1\hat{\otimes} \mathcal{H}^L$ from a single row of the
the Boltzmann weights (\ref{eq:def_Bweights}):
\begin{equation}
  \label{eq:def_monodromy}
  \begin{aligned}
  &\left( \bra{\bm{a}} \hat{\otimes} \bra{\underline{\bm{\alpha}}} \right)
    T(u) \left(\ket{\underline{\bm{\beta}}}\hat{\otimes} \ket{\bm{b}} \right)
    = \bra{\bm{a}} T^{\alpha_0\beta_0}_{\alpha_1\beta_1}(u) \ket{\bm{b}} \\
    &\qquad=
      \begin{tikzpicture}[baseline=(current bounding box.center)]
         \def \b {1.5};
         \draw (0,0+\b/2)--(0+4*\b,0+\b/2);
         \draw (0,0-\b/2)--(0+4*\b,0-\b/2);
         \foreach \x in {0,\b,3*\b,4*\b}{
         \draw (\x,\b/2)--(\x,-\b/2);}
         \node at (\b-0.5*\b,0) {$u-u_1$} ;
         \node at (4*\b-0.5*\b,0) {$u-u_L$} ;
         \node at (2*\b,0) {$\dots$} ;
         \node [above] at (0,\b/2) {$\alpha_0=a_0$} ;
  \node [below] at (0,-\b/2) {$\alpha_1=b_0$} ;
  \node [above] at (\b,\b/2) {$a_1$} ;
  \node [below] at (\b,-\b/2) {$b_1$} ;
  \node [above] at (3*\b,\b/2) {$a_{L-1}$} ;
  \node [below] at (3*\b,-\b/2) {$b_{L-1}$} ;
    \node [above] at (4*\b,\b/2) {$a_{L}=\beta_0$} ;
  \node [below] at (4*\b,-\b/2) {$b_{L}=\beta_1$} ;
     \end{tikzpicture}\,.\\
     & \quad\quad= \prod_{i=1}^L W \left(
    \begin{matrix}
      a_{i-1} & a_{i}\\
      b_{i-1} & b_{i}
    \end{matrix}\,\middle|\, 
    u-u_i\right) \delta_{a_0,\alpha_0}\, \delta_{a_L,\beta_0}\,\delta_{b_0,\alpha_1}\, \delta_{b_L,\beta_1}
  \end{aligned}
\end{equation}
Here, the complex numbers $\{u_i\}_{i=1}^L$ parameterize local
variations in the interactions around a face.
Taking the trace in $\mathcal{V}^1$ this gives the transfer matrix of
the inhomogeneous face model with periodic boundary conditions in the
horizontal direction
\begin{equation}\label{eq:def_transf}
\begin{aligned}
  t(u) =\mathrm{tr}_{\mathcal{V}^1} T(u)
  &= \sum_{\alpha,\beta}
  T^{\alpha\alpha}_{\beta\beta}(u)\,,
 \\
 \bra{\bm{a}}t(u)\ket{\bm{b}}
 &=\begin{tikzpicture}[baseline=(current bounding box.center)]
         \def \b {1.5};
         \draw (0,0+\b/2)--(0+4*\b,0+\b/2);
         \draw (0,0-\b/2)--(0+4*\b,0-\b/2);
         \foreach \x in {0,\b,3*\b,4*\b}{
         \draw (\x,\b/2)--(\x,-\b/2);}
         \node at (\b-0.5*\b,0) {$u-u_1$} ;
         \node at (4*\b-0.5*\b,0) {$u-u_L$} ;
         \node at (2*\b,0) {$\dots$} ;
         \node [above] at (0,\b/2) {$a_0$} ;
  \node [below] at (0,-\b/2) {$b_0$} ;
  \node [above] at (\b,\b/2) {$a_1$} ;
  \node [below] at (\b,-\b/2) {$b_1$} ;
  \node [above] at (3*\b,\b/2) {$a_{L-1}$} ;
  \node [below] at (3*\b,-\b/2) {$b_{L-1}$} ;
    \node [above] at (4*\b,\b/2) {$a_0=a_{L}$} ;
  \node [below] at (4*\b,-\b/2) {$b_0=b_{L}$} ;
     \end{tikzpicture}\,.
     \end{aligned}
\end{equation}
For later use we also define the following generalized transfer operators
\begin{equation}
   \label{eq:def_monoab}
    T_{\alpha\beta}(u) = \sum_{\gamma\delta} T_{\alpha\beta}^{\gamma\delta}(u)\,, \qquad
    T^{\alpha\beta}(u) = \sum_{\gamma\delta} T^{\alpha\beta}_{\gamma\delta}(u)\,,
\end{equation}
mapping $\mathcal{H}_{\alpha\beta}^L \to \mathcal{H}^L$ and vice versa.  Note that $T_{\alpha\beta}(u) T^{\alpha\beta}(v)$ is a linear operator on $\mathcal{H}^L$ which leaves $\mathcal{H}^L_{\text{per}}$ invariant.

Degenerations of this construction are the elementary operators $(E^\alpha_\beta)_n$ and $E^{\alpha_{n_1}\dots\alpha_{n_2}}_{\beta_{n_1}\dots\beta_{n_2}}$ on  $\mathcal{H}^L$ 
\begin{equation}\label{eq:elem_op}
  \begin{aligned}
    \bra{\bm{a}}\left( E^\alpha_\beta\right)_{n}
    \ket{\bm{b}}&=\delta_{a_n,\alpha}\;
    \delta_{b_n,\beta}\prod_{j\neq n}\delta_{a_j b_j}\,
    \\&= \begin{tikzpicture}[baseline=(current bounding box.center)]
      \def \l {8}
      \draw (0,0)--(0.4*\l,0)--(\l/2,\l/10)--(0.6*\l,0)--(\l,0);
      \draw (0.4*\l,0)--(\l/2,-\l/10)--(0.6*\l,0);
      \node[above] at (0.35*\l,0) {$a_{n-1}$};
      \node[above] at (0.65*\l,0) {$a_{n+1}$};
      \node[above] at (0.5*\l,\l/10) {$\alpha=a_n$};
      \node[below] at (0.5*\l,-\l/10) {$\beta=b_n$};
    \end{tikzpicture}\,
    \\
    \bra{\bm{a}}E^{\alpha_{n_1}\dots\alpha_{n_2}}_{\beta_{n_1}\dots\beta_{n_2}}\ket{\bm{b}} &= \prod_{k=n_1}^{n_2} \delta_{a_k,\alpha_k}\;
    \delta_{b_k,\beta_k}\prod_{j\notin\{n_1\dots n_2\}}\delta_{a_j b_j}\,
    \\&= \begin{tikzpicture}[baseline=(current bounding box.center)]
      \def \l {8}
      \draw (0,0)--(0.2*\l,0)--(0.3*\l,\l/10)--(0.7*\l,\l/10)--(0.8*\l,0)--(\l,0);
      \draw (0.2*\l,0)--(0.3*\l,-\l/10)--(0.7*\l,-\l/10)--(0.8*\l,0);
      \node[above] at (0.15*\l,0) {$a_{n_1-1}$};
      \node[above] at (0.85*\l,0) {$a_{n_2+1}$};
      \node[above] at (0.3*\l,\l/10) {$\alpha_{n_1}=a_{n_1}$};
      \node[below] at (0.3*\l,-\l/10) {$\beta_{n_1}=b_{n_1}$};
      \node[above] at (0.7*\l,\l/10) {$\alpha_{n_2}=a_{n_2}$};
      \node[below] at (0.7*\l,-\l/10) {$\beta_{n_2}=b_{n_2}$};
      \node[above] at (0.5*\l,\l/10) {$\cdots$};
      \node[below] at (0.5*\l,-\l/10) {$\cdots$};
    \end{tikzpicture}\,
\end{aligned}
\end{equation}
acting locally on a single site $n$ or on sequences of $n_2-n_1+1$ of neighbouring sites (subject to constraints for the heights on the neighbouring sites as a consequence of the adjacency conditions). 

A face model is said to be integrable if its transfer matrix commutes for
different values of the spectral parameter, i.e.\ $[t(u),t(v)]=0$.
This is guaranteed by a local condition on the Boltzmann weights: the face version of the Yang-Baxter equation reads
\begin{equation}\label{eq:ybeface}
\begin{aligned} 
    &\sum\limits_{g\in\mathfrak{S}}   W \left(
    \begin{matrix}
      f & g\\
      a & b
    \end{matrix}\,\middle|\, 
    u-v\right)   W \left(
    \begin{matrix}
      f & e\\
      g & d
    \end{matrix}\,\middle|\, 
    v\right)
      W \left(
    \begin{matrix}
      g & d\\
      b & c
    \end{matrix}\,\middle|\, 
    u\right)\\
    &\qquad = \sum\limits_{g\in\mathfrak{S}} W \left(
    \begin{matrix}
      f & e\\
      a & g
    \end{matrix}\,\middle|\, 
    u\right)
      W \left(
    \begin{matrix}
      a & g\\
      b & c
    \end{matrix}\,\middle|\, 
    v\right)
    W \left(
    \begin{matrix}
      e & d\\
      g & c
    \end{matrix}\,\middle|\, 
    u-v\right)\,,
    \end{aligned}
\end{equation}
or, in the graphical notation,
\begin{equation}
 \scalebox{1.5}{
  \begin{tikzpicture}[baseline=(current bounding box.center)]
 \draw (0,0)--(0.5,0.5)--(1,0)--(0.5,-0.5)--(0,0);
 \draw[dotted] (0.5,0.5)--(1,0.5);
 \draw[dotted] (0.5,-0.5)--(1,-0.5);
 \draw (1,0)--(1,0.5)--(2,0.5)--(2,0)--(1,0)--(1,-0.5)--(2,-0.5)--(2,0);
 \node[draw,circle,inner sep=1pt,fill] at (1,0) {};

  \node [left,font=\scriptsize] at (0,0) {$a$} ;
  \node [above,font=\scriptsize] at (0.5,0.5) {$f$} ;
  \node [above,font=\scriptsize] at (1,0.5) {$f$} ;
  \node [above,font=\scriptsize] at (2,0.5) {$e$} ;
  \node [right,font=\scriptsize] at (2,0) {$d$};
  \node [below,font=\scriptsize] at (2,-0.5) {$c$};
  \node [below,font=\scriptsize] at (1,-0.5) {$b$};
  \node [below,font=\scriptsize] at (0.5,-0.5) {$b$};
  \node [below right=0cm, font=\tiny] at (1,0) {$g$};
  
  \node [font=\scriptsize] at (0.5,0) {$u-v$};
  \node [font=\scriptsize] at (1.5,0.25) {$v$};
  \node [font=\scriptsize] at (1.5,-0.25) {$u$};
\end{tikzpicture} }=
\scalebox{1.5}{
\begin{tikzpicture}[baseline=(current bounding box.center)]
 \draw (0,0)--(0,0.5)--(1,0.5)--(1,0)--(0,0)--(0,-0.5)--(1,-0.5)--(1,0);
 \draw (1,0)--(1.5,0.5)--(2,0)--(1.5,-0.5)--(1,0);
 \draw [dotted] (1,0.5)--(1.5,0.5);
 \draw[dotted] (1,-0.5)--(1.5,-0.5);
 \node[draw,circle,inner sep=1pt,fill] at (1,0) {};
 
  \node [left,font=\scriptsize] at (0,0) {$a$} ;
  \node [above,font=\scriptsize] at (0,0.5) {$f$} ;
  \node [above,font=\scriptsize] at (1,0.5) {$e$} ;
  \node [above,font=\scriptsize] at (1.5,0.5) {$e$} ;
  \node [right,font=\scriptsize] at (2,0) {$d$};
  \node [below,font=\scriptsize] at (1.5,-0.5) {$c$};
  \node [below,font=\scriptsize] at (1,-0.5) {$c$};
  \node [below,font=\scriptsize] at (0,-0.5) {$b$};
  \node [below left=0cm, font=\tiny] at (1,0) {$g$};
  
  \node [font=\scriptsize] at (0.5,0.25) {$u$};
  \node [font=\scriptsize] at (0.5,-0.25) {$v$};
  \node [font=\scriptsize] at (1.5,0) {$u-v$};

\end{tikzpicture}}
 \end{equation}
where heights on nodes with a solid circle are summed over and heights connected by dotted lines are taken to be equal.

We also assume a unitarity condition 
\begin{align}\label{eq:uniface}
  \sum_{e\in\mathfrak{S}} W \left(
    \begin{matrix}
      d & e\\
      a & b
    \end{matrix}\,\middle|\, 
    u\right)
    W \left(
    \begin{matrix}
      d & c\\
      e & b
    \end{matrix}\,\middle|\, 
    -u\right) = 
  \begin{tikzpicture}[baseline=(current bounding box.center)]
  \draw (0,0)--(0.5,0.5)--(1,0)--(1.5,0.5)--(2,0)--(1.5,-0.5)--(1,0)--(0.5,-0.5)--(0,0);
  \draw  [dotted] (0.5,0.5)--(1.5,0.5);
  \draw  [dotted] (0.5,-0.5)--(1.5,-0.5);
  \node[left] at (0,0) {$a$};
  \node[above] at (0.5,0.5) {$d$};
  \node[above] at (1.5,0.5) {$d$};
  \node[right] at (2,0) {$c$};
  \node[below] at (1.5,-0.5) {$b$};
  \node[below] at (1,0) {$e$};
  \node[below] at (0.5,-0.5) {$b$};
  \node at (0.5,0) {$u$};
  \node at (1.5,0) {$-u$};
  \node[draw,circle,inner sep=1pt,fill] at (1,0) {};
\end{tikzpicture}
  &=\rho(u)\rho(-u)\delta_{ac},
\end{align}
crossing symmetry
\begin{equation}\label{eq:crossface}
W \left(
    \begin{matrix}
      d & c\\
      a & b
    \end{matrix}\,\middle|\, 
    u\right) = W \left(
    \begin{matrix}
      c & b\\
      d & a
    \end{matrix}\,\middle|\, 
    \lambda-u\right)\,,
  \end{equation}
and the initial condition
\begin{equation}
\label{eq:initface}
 W \left(
    \begin{matrix}
      d & c\\
      a & b
    \end{matrix}\,\middle|\, 
    0\right)
    =\begin{tikzpicture}[baseline=(current bounding box.center)]
 \def \b {1};
 \draw (0,\b/2)--(\b,\b/2);
 \draw (0,-\b/2)--(\b,-\b/2);
 
 \draw (0,-\b/2)--(0,\b/2);
 \draw (\b,-\b/2)--(\b,\b/2);
  
  \node [above] at (0,\b/2) {$d$} ;
  \node [below] at (0,-\b/2) {$a$} ;
  \node [above] at (\b,\b/2) {$c$} ;
  \node [below] at (\b,-\b/2) {$b$} ;
  \node at (\b/2,0) {$0$};

\end{tikzpicture}=\delta_{a,c}\,.
\end{equation}
Here $\lambda$ is the crossing parameter and $\rho(u)$ a function, both are model-dependent. We assume $\rho(0)^2=1$ which can always be reached by rescaling the Boltzmann weights. 

\section{Reduced density matrices}
\label{sec:densmat}
A complete characterization of the models introduced above requires the computation of generic correlation functions, which can in turn be expressed through reduced density matrices, i.e.\ matrix elements of the operators (\ref{eq:elem_op})
\begin{equation}
 \frac{1}{\braket{\Phi_0}{\Phi_0}}\,\bra{\Phi_0}\, E^{\alpha_{n_1}\dots\alpha_{n_2}}_{\beta_{n_1}\dots\beta_{n_2}} \, \ket{\Phi_0}\,.
\end{equation}
Here $\ket{\Phi_0} \in \mathcal{H}^L_{\text{per}}$ is the ground state of the model.  More generally one may consider the right (left) eigenvectors $\ket{\Phi}$ ($\bra{\Phi}$) of the transfer matrix corresponding to a particular eigenvalue $\Lambda(u)$, i.e.\ $t(u)\ket{\Phi}=\Lambda(u) \ket{\Phi}$ ($\bra{\Phi}t(u)=\bra{\Phi}\Lambda(u)$).

An important step towards the calculation of these quantities in integrable (vertex) models has been the solution of the `inverse problem', i.e.\ expressing local operators through elements of the Yang-Baxter algebra. This has been achieved for the six-vertex and related models in \cite{KiMT99,MaTe00,GoKo00}.  In Refs.~\cite{LeTe13,LeTe14} this construction has been generalized to local spin and local height operators in face models allowing for a formulation as a dynamical vertex model \cite{Feld95,FeVa96,EnFe98}.
Here we formulate the solution to the inverse problem for a general integrable face model, i.e.\ without using the existence of an $R$-matrix satisfying a dynamical Yang-Baxter equation, by expressing the elementary height operators (\ref{eq:elem_op}) in terms of the single-row operators (\ref{eq:def_monodromy}), (\ref{eq:def_monoab}) introduced above:
\begin{theorem}
 The local operator $\left(E^\alpha_\beta\right)_n$\,, $1\leq n <L$, can be expressed as
 \begin{equation}\label{eq:invprob}
 \left(E^\alpha_\beta\right)_n=\prod\limits_{k,\ell=1}^L\frac{1}{\rho(u_k-u_\ell)} \left(\prod_{k=1}^{n-1} t(u_k)\right)T_{\alpha\beta}(u_n) T^{\alpha\beta}(u_{n+1}) \left(\prod_{k=n+2}^L t(u_k)\right).
\end{equation}
We use the convention that empty products are $1$.
\end{theorem}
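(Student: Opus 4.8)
The plan is to establish the operator identity by comparing matrix elements between arbitrary basis states $\bra{\bm{a}}$, $\ket{\bm{b}}$ of $\mathcal{H}^L_{\mathrm{per}}$, which suffices since (by the remark following (\ref{eq:def_monoab})) the right-hand side is a well-defined operator on $\mathcal{H}^L_{\mathrm{per}}$. Expanding the product of the $L$ single-row operators and inserting complete sets of intermediate fusion-path states, $\bra{\bm{a}}(\dots)\ket{\bm{b}}$ becomes a sum over height configurations of an $L\times L$ array of Boltzmann weights, periodic in the horizontal direction. Labelling the rows $r=1,\dots,L$ from top to bottom, the factor in row $r$ carries spectral parameter $u_r$ (the operators appear in the order $u_1,\dots,u_L$), so the face in row $r$ and column $i$ depends on $u_r-u_i$. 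In particular every diagonal face $(r,r)$ is evaluated at spectral parameter $0$.

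First I would use the initial condition (\ref{eq:initface}) to collapse the $L$ diagonal faces: each forces its bottom-left and top-right heights to agree. Along the two central rows $n,n+1$ this pins the height on site $n$, while $T_{\alpha\beta}(u_n)$ and $T^{\alpha\beta}(u_{n+1})$ break the horizontal periodicity of the intermediate path precisely at the values $\alpha$ and $\beta$, producing the factors $\delta_{a_n,\alpha}\,\delta_{b_n,\beta}$ characteristic of $(E^\alpha_\beta)_n$. The remaining, off-diagonal faces occur in pairs $(r,s)$ and $(s,r)$ whose spectral parameters $u_r-u_s$ and $u_s-u_r$ are opposite, i.e.\ exactly the input for the unitarity relation (\ref{eq:uniface}).

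The heart of the argument is to sum over the intermediate heights in an order such that each pair $\{(r,s),(s,r)\}$ is, at the moment it is contracted, in the side-by-side configuration required by (\ref{eq:uniface}). Contracting such a pair replaces it by a scalar $\rho(u_r-u_s)\rho(u_s-u_r)$ and a Kronecker delta identifying two neighbouring heights; this delta then brings the next pair into the correct configuration, so the contractions cascade through the array. A direct check for $L=2$ and $L=3$ (with $n=1$) confirms that the cascade runs to completion using only (\ref{eq:initface}) and (\ref{eq:uniface}), yields exactly one factor $\rho(u_r-u_s)\rho(u_s-u_r)$ per unordered pair, and leaves the surviving deltas $\delta_{a_n,\alpha}\,\delta_{b_n,\beta}\prod_{j\neq n}\delta_{a_jb_j}$. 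Since the diagonal faces contribute only Kronecker deltas, the accumulated scalar is $\prod_{k\neq\ell}\rho(u_k-u_\ell)$, which equals the inverse of the prefactor in (\ref{eq:invprob}) up to powers of $\rho(0)$ fixed by $\rho(0)^2=1$; this reproduces $(E^\alpha_\beta)_n$ at the level of matrix elements.

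The step I expect to be the main obstacle is showing that the cascade closes for general $n$ and $L$, not just in the worked examples. For $1<n<L-1$ the operator pair sits in the interior, so the contractions must propagate inward from both blocks $\prod_{k<n}t(u_k)$ and $\prod_{k>n+1}t(u_k)$ and meet consistently at the broken-periodicity (defect) row created by $T_{\alpha\beta}(u_n)T^{\alpha\beta}(u_{n+1})$. Making this precise requires either an induction on $L$ that peels off the outermost row and column together, or an appeal to the Yang--Baxter equation (\ref{eq:ybeface}) to move faces into unitarity-ready positions whenever the naive summation order fails; in either case the delicate part is tracking the height labels across the horizontal (periodic) seam and through the defect row, so that no spurious constraints survive and exactly the factors $\delta_{a_jb_j}$ for $j\neq n$ are generated. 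Crossing symmetry (\ref{eq:crossface}) is not needed for this statement.
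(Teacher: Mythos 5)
Your proposal follows essentially the same route as the paper's Appendix~\ref{ch:appinv}: represent the matrix element as an $L\times L$ periodic array of Boltzmann weights with row $r$ carrying $u_r$, collapse the zero-parameter diagonal faces via the initial condition (\ref{eq:initface}), and then cascade the unitarity relation (\ref{eq:uniface}) over the pairs $(r,s)$, $(s,r)$, each contraction producing $\rho(u_r-u_s)\rho(u_s-u_r)$ and a delta that readies the next pair, until only $\delta_{a_n,\alpha}\delta_{b_n,\beta}\prod_{j\neq n}\delta_{a_jb_j}$ survives. The closure of the cascade for general $n$ and $L$, which you flag as the main obstacle, is handled in the paper exactly by this propagation of diagonal deltas (shown on a representative block of rows $n-1,\dots,n+2$ and iterated), with no need for the Yang--Baxter equation or a separate induction.
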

\begin{proof}
 Let us first proof the statement for an easy example where the chain length is $L=2$, i.e. two faces per row. For $\ket{\bm{a}},\ket{\bm{b}}\in\mathcal{H}^2_\mathrm{per}$ we calculate:
\begin{align*}\bra{\bm{a}}T_{\alpha\beta}(u_1)T^{\alpha\beta}(u_2)\ket{\bm{b}}&=
  \begin{tikzpicture}[baseline=-0.65ex]
  \def \b {1.5}
  \draw (0,0)--(2*\b,0);
  \draw (0,\b)--(2*\b,\b)--(2*\b,-\b)--(0,-\b)--(0,\b);
  \draw (\b,\b)--(\b,-\b);
  \node[draw,circle,inner sep=1pt,fill] at (\b,0) {};
  \node at (\b/2,\b/2) {$u_1-u_1$};
  \node at (3*\b/2,\b/2) {$u_1-u_2$};
  \node at (\b/2,-\b/2) {$u_2-u_1$};
  \node at (3*\b/2,-\b/2) {$u_2-u_2$};
  \node [above] at (0,\b) {$a_0$};
  \node [above] at (\b,\b) {$a_1$};
  \node [above] at (2*\b,\b) {$a_0$};
  \node [below] at (0,-\b) {$b_0$};
  \node [below] at (\b,-\b) {$b_1$};
  \node [below] at (2*\b,-\b) {$b_0$};
  \node [left] at (0,0) {$\alpha$};
  \node [right] at (2*\b,0) {$\beta$};
  \end{tikzpicture}
  \quad=\quad  
  \begin{tikzpicture}[baseline=-0.65ex]
  \def \b {1.5}
  \draw (0,0)--(2*\b,0);
  \draw (\b,\b)--(2*\b,\b)--(2*\b,0);
  \draw (0,0)--(0,-\b)--(\b,-\b);
  \draw (\b,\b)--(\b,-\b);
  \draw [dotted] (0,0)--(\b,\b);
  \draw [dotted] (\b,-\b)--(2*\b,0);
   \node[draw,circle,inner sep=1pt,fill] at (\b,0) {};
  \node at (3*\b/2,\b/2) {$u_1-u_2$};
  \node at (\b/2,-\b/2) {$u_2-u_1$};
  \node [above] at (0,\b) {$a_0$};
  \node [above] at (\b,\b) {$a_1$};
  \node [above] at (2*\b,\b) {$a_0$};
  \node [below] at (0,-\b) {$b_0$};
  \node [below] at (\b,-\b) {$b_1$};
  \node [below] at (2*\b,-\b) {$b_0$};
  \node [left] at (0,0) {$\alpha$};
  \node [right] at (2*\b,0) {$\beta$};
  \end{tikzpicture}\\
  &=\prod\limits^2_{k,\ell=1}\rho(u_k-u_\ell)\delta_{a_0 b_0} \delta_{a_1 \alpha} \delta_{b_1 \beta}
  =\prod\limits^2_{k,\ell=1}\rho(u_k-u_\ell)\bra{\bm{a}}\left(E^{\alpha}_{\beta}\right)_1\ket{\bm{b}}.
\end{align*}
For general $L$ the procedure works similarly, see Appendix~\ref{ch:appinv}.
\end{proof}

Unitarity of the Boltzmann weights implies
\begin{equation}
 \prod\limits_{\ell=1}^L t(u_\ell)=\prod\limits^L_{k,\ell=1}\rho(u_k-u_\ell)\, \mathds{1}\,.
\end{equation}
This allows to reformulate the theorem as:
\begin{equation}
 \left(E^{\alpha}_{\beta}\right)_n=\left(\prod_{k=1}^{n-1} t(u_k)\right)T_{\alpha\beta}(u_n) T^{\alpha\beta}(u_{n+1})\left(\prod_{k=1}^{n+1} t^{-1}(u_k)\right)\,.
\end{equation}
Note that the operators $\left(E^\alpha_\beta\right)_L$ can only be represented in the form (\ref{eq:invprob}) for $\alpha=\beta$.

Similarly an elementary operator acting on sequences of neighbouring sites can be expressed as ($1\leq n_1 \leq n_2< L$)
\begin{equation}
\label{eq:N-point}
\begin{aligned}
    E^{\alpha_{n_1}\dots\alpha_{n_2}}_{\beta_{n_1}\dots\beta_{n_2}} =&
    \prod\limits^L_{k,\ell=1}\frac{1}{\rho(u_k-u_\ell)}
    \left(\prod\limits_{k=1}^{n_1-1} t(u_k)\right)\, \times\\
    &\quad \times
    T_{\alpha_{n_1}\beta_{n_1}}(u_{n_1})\,
    \left(\prod\limits_{k=n_1+1}^{n_2}
    T^{\alpha_{k-1}\beta_{k-1}}_{\alpha_k \beta_k}(u_k)  
    \right)\,
    T^{\alpha_{n_2}\beta_{n_2}}(u_{n_2+1})\,
    \left(\prod\limits_{k=n_2+2}^{L} t(u_k)\right)\, \\
    =& \left(\prod\limits_{k=1}^{n_1-1} t(u_k)\right)\,
    T_{\alpha_{n_1}\beta_{n_1}}(u_{n_1})\, \times\\
    &\quad \times \left(\prod\limits_{k=n_1+1}^{n_2}
    T^{\alpha_{k-1}\beta_{k-1}}_{\alpha_k \beta_k}(u_k)  
    \right)\,
    T^{\alpha_{n_2}\beta_{n_2}}(u_{n_2+1})\,
    \left(\prod\limits_{k=1}^{n_2+1} t^{-1}(u_k)\right)\,.
\end{aligned}
\end{equation}
We emphasize that this construction and proof is valid for all face models whose Boltzmann weights satisfy the unitarity conditions.

Let us now introduce the operators $\tilde{T}_N^{\{\underline{\bm{\alpha}}\}\{\underline{\bm{\beta}}\}} : \mathcal{H}^L \to \mathcal{H}^L$, corresponding to $N=n_2-n_1+2$ consecutive rows of the face model with independent spectral parameters $\lambda_{n_1},\dots,\lambda_{n_2+1}$ and fixed sequences $\bm{\alpha} = (\alpha_\ell)_{\ell=n_1-1}^{n_2+1}$ and $\bm{\beta} = (\beta_\ell)_{\ell=n_1-1}^{n_2+1}$ of auxiliary indices
(although not written explicitly the index $N$ is to be understood as the combination $(n_1,n_2=n_1+N-2)$)
\begin{equation}
    \label{eq:def_T}
    \tilde{T}_N(\lambda_{n_1},\dots,\lambda_{n_2+1})^{\{\underline{\bm{\alpha}}\}\{\underline{\bm{\beta}}\}}  
    =\prod_{k=n_1}^{n_2+1}  T^{\alpha_{k-1} \beta_{k-1}}_{\alpha_k \beta_k}(\lambda_k)
    \,.
\end{equation}
Taking the expectation value of these operators in an eigenstate $\ket{\Phi}$ of the transfer matrix $t(u)$ with corresponding eigenvalue $\Lambda(u)$ we can define an operator $D_N$ 
\begin{equation}
    \label{eq:def_D}
    D_N(\lambda_{n_1},\dots,\lambda_{n_2+1})^{\{\underline{\bm{\alpha}}\}\{\underline{\bm{\beta}}\}}   
    = \frac{\bra{\Phi}
    \tilde{T}_N(\lambda_{n_1},\dots,\lambda_{n_2+1})^{\{\underline{\bm{\alpha}}\}\{\underline{\bm{\beta}}\}}   
    \ket{\Phi}}
    {{\braket{\Phi}{\Phi}}\,\prod\limits_{k=n_1}^{n_2+1} \Lambda(\lambda_k)}\,,
\end{equation}
which by construction is a matrix on the auxiliary space $\simeq \mathcal{V}^N$.  
Graphically this operator can be depicted as (here shown for $N=2$ with $n_1=1$, $n_2=1$):
\begin{equation}
 \begin{aligned}
     D_2(\lambda_1,\lambda_2)^{\{\underline{\bm{\alpha}}\}\{\underline{\bm{\beta}}\}} &=\,
 \begin{tikzpicture}[baseline=(current bounding box.center)]
  \def \b {3} 
  \def \c {2} 
  \def \n {\b/3} 

 \draw (0,-\c/2)--(\b,-\c/2);
  \draw (0,\c/2)--(\b,\c/2);
   \foreach[evaluate=\x as \yeval using \n*\x] \x in {0,3}
  \draw (\yeval,-\c/2)--(\yeval,\c/2);
  \node [left] at (0,\c/2) {$\alpha_0$};
    \node [right] at (\b,\c/2) {$\beta_0$};
 \node [left] at (0,0) {$\alpha_1$};
 \node [left] at (0,-\c/2) {$\alpha_2$};
  \node [right] at (\b,0) {$\beta_1$};
 \node [right] at (\b,-\c/2) {$\beta_2$};
 \draw (0,\c/2)--(\b,\c/2)--(\b/2,\c/2+\c/4)--(0,\c/2);
   \draw (0,-\c/2)--(\b,-\c/2)--(\b/2,-\c/2-\c/4)--(0,-\c/2);
  \node at (\b/2,\c/2+\c/8) {$\Phi$};
  \node at (\b/2,-\c/2-\c/8) {$\Phi$};

    \node at (1.5*\n,0) {$\tilde{T}_2(\lambda_1,\lambda_2)$};
    \end{tikzpicture}\cdot \frac{1}{\braket{\Phi}{\Phi}\Lambda(\lambda_1)\Lambda(\lambda_2)}\\
    &=\,
 \begin{tikzpicture}[baseline=(current bounding box.center)]
  \def \b {3} 
  \def \c {2} 
  \def \n {\b/3} 
 \draw (0,0)--(\b,0);
 \draw (0,-\c/2)--(\b,-\c/2);
  \draw (0,\c/2)--(\b,\c/2);
   \foreach[evaluate=\x as \yeval using \n*\x] \x in {0,1,2,3}
  \draw (\yeval,-\c/2)--(\yeval,\c/2);
  \node [left] at (0,\c/2) {$\alpha_0$};
    \node [right] at (\b,\c/2) {$\beta_0$};
 \node [left] at (0,0) {$\alpha_1$};
 \node [left] at (0,-\c/2) {$\alpha_2$};
  \node [right] at (\b,0) {$\beta_1$};
 \node [right] at (\b,-\c/2) {$\beta_2$};
 \draw (0,\c/2)--(\b,\c/2)--(\b/2,\c/2+\c/4)--(0,\c/2);
   \draw (0,-\c/2)--(\b,-\c/2)--(\b/2,-\c/2-\c/4)--(0,-\c/2);
  \node at (\b/2,\c/2+\c/8) {$\Phi$};
  \node at (\b/2,-\c/2-\c/8) {$\Phi$};

    \node  at (\n/2,\c/4) {\scriptsize $\lambda_1-u_1$};
    \node  at (\n/2+\n,\c/4) {\dots};
     \node  at (\n/2+\n,-\c/4) {\dots};
     \node  at (\n/2+2*\n,\c/4) {\scriptsize $\lambda_1-u_L$};
     \node  at (\n/2+2*\n,-\c/4) {\scriptsize $\lambda_2-u_L$};
    \node  at (\n/2,-\c/4) {\scriptsize$\lambda_2-u_1$};
    \node[draw,circle,inner sep=1pt,fill] at (\n,0) {};
    \node[draw,circle,inner sep=1pt,fill] at (2*\n,0) {};
 
    \end{tikzpicture}\cdot \frac{1}{\braket{\Phi}{\Phi}\Lambda(\lambda_1)\Lambda(\lambda_2)}
 \end{aligned}\,
\end{equation}
where the projection onto the eigenstate $\ket{\Phi}$ is indicated by sandwiching of $\tilde{T}_N$. Note that $D_N^{\{\underline{\bm{\alpha}}\}\{\underline{\bm{\beta}}\}} =0$ for $\alpha_{n_1-1}\ne\beta_{n_1-1}$, $\alpha_{n_2+1}\ne\beta_{n_2+1}$ for states $\ket{\Phi} \in \mathcal{H}^L_{\text{per}}$. This allows to decompose $D_N$ into blocks labeled by $\alpha_{n_1-1}$ and $\alpha_{n_2+1}$, i.e.\
\begin{equation}
  \label{eq:def_Dblock}
  D_2(\lambda_1,\lambda_2)^{\{\underline{\bm{\alpha}}\}\{\underline{\bm{\beta}}\}}  = \left(D_2^{[\alpha_0,\alpha_2]}(\lambda_1,\lambda_2)\right)^{\alpha_1}_{\beta_1}
\end{equation}
for the example $N=2$ displayed above.

Comparing to (\ref{eq:N-point}) we observe that the reduced density matrix of the face model for $N$ consecutive edges (or segments of the fusion path) in an eigenstate $\ket{\Phi}$ of the transfer matrix is obtained from $D_N$ by proper choice of the arguments $\lambda_k$:
\begin{equation}\label{eq:inv_prob_elop}
 D_N(\lambda_{n_1},\dots,\lambda_{n_2+1})^{\{\underline{\bm{\alpha}}\}\{\underline{\bm{\beta}}\}} \Big\vert_{\lambda_k=u_k,\,k=n_1,\dots,n_2+1} = \frac{1}{\braket{\Phi}{\Phi}}\, \bra{\Phi} E^{\alpha_{n_1}\dots\alpha_{n_2}}_{\beta_{n_1}\dots\beta_{n_2}}\ket{\Phi}\,,
\end{equation}
In a slight misuse of notation we shall denote $D_N$ as $N$-site density matrix below.  $D_N$ is normalized such that $\text{tr}_{\mathcal{V}^N} D_N(\lambda_{n_1},\dots\lambda_{n_2+1}) = 1$, which gives a constraint on the diagonal elements of $D_N$. Taking partial traces, i.e.\ summing over pairs $(\alpha_\ell,\beta_\ell)$ of auxiliary indices, any $n$-point function with $n\le N$ can be computed from $D_N$.

\section{Functional equations}
\label{sec:funceq}
For the next step towards the calculation of correlation functions in an IRF model the functional dependence of the density matrices $D_N$ of the generalized problem on the spectral parameters $\lambda_{n_1},\dots,\lambda_{n_2+1}$ has to be found. In integrable models such correlation functions are often closely related to solutions of functional equations of quantum Knizhnik-Zamolodchikov type \cite{JiMi94}. In the context of face models from the Andrews-Baxter-Forrester (ABF) series \cite{AnBF84} such difference equations have been obtained by Foda \emph{et al.} for the infinite lattice using corner transfer matrix and vertex operator techniques \cite{FJMM94}. Below we show that the density matrices $D_N$ satisfy a \emph{discrete} version of such equations using the local properties of the Boltzmann weights of an integrable model. Our approach resembles that of Ref.~\cite{AuKl12} for the Heisenberg model.  Other than the approach of Ref.~\cite{FJMM94}, it is also applicable for finite size face models.

To derive the functional equations for the density operators (\ref{eq:def_D}) we introduce the linear operator $A_N(\lambda_1,\dots,\lambda_N): \mathrm{End}(\mathcal{V}^N)\to\mathrm{End}(\mathcal{V}^N)$. 
Let $B$ be an arbitrary operator acting on $\mathcal{V}^N$ as defined in (\ref{eq:def_mat_elem}). 
The action of $A_N$ on $B$ graphically as
\begin{equation} \label{eq:def_A}
\begin{aligned}
 &\big( A_N(\lambda_1,\dots,\lambda_N)[B]\big)^{\{\underline{\bm{\alpha}}\}\{\underline{\bm{\beta}}\}}\\
 &\qquad=\frac{\delta_{\alpha_0\beta_0}\delta_{\alpha_N\beta_N}}{\prod_{i=1}^N \rho(\lambda_i-\lambda_N)\rho(\lambda_N-\lambda_i)}\times\\
 &\qquad\qquad\times\sum_{\{\underline{\bm{\gamma}}\}\{\underline{\bm{\delta}}\}}\delta_{\alpha_{N-1}\gamma_N}\prod_{i=1}^{N-1}
  W \left(
    \begin{matrix}
      \gamma_{i-1} & \gamma_i\\
      \alpha_{i-1} & \alpha_i
    \end{matrix}\,\middle|\, 
    \lambda_N-\lambda_i\right)\,B^{\{\underline{\bm{\gamma}}\}\{\underline{\bm{\delta}}\}}\times\\
 &\qquad\qquad\qquad\qquad\times \prod_{i=1}^{N-1}W \left(
    \begin{matrix}
      \delta_{i-1} & \beta_{i-1}\\
      \delta_{i} & \beta_i
    \end{matrix}\,\middle|\, 
    \lambda_i-\lambda_N\right)P_- \left(
    \begin{matrix}
      \delta_{N-1} & \beta_{N-1}\\
      \delta_{N} & \beta_N
    \end{matrix}\right)\\
    &\qquad=\frac{\delta_{\alpha_0\beta_0}\delta_{\alpha_N\beta_N}}{\prod_{i=1}^N \rho(\lambda_i-\lambda_N)\rho(\lambda_N-\lambda_i)}  \times\\[10pt]
&\qquad\scalebox{0.54}{\begin{tikzpicture}[baseline=(current bounding box.center)]
 
  \def \b {12} 
  \def \c {\b/6} 
  \def \d {0.5 * 1.41421356237 * \c}
  \def \sc {0.7}
  \def \sch {1.25}
  \def \scp {1.5}
  \draw (\c,2*\c)--(\b-2*\c,2*\c)--(\b-2*\c,-2*\c)--(\c,-2*\c)--(\c,2*\c);
  \draw (\c+\b,-1*\c)--(\c-\c+\b,-2*\c)--(\c-2*\c+\b,-1*\c)--(\c-\c+\b,-0*\c)--(\c+\b,-1*\c);
  \draw (\c+\b-\c,-1*\c-\c)--(\c-\c+\b-\c,-2*\c-\c)--(\c-2*\c+\b-\c,-1*\c-\c)--(\c-\c+\b-\c,-0*\c-\c)--(\c+\b-\c,-1*\c-\c);
  \draw (\c+\b-\c+3*\c,-1*\c-\c+3*\c)--(\c-\c+\b-\c+3*\c,-2*\c-\c+3*\c)--(\c-2*\c+\b-\c+3*\c,-1*\c-\c+3*\c)--(\c-\c+\b-\c+3*\c,-0*\c-\c+3*\c)--(\c+\b-\c+3*\c,-1*\c-\c+3*\c);
  \draw (\c,-1*\c)--(\c-\c,-2*\c)--(\c-2*\c,-1*\c)--(\c-\c,-0*\c)--(\c,-1*\c);
  \draw (\c-2*\c,-1*\c+2*\c)--(\c-\c-2*\c,-2*\c+2*\c)--(\c-2*\c-2*\c,-1*\c+2*\c)--(\c-\c-2*\c,-0*\c+2*\c)--(\c-2*\c,-1*\c+2*\c);
  \draw[dotted] (\c,0)--(0,0);
  \draw[dotted] (\c,-2*\c)--(0,-2*\c);
  \draw[dotted] (\c,\c)--(-\c,\c);
  \draw[dotted] (\c,\c+\c)--(-2*\c,\c+\c);
  \draw[dotted] (\b-2*\c,-\c)--(\b-\c,-\c);
  \draw[dotted] (\b-2*\c,-\c+\c)--(\b+0*\c,-\c+\c);
  \draw[dotted] (\b-2*\c,-\c+\c+\c)--(\b+\c,-\c+\c+\c);
  \draw[dotted] (\b-2*\c,-\c+\c+\c+\c)--(\b+\c+\c,-\c+\c+\c+\c);
  \node[rotate=45] at (\b+\c,0) {\dots};
  \node[rotate=135] at (-\c,0) {\dots};
  
  \foreach \x in {-1,0,1,2}{\node[draw,circle,inner sep=2pt,fill] at (\c,\x*\c) {};}
  \foreach \x in {-2,-1,0,1,2}{\node[draw,circle,inner sep=2pt,fill] at (\b-2*\c,\x*\c) {};}
  \node[draw,circle,inner sep=2pt,fill] at (0,0) {};
  \node[draw,circle,inner sep=2pt,fill] at (-\c,\c) {};
  \node[draw,circle,inner sep=2pt,fill] at (-2*\c,2*\c) {};
  \node[draw,circle,inner sep=2pt,fill] at (\b-\c,-\c) {};
  \node[draw,circle,inner sep=2pt,fill] at (\b+0*\c,0) {};
  \node[draw,circle,inner sep=2pt,fill] at (\b+\c,\c) {};
  \node[draw,circle,inner sep=2pt,fill] at (\b+2*\c,2*\c) {};

 \node[scale=\sch,left] at (-3*\c,\c) {$\alpha_0$};
 \node[scale=\sch,left] at (-2*\c,0) {$\alpha_1$};
 \node[scale=\sch,left] at (-\c,-\c) {$\alpha_{N-2}$};
 \node[scale=\sch,left] at (0,-2*\c) {$\alpha_{N-1}$};
 \node[scale=\sch,right] at (\b+\c+\c+\c,\c) {$\beta_{0}$};
  \node[scale=\sch,right] at (\b+\c+\c,0) {$\beta_{1}$};
 \node[scale=\sch,right] at (\b+\c,-\c) {$\beta_{N-2}$};
 \node[scale=\sch,right] at (\b,-2*\c) {$\beta_{N-1}$};
 \node[scale=\sch,below] at (\b-\c,-3*\c) {$\alpha_N=\beta_{N}$};
  \node[scale=2] at (2.5*\c,0) {$B$};
    \node[scale=\scp] at (0,-\c) {$\lambda_N-\lambda_{N-1}$};  
    \node[scale=\scp] at (-2*\c,\c) {$\lambda_N-\lambda_1$};
  \node[scale=\scp] at (\b-\c,-2*\c) {$P_-$};
  \node[scale=\scp] at (\b,-\c) {$\lambda_{N-1}-\lambda_{N}$};
  \node[scale=\scp] at (\b+2*\c,\c) {$\lambda_1-\lambda_N$};
 \end{tikzpicture}}\\
\end{aligned}
\end{equation}
For models with crossing symmetry as in \eqref{eq:crossface} the operator $P_- \in \mathrm{End}(\mathcal{V}^1)$ is obtained by evaluation of the Boltzmann weight (\ref{eq:def_Bweights}) at $u=\lambda$.  For more complicated cases this expression needs to be modified, see \eqref{eq:p_minus} below. Note the extra Kronecker $\delta$'s enforcing that the image of $B$ has elements acting on $\mathcal{H}^L_{\text{per}}$ only.

As an example consider the action of $A_2$ on the density matrix $D_2$, here shown for a system of length $L=2$:
\begin{equation*}
\begin{aligned}
 &\left(A_2(\lambda_1,\lambda_2)[D_2(\lambda_1,\lambda_2)]\right)^{\{\underline{\bm{\alpha}}\}\{\underline{\bm{\beta}}\}} =\frac{\delta_{\alpha_0\beta_0}\delta_{\alpha_2\beta_2}}{\rho(\lambda_1-\lambda_2)\rho(\lambda_2-\lambda_1)\Lambda(\lambda_1)\Lambda(\lambda_2)}\times
\\
&\qquad\begin{tikzpicture}

    \def \b {2.6}
    \def \c {\b/2}
    \node[left] at (-2.5*\c,0) {$\times$};
    \foreach \x in {1,0,-1}{
    \draw (0,\x*\c)--(\b,\x*\c);}
    \foreach \x in {0,1,2}{
    \draw (\x*\c,\c)--(\x*\c,-\c);}
    \draw (0,0)--(-\c,\c)--(-2*\c,0)--(-\c,-\c)--(0,0);
    \draw (\b+\c,0)--(\b+\c+\c,\c)--(\b+2*\c+\c,0)--(\b+\c+\c,-\c)--(\b+\c,0);
    \draw (\b,-\c)--(\b+\c,0)--(\b+\c+\c,-\c)--(\b+\c,-2*\c)--(\b,-\c);
    \draw[dotted] (-\c,\c)--(0,\c);
    \draw[dotted] (-\c,-\c)--(0,-\c);
    \draw[dotted] (2*\c+\b,\c)--(\b,\c);
    \draw[dotted] (\c+\b,0)--(\b,0);
    \draw (0,\c)--(\c,3/2*\c)--(\b,\c);
    \draw (0,-\c)--(\c,-3/2*\c)--(\b,-\c);
    \node[draw,circle,inner sep=1.5pt,fill] at (-\c,\c) {};
    \node[draw,circle,inner sep=1.5pt,fill] at (0,\c) {};
    \node[draw,circle,inner sep=1.5pt,fill] at (0,0) {};
    \node[draw,circle,inner sep=1.5pt,fill] at (\b+2*\c,\c) {};
    \node[draw,circle,inner sep=1.5pt,fill] at (\b,\c) {};
    \node[draw,circle,inner sep=1.5pt,fill] at (\b+\c,0) {};
    \node[draw,circle,inner sep=1.5pt,fill] at (0,\c) {};
    \node[draw,circle,inner sep=1.5pt,fill] at (\b,0) {};
    
    \node at (-\c,0) {$\lambda_2-\lambda_1$};
    \node at (\b+2*\c,0) {$\lambda_1-\lambda_2$};
    \node at (\b+\c,-\c) {$\lambda$};
    
    \node at (\c/2,\c/2) {$\lambda_1-u_1$};
    \node at (\c/2+\c,\c/2) {$\lambda_1-u_2$};
    \node at (\c/2,-\c/2) {$\lambda_2-u_1$};
    \node at (\c/2+\c,-\c/2) {$\lambda_2-u_2$};
    
    \node[left] at (-2*\c,0) {$\alpha_0$};
    \node[right] at (\b+3*\c,0) {$\beta_0$};
    \node[left] at (-\c,-\c) {$\alpha_1$};
    \node[right] at (4*\c,-\c) {$\beta_1$};
    \node[below] at (3*\c,-2* \c) {$\alpha_2=\beta_2$};
    
    \node at (\c,5/4 * \c) {$\Phi$};
    \node at (\c,-5/4 * \c) {$\Phi$};
\end{tikzpicture}
\end{aligned}
 \end{equation*}
We can now formulate the main theorem of this chapter.

\begin{theorem}
 The density operator $D_N(\lambda_1,\dots,\lambda_N)$ is a solution of the functional equation
 \begin{equation}\label{eq:func_equation}
  A_N(\lambda_1,\dots,\lambda_N)[D_N(\lambda_1,\dots,\lambda_N)]=D_N(\lambda_1,\dots,\lambda_N+\lambda)
 \end{equation}
 if $\lambda_N$ is equal to one of the inhomogeneities, i.e.\ $\lambda_N\in \{u_i\}_{i=1}^L$.
\end{theorem}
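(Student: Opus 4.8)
The plan is to start from the left-hand side of \eqref{eq:func_equation}, substitute the definitions \eqref{eq:def_A} and \eqref{eq:def_D}, and reshape the resulting lattice expression into the right-hand side using only local identities for the Boltzmann weights. Explicitly, $A_N[D_N]$ is the eigenstate sandwich $\bra{\Phi}\tilde T_N\ket{\Phi}$ of \eqref{eq:def_T} with the two triangular arrays of weights carrying arguments $\lambda_N-\lambda_i$ (on the left) and $\lambda_i-\lambda_N$ (on the right), $i=1,\dots,N-1$, attached to the auxiliary indices, together with the cap $P_-$ in the last slot. Since $D_N(\lambda_1,\dots,\lambda_N+\lambda)$ differs from $D_N(\lambda_1,\dots,\lambda_N)$ only through the spectral argument of the $N$-th row of the monodromy, it suffices to show that the net effect of the wrapping produced by $A_N$ is precisely to replace $\lambda_N$ by $\lambda_N+\lambda$ in that one row, leaving the other rows and the $\ket{\Phi}$-boundary intact up to the normalisation $\prod_k\Lambda(\lambda_k)$.

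First I would transport the left array of weights through the bulk. Its arguments are arranged so that at every elementary face the three weights that meet are in exactly the configuration of the face Yang--Baxter equation \eqref{eq:ybeface}: identifying the bulk arguments $\lambda_N-u_i$ and $\lambda_j-u_i$ with the roles of $u$ and $v$, the difference $u-v=\lambda_N-\lambda_j$ matches the argument of the attached weight. Applying \eqref{eq:ybeface} face by face therefore drags each defect weight across the rows it separates, and iterating over $j=1,\dots,N-1$ moves the whole left array to the right edge. There each transported weight with argument $\lambda_N-\lambda_i$ stands next to its partner $\lambda_i-\lambda_N$ from the right array, and the unitarity relation \eqref{eq:uniface} collapses the pair to $\rho(\lambda_N-\lambda_i)\rho(\lambda_i-\lambda_N)\,\delta$. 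These factors are cancelled exactly by the prefactor of \eqref{eq:def_A}, where the $i=N$ term is trivial because $\rho(0)^2=1$.

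What survives is the $N$-th row on its own, flanked by the cap $P_-$. Crossing symmetry \eqref{eq:crossface} rewrites the target weights as $W(\lambda_N+\lambda-u_i)=W^{\mathrm{cr}}(u_i-\lambda_N)$, so the shifted row demanded on the right-hand side is nothing but the reflected image of the original $N$-th row with reversed argument. The operator $P_-$ is by construction the weight at $u=\lambda$, which crossing turns into the $u=0$ weight, i.e.\ a turn-around Kronecker delta; it therefore supplies the cap that performs this reflection and reattaches the row with its argument advanced from $\lambda_N$ to $\lambda_N+\lambda$. Matching orientations and auxiliary indices then identifies the outcome with $\tilde T_N(\lambda_1,\dots,\lambda_N+\lambda)$ sandwiched in $\ket{\Phi}$, that is, with $D_N(\lambda_1,\dots,\lambda_N+\lambda)$.

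The point that forces the restriction $\lambda_N\in\{u_i\}_{i=1}^L$, and which I expect to be the main obstacle, is the closure of this reflection. Completing the turn-around requires one weight of the $N$-th row to trivialise through the initial condition \eqref{eq:initface}, $W(0)=\delta$, which happens only at the column $i_0$ with $\lambda_N=u_{i_0}$; there the defect line detaches cleanly and the eigenvalue relation $t(u_{i_0})\ket{\Phi}=\Lambda(u_{i_0})\ket{\Phi}$ lets the boundary absorb the residual row consistently with the periodic structure \eqref{eq:def_hilb_per}. For generic $\lambda_N$ this weight is not a Kronecker delta and the line cannot be closed, which is precisely why the functional equation holds only on the discrete set. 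Carrying the boundary constraints $\delta_{\alpha_0\beta_0}$, $\delta_{\alpha_N\beta_N}$ and $\delta_{\alpha_{N-1}\gamma_N}$ correctly through the Yang--Baxter transport, and checking that the $P_-$ cap reproduces the crossed row with the right $\rho$- and $\Lambda$-normalisation, is where the bulk of the careful work will be.
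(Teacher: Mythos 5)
Your strategy---to convert $A_N[D_N]$ directly into the shifted object by dragging the left staircase through the lattice---fails at the two steps that carry the real content, and the failure is structural rather than a matter of bookkeeping. The Yang--Baxter transport itself is fine: each defect $W(\lambda_N-\lambda_i)$ does move through the lattice, cyclically permuting the rows. But the unitarity cancellation you invoke next is geometrically impossible. After transport, a defect with argument $\lambda_N-\lambda_i$ shares only \emph{one} edge with its would-be partner $W(\lambda_i-\lambda_N)$, because in the definition (\ref{eq:def_A}) the right staircase sits one level lower than the left one---precisely the level occupied by the cap $P_-$. The relation (\ref{eq:uniface}) needs the two faces to share both their top and bottom corners with the intermediate spin summed, and that configuration never arises here; the three faces (transported defect, right defect, $P_-$) instead form a pinwheel around the summed spin. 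This offset \emph{is} the qKZ shift: if the pairs could annihilate, $A_N$ would act trivially up to the cap for every $\lambda_N$, and no shift by $\lambda$ could ever be generated. Second, even granting the cancellation, crossing (\ref{eq:crossface}) turns the surviving row $W(\lambda_N-u_j)$ into a row with argument $\lambda-\lambda_N+u_j$, not $\lambda_N+\lambda-u_j$: crossing reverses the sign of the inhomogeneity dependence, so the ``reflected image with reversed argument'' you need is not the row you have. A related symptom is the normalisation: the right-hand side of (\ref{eq:func_equation}) carries $1/\Lambda(\lambda_N+\lambda)$ by (\ref{eq:def_D}), but no row with argument $\lambda_N+\lambda$ ever appears in your manipulation, and the eigenvalue relation you cite can only produce $\Lambda(u_{i_0})=\Lambda(\lambda_N)$.

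The idea you are missing is the one the paper's proof is built on: both sides of (\ref{eq:func_equation}) are realised as two different constrained partial traces of the single larger object $A_N(\lambda_1,\dots,\lambda_N)\bigl[D_{N+1}(\lambda_1,\dots,\lambda_N,\lambda_N+\lambda)\bigr]$, in which the row at $\lambda_N+\lambda$ is present from the outset. Tracing out the added level $N{+}1$ removes that row by the eigenvalue property of $t(\lambda_N+\lambda)$ and gives the left-hand side; tracing instead over the level-$N$ pair, with the spin $\gamma$ tied to $\beta_{N-1}$ rather than $\alpha_{N-1}$, lets one resolve $P_-$ via crossing and the initial condition, pull the remaining defect through the lattice by the Yang--Baxter equation, cancel it by unitarity (this time the geometry does match), and absorb the freed row at argument $\lambda_N$ into the eigenstate via $\bra{\Phi}T_{\alpha_0\alpha_0}(\lambda_N)=\Lambda(\lambda_N)\bra{\Phi}P_{\alpha_0\alpha_0}$, producing the right-hand side. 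Only then does the restriction enter: for $\lambda_N=u_i$ the initial condition (\ref{eq:initface}) in column $i$ opens a zipper between the two bottom rows (arguments $\lambda_N-u_j$ and $\lambda_N+\lambda-u_j$), and the inversion relation (\ref{eq:invface}) collapses them into Kronecker deltas across the remaining columns, disconnecting $\gamma$ so that the two differently constrained traces coincide. Your instinct that the discreteness of $\lambda_N$ comes from the initial condition is correct, but without the auxiliary row at $\lambda_N+\lambda$ there is nothing for the $\lambda_N$-row to annihilate against, and the argument cannot be completed along the route you propose.
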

\begin{proof}
The proof is given in Appendix \ref{ch:app}.
\end{proof}

For general $\lambda_N$ the functional equation (\ref{eq:func_equation}) is a difference equation for the elements of the density operator resembling the reduced quantum Knizhnik-Zamolodchikov (qKZ) equation for correlation functions of the six-vertex model in the thermodynamic limit \cite{Idzumi.etal93,JiMi94,BJMST06}. In general Eq.~(\ref{eq:func_equation}) is valid only for a discrete set of values, namely $\lambda_N\in\{u_1,\dots,u_N\}$ -- as in the finite temperature case for the spin-$1/2$ Heisenberg model \cite{AuKl12}.  For the face models considered here, however, it is straightforward to show that this restriction can be dropped for matrix elements of (\ref{eq:func_equation}) where $\alpha_{N-1}$ is a leaf node on the adjacency graph $\mathfrak{G}$, i.e.\ if it has exactly one neighbour.

Another functional equation satisfied by the density matrices follows directly from the Yang-Baxter equation. Introducing the operator 
\begin{equation}
    \bra{\underline{\bm{\alpha}}}W_i(u)\ket{\underline{\bm{\beta}}}\equiv W \left(
    \begin{matrix}
      \alpha_{i-1} & \beta_i\\
      \alpha_i & \alpha_{i+1}
    \end{matrix}\,\middle|\, 
    u\right)\prod_{j\neq i} \delta_{\alpha_j \beta_j}
\end{equation} 
we have for $1 \leq i<N$:
\begin{equation}\label{eq:func_eq_ybe}
    W_i(\lambda_{i+1}-\lambda_i)\cdot D_N(\lambda_1,\dots,\lambda_i,\lambda_{i+1},\dots,\lambda_N)=D_N(\lambda_1,\dots,\lambda_{i+i},\lambda_i,\dots,\lambda_N)\cdot W_i(\lambda_{i+1}-\lambda_i)\, .
\end{equation}

\section{Applications}
In this section we will use the functional equation (\ref{eq:func_equation}) to
compute the density matrices of face models of solid-on-solid (SOS)
type and the related anyon chains.  This class of face models has been
introduced by Andrews, Baxter and Forrester as an
auxiliary tool to solve the $8$-vertex model.  Specifically, we shall
consider two critical models with a finite set $\mathfrak{S}$ of height variables, i.e.\ the cyclic solid-on-solid (CSOS) model \cite{KuYa88a,Pear88} and the restricted solid-on solid (RSOS) model \cite{AnBF84}.

\subsection{The cyclic solid-on-solid model}
\label{sec:csos}
The height variables of the CSOS model take integer values
$0\leq a \leq r-1$ for a positive integer $r$.  Heights on adjacent sites are required to differ
by $\pm 1$ modulo $r$, hence a configuration of neighbouring spins $0$
and $r-1$ is allowed.  As a consequence the adjacency graph for this
model corresponds to the Dynkin diagram of the affine Lie algebra
$\tilde{A}_{r-1}$, see Figure~\ref{fig:adjSOS}(a).
\begin{figure}[t]
\centering
 \begin{tikzpicture}
 \def \r {8}
 \def \h {2}
  \draw (0,0)--(\r,0)--(\r/2,-\h)--(0,0);
  \foreach \x in {0,\r/4,3*\r/4,\r}{
  \node[draw,circle,inner sep=1pt,fill] at (\x,0) {};}
  \node[draw,circle,inner sep=1pt,fill] at (\r/2,-\h) {};
  \node[above] at (2*\r/4,0) {$\dots$};
  \node[above] at (0,0) {$1$};
  \node[above] at (\r/4,0) {$2$};
   \node[above] at (3*\r/4,0) {$r-2$};
    \node[above] at (4*\r/4,0) {$r-1$};
   \node[below] at (\r/2,-\h) {$0$}; 
   \node[below] at (\r/2,-1.5*\h) {(a)};
 \end{tikzpicture}
\vspace*{8mm}

 \begin{tikzpicture}
 \def \r {8}
 \def \h {2}
  \draw (0,0)--(\r,0);
  \foreach \x in {0,\r/4,3*\r/4,\r}{
  \node[draw,circle,inner sep=1pt,fill] at (\x,0) {};}
  \node[above] at (2*\r/4,0) {$\dots$};
  \node[above] at (0,0) {$1$};
  \node[above] at (\r/4,0) {$2$};
   \node[above] at (3*\r/4,0) {$r-2$};
    \node[above] at (4*\r/4,0) {$r-1$};
   \node[below] at (\r/2,-0.5*\h) {(b)};
 \end{tikzpicture}
\caption{Adjacency graphs of the CSOS model (a) and the RSOS model (b).} 
\label{fig:adjSOS}
\end{figure}
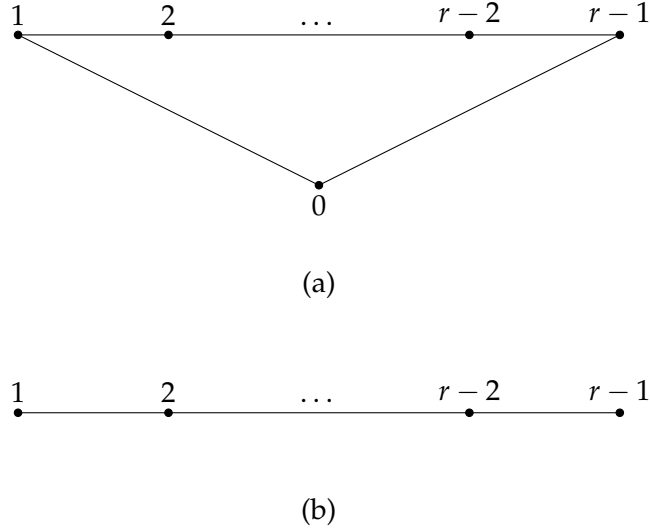
The Boltzmann weights of the critical CSOS model are (heights in the arguments of $W$ are taken modulo $r$)
\begin{align}\label{eq:csos_weights}
  \begin{split}
    \alpha_a &=W \left(
    \begin{matrix}
      a-1 & a\\
      a & a+1
    \end{matrix}\,\middle|\, u\right) =W \left(
    \begin{matrix}
      a+1 & a\\
      a & a-1
    \end{matrix}\,\middle|\, u\right)
  = \frac{\sin(\lambda-u)}{\sin\lambda}\,,\\
    \beta_a^\pm &=W \left(
    \begin{matrix}
      a & a\pm1\\
      a\mp1 & a
    \end{matrix}\,\middle|\, u\right) =\frac{\sin u}{\sin\lambda}\,,\\
    \gamma_a &=W \left(
    \begin{matrix}
      a & a+1\\
      a+1 & a
    \end{matrix}\,\middle|\, 
    u\right)=1\,,\\
    \delta_a &=W \left(
    \begin{matrix}
      a & a-1\\
      a-1 & a
    \end{matrix}\,\middle|\, u\right)=1\,.
  \end{split}
\end{align}
Here the crossing parameter is $\lambda=\pi m/r$ where
$1\leq m\leq r-1$ is coprime to $r$ (the Boltzmann weights of the
general CSOS model are elliptic functions of the spectral parameter
$u$ depending explicitly on the height variable $a$ through an
additional phase angle which, however, plays no role in the critical
case).  Note that the weights (\ref{eq:csos_weights}) coincide with
the non-zero vertex weights in the $R$-matrix of the six-vertex model.
In fact, this relation has been used extensively, e.g.\ to identify
the operator content of the low energy effective theory of the lattice
model in the thermodynamic limit \cite{KiPe89}.  Furthermore, and unlike most other face models, the transfer matrix of the CSOS model has a simple eigenstate which allows for its solution by means of the algebraic Bethe ansatz method based on an $R$-matrix depending on a dynamical parameter related to the height variables.  This property has already been used to compute form factors in the basis of Bethe eigenstates of this model \cite{LeTe13}.

Here we will utilize the existence of a particularly simple eigenstate of the CSOS transfer matrix to illustrate the approach to compute correlation functions based on the functional equation (\ref{eq:func_equation}). To be specific we choose the CSOS model with $r=3$ and crossing parameter $\lambda=2\pi/3$.  Considering a lattice of length $L=3k$ with integer $k$ it is easy to verify that
\begin{equation}
  \label{eq:csos-ref}
  \ket{\Omega} \equiv \frac{1}{\sqrt{3}}\left(\ket{0 1 2 \;0 1 2 \dots 0}
    +\ket{1 2 0\; 1 2 0 \dots 1}
    +\ket{2 0 1\; 2 0 1 \dots 2}\right) \in \mathcal{H}^L_{\text{per}}
\end{equation}
is an eigenstate of the transfer matrix with eigenvalue
\begin{equation}
  \Lambda_0(u) = a(u) + d(u)\,
\end{equation}
where 
\begin{align}
     a(u) \equiv \prod_{i=1}^L\frac{\sin(\lambda-(u-u_i))}{\sin\lambda}\,,\qquad
     d(u)\equiv \prod_{i=1}^L\frac{\sin(u-u_i)}{\sin\lambda}\,.
\end{align}

From the definition of the Boltzmann weights it follows that the
action of the single row operators (\ref{eq:def_monodromy}) on this
state is
\begin{equation}\label{eq:mon_action1}
  \begin{aligned}
  T^{\alpha \alpha}_{\alpha+1 \alpha+1}(u)\ket{\Omega}&=\frac{a(u)}{\sqrt{3}}
  \ket{\alpha\, \alpha+1\, \alpha+2\; \alpha \dots \alpha}\,,\\
  T^{\alpha \alpha}_{\alpha-1 \alpha-1}(u)\ket{\Omega}&=\frac{d(u)}{\sqrt{3}}
  \ket{\alpha\, \alpha-1\, \alpha-2\; \alpha \dots \alpha}\,.
\end{aligned}
\end{equation}

This allows to analyze the density matrices $D_N$ in the state
(\ref{eq:csos-ref}). The simplest case is $N=1$ where periodic boundary
conditions imply that $\alpha_0=\beta_0$ and $\alpha_1=\beta_1$. As a consequence $D_1(\lambda)$ is a diagonal operator on $\mathcal{V}^1$ whose diagonal elements can be directly read off from Eqs.~(\ref{eq:mon_action1}), e.g. 
\begin{equation}
    \bra{01}D_1^{(\Omega)}(\lambda)\ket{01}=\frac1{\sqrt{3}}\,\tilde{a}(\lambda)\,,\quad
    \bra{02}D_1^{(\Omega)}(\lambda)\ket{02}=\frac1{\sqrt{3}}\,\tilde{d}(\lambda)\,,
\end{equation}
where $\tilde{a}(u)={a(u)}/({\sqrt{3}\;\Lambda_0(u)})$ and respectively for $\tilde{d}$. Note that the trace condition $\mathrm{tr}_{\mathcal{V}^1} D_1(\lambda)=1$ implies $\tilde{d}(\lambda) = 1/\sqrt{3} -\tilde{a}(\lambda)$\,.

Similarly, the diagonal elements of the two-site density matrix $D_2(\lambda_1,\lambda_2)$ in the reference state are obtained from (\ref{eq:mon_action1}). The functional equation (\ref{eq:func_equation}) allows for the direct computation of all off-diagonal elements:
choosing
\begin{equation}
  \{\ket{0 1 0},\, \ket{0 1 2},\, \ket{0 2 0},\,\ket{0 2 1}\} \cup
  \{\ket{1 2 1},\, \ket{1 2 0},\, \ket{1 0 1},\, \ket{1 0 2}\}\cup
  \{\ket{2 0 2},\, \ket{2 0 1},\, \ket{2 1 2},\, \ket{2 1 0}\}
\end{equation}
as a basis for the auxiliary space $\mathcal{V}^2$ and using the fact that the Boltzmann weights are invariant under the shift of all heights by an integer we find that the density matrix has a structure of three identical $4\times4$ blocks $D_2^{(\Omega)}(\lambda_1,\lambda_2)$.  Restricting ourselves to the first of these blocks we find for the reference state (\ref{eq:csos-ref})
\begin{equation}
  D_2^{(\Omega)}(\lambda_1,\lambda_2)=\left(
    \begin{array}{cccc}
      \tilde{a}(\lambda_1) \tilde{d}(\lambda_2) & 0 & g(\lambda_1,\lambda_2) & 0  \\
      0 &  \tilde{a}(\lambda_1)  \tilde{a}(\lambda_2) & 0 & 0 \\
      0 & 0 & \tilde{a}(\lambda_2) \tilde{d}(\lambda_1) & 0  \\
      0 & 0 & 0 & \tilde{d}(\lambda_1) \tilde{d}(\lambda_2) \\
    \end{array} \right)\,.
\end{equation}
or, using the notation introduced in Eq.~(\ref{eq:def_Dblock}),
\begin{equation}
\begin{aligned}
    D_2^{(\Omega)[00]}(\lambda_1,\lambda_2) &= \left(\begin{array}{cc}
      \tilde{a}(\lambda_1) \tilde{d}(\lambda_2) & g(\lambda_1,\lambda_2)  \\
      0 & \tilde{a}(\lambda_2) \tilde{d}(\lambda_1)   \\
    \end{array} \right)\,, \\
    D_2^{(\Omega)[01]}(\lambda_1,\lambda_2) &= \tilde{d}(\lambda_1)  \tilde{d}(\lambda_2)\,,\quad
    D_2^{(\Omega)[02]}(\lambda_1,\lambda_2) &= \tilde{a}(\lambda_1)  \tilde{a}(\lambda_2)\,.
\end{aligned}
\end{equation}
With this ansatz we obtain from the functional equations (\ref{eq:func_equation}) and (\ref{eq:func_eq_ybe}) after some algebra an explicit expression for the off-diagonal element ($\lambda_{k\ell}\equiv \lambda_k-\lambda_\ell$)
\begin{equation}
  g(\lambda_1,\lambda_2) =-\frac{\sqrt{3}}{2\sin(\lambda_{12})}  \left(\tilde{d}(\lambda_1)\tilde{a}(\lambda_2)-\tilde{a}(\lambda_1)\tilde{d}(\lambda_2)\right).
\end{equation}
Hence, as a consequence of the simple form of the reference state (\ref{eq:csos-ref}) of the $r=3$ CSOS model, the two-site density matrix is completely determined by the one-point function $\tilde{a}(\lambda)$.
This is also true for the three-site density matrix $D_3^{(\Omega)}(\{\lambda_1,\lambda_2,\lambda_3\})$.
Choosing the bases
\begin{equation*}
    [0,0]:\, \ket{0120}, \ket{0210}\,,\quad
    [0,1]:\, \ket{0101}, \ket{0121}, \ket{0201}\,,\quad
    [0,2]:\, \ket{0102}, \ket{0202},\ket{0212}
\end{equation*}
for the $[0,a]$-blocks in the auxiliary space $\mathcal{V}^3$ we find:\footnote{The coefficients in (\ref{eq:D3CSOSr3}) are obtained using a combination of the functional equations (\ref{eq:func_equation}) and (\ref{eq:func_eq_ybe}) and the algorithm for the calculation of the structure functions in the factorized form of the $N$-site density matrices desscribed below.}
\begin{equation}
\label{eq:D3CSOSr3}
\begin{aligned}
D_3^{(\Omega)[0,0]}(\{\lambda_j\})&=
\sqrt{3}\begin{pmatrix}
\tilde{a}(\lambda_1)\tilde{a}(\lambda_2)\tilde{a}(\lambda_3)   &   0\\
0   &   \tilde{d}(\lambda_1)\tilde{d}(\lambda_2)\tilde{d}(\lambda_3)
\end{pmatrix}\\
D_3^{(\Omega)[0,1]}(\{\lambda_j\})&=
\sqrt{3}\begin{pmatrix}
 \tilde{a}(\lambda_1)\tilde{d}(\lambda_2)\tilde{a}(\lambda_3)   &   0   &   -\tilde{a}(\lambda_3) g(\lambda_1,\lambda_2) \\
-\tilde{a}(\lambda_1) g(\lambda_2,\lambda_3)   &    \tilde{a}(\lambda_1)\tilde{a}(\lambda_2)\tilde{d}(\lambda_3)   &   \star\\
0   &   0   &     \tilde{a}(\lambda_1)\tilde{a}(\lambda_2)\tilde{a}(\lambda_3) 
\end{pmatrix}
\\
D_3^{(\Omega)[0,2]}(\{\lambda_j\})&=\sqrt{3}\begin{pmatrix}
 \tilde{a}(\lambda_1)\tilde{d}(\lambda_2)\tilde{d}(\lambda_3)   &  - \tilde{d}(\lambda_1) g(\lambda_1,\lambda_2)  &   \star\star\\
 0   &   \tilde{d}(\lambda_1)\tilde{a}(\lambda_2)\tilde{d}(\lambda_3)   &  - \tilde{d}(\lambda_1) g(\lambda_2,\lambda_3)\\
 0   &   0   &    \tilde{d}(\lambda_1)\tilde{d}(\lambda_2)\tilde{a}(\lambda_3)
\end{pmatrix}
\end{aligned}
\end{equation}
with
\begin{equation*}
\begin{aligned}
    \star &=-\frac{ \cos(\frac{\pi}{6}-\lambda_{12})\tilde{a}(\lambda_2)g(\lambda_1,\lambda_3)}{\sin\lambda_{12}}-\frac{\sqrt{3}\tilde{a}(\lambda_1)g(\lambda_2,\lambda_3)}{2\sin\lambda_{12}}\,,\\
    \star\star &=\frac{ \cos(\frac{\pi}{6}-\lambda_{12})\tilde{d}(\lambda_2)g(\lambda_1,\lambda_3)}{\sin\lambda_{12}}-\frac{\sqrt{3}\tilde{d}(\lambda_1)g(\lambda_2,\lambda_3)}{2\sin\lambda_{12}}\,. 
\end{aligned}
\end{equation*}

This suffices for the calculation of the nearest and next-nearest neighbour correlation functions in the reference state (\ref{eq:csos-ref}). In the homogeneous limit (i.e.\ all inhomogeneities $u_k=0$) we have $\tilde{a}(0)=1/\sqrt{3}$, $\tilde{d}(0)=0$, and $g(0,0)=0$. Therefore, the two and three-site density matrices for $\lambda_i=0$ are diagonal with non-zero elements
\begin{equation}
\begin{aligned}
  &\bra{012}D_2^{(\Omega)}(0,0)\ket{012} =
  \bra{120}D_2^{(\Omega)}(0,0)\ket{120} =
  \bra{201}D_2^{(\Omega)}(0,0)\ket{201} = \frac13\,,
\\
  &\bra{0120}D_3^{(\Omega)}(0,0,0)\ket{0120} =
  \bra{1201}D_3^{(\Omega)}(0,0,0)\ket{1201} =
  \bra{2012}D_3^{(\Omega)}(0,0,0)\ket{2012} = \frac13\,.
\end{aligned}
\end{equation}
With Eq.~(\ref{eq:inv_prob_elop}) this yields the expected results for the two- and three-point functions in the reference state $\ket{\Omega}$ of the $r=3$ CSOS model.

\subsection{The restricted solid-on-solid model}
\label{sec:rsos}
The RSOS model can be treated in a similar way. The state space is
obtained by removing $0$ from the set of heights allowed in the CSOS
model.  As a consequence the adjacency graph corresponds to the Dynkin
diagram of $A_{r-1}$, Fig.~\ref{fig:adjSOS}(b).
The Boltzmann weights of the critical RSOS model are again given in terms of trigonometric functions
\begin{equation}
\label{eq:rsos_weights}
  W \left(
    \begin{matrix}
      a & b\\
      c & d
    \end{matrix}\,\middle|\, 
    u\right)=\delta_{ad}\sqrt{\frac{g_b g_c}{g_a g_d}}\rho(u+\lambda)-\delta_{bc} \rho(u)
\end{equation}
with
\begin{equation}\label{eq:gauge-factors}
  \rho(u)=\frac{\sin(u-\lambda)}{\sin\lambda},\quad g_x=\frac{\sin(\lambda x)}{\sin\lambda}\, 
\end{equation}
and a crossing parameter $\lambda=\pi/r$. 
They satisfy the face Yang-Baxter equation (\ref{eq:ybeface}) and the unitarity condition (\ref{eq:uniface}).  As a consequence of the gauge factors $g_x$ in the definition of the Boltzmann weights the crossing relation (\ref{eq:crossface}) is modified to
\begin{equation}
\label{eq:crossgauged}
  W \left(
    \begin{matrix}
      a & b\\
      c & d
    \end{matrix}\,\middle|\, 
    u\right)=\sqrt{\frac{g_b\, g_c}{g_a\, g_d}} \, W \left(
    \begin{matrix}
      b & d \\ 
      a & c 
    \end{matrix}\,\middle|\, 
    \lambda-u\right)\,.
\end{equation}
This modification has to be taken into account whenever crossing symmetry is used, in particular in the definition of the $A$-operator in \eqref{eq:def_A}.  To cancel the additional factors from the Boltzmann weight evaluated at $u=\lambda$ we have to rescale the corresponding weight giving the operator $P_-\in\mathrm{End}(\mathcal{V}^1)$:
\begin{equation}\label{eq:p_minus}
\bra{\alpha_0\alpha_1\alpha_2} P_-\ket{\beta_0 \beta_1 \beta_2} = \begin{tikzpicture}[baseline=(current bounding box.center)]
  \def \d {0.5 * 1.41421356237}
  \draw (0,0)--(\d,\d)--(2*\d,0)--(\d,-\d)--(0,0);
  \node [left] at (0,0) {$\alpha_1$};
  \node [above] at (\d,\d) {$\alpha_0=\beta_0$};
  \node [right] at (2*\d,0) {$\beta_1$};
  \node [below] at (\d,-\d) {$\alpha_2=\beta_2$};
  \node at (\d,0) {$P_-$};
\end{tikzpicture} \equiv \delta_{\alpha_0 \beta_0} \delta_{\alpha_2 \beta_2} \, \sqrt{\frac{g_{\alpha_0} {g_{\alpha_2}}}{g_{\alpha_1} g_{\beta_1}}}\, \begin{tikzpicture}[baseline=(current bounding box.center)]
  \def \d {0.5 * 1.41421356237}
  \draw (0,0)--(\d,\d)--(2*\d,0)--(\d,-\d)--(0,0);
  \node [left] at (0,0) {$\alpha_1$};
  \node [above] at (\d,\d) {$\alpha_0$};
  \node [right] at (2*\d,0) {$\beta_1$};
  \node [below] at (\d,-\d) {$\alpha_2$};
  \node at (\d,0) {$\lambda$};
\end{tikzpicture}.
\end{equation}
In addition, the third step of the proof in Appendix \ref{ch:app} needs to be reconsidered. Keeping track of the gauge factors we find, that the  $A$-operator needs to be multiplied by an additional factor of $\sqrt{{ g_{\beta_{N-1}}}/ {g_{\alpha_{N-1}} } }$.

%
%
By construction the transfer matrix (\ref{eq:def_transf}) of this model and its eigenvalues $\Lambda(u)$ are Fourier polynomials of degree $L$
\begin{equation}\label{eq:fouriereigen}
 \Lambda(u)=\sum_{n=-L/2}^{L/2} \Lambda_{2n} \mathrm{e}^{i2nu}\,.
\end{equation}
The leading Fourier coefficients are known to take values \cite{KlPe92}
\begin{equation}\label{eq:asymp_fouri}
   \Lambda_{\pm L}=\left(\prod_{\ell=1}^L \exp(\mp i (u_\ell+\lambda/2))\right)\frac{2\cos((2j+1)\lambda)}{(2\sin\lambda)^L}\,,
   \quad j\in\{0,\frac12,1,\dots,\frac{r-2}2\}\,.
\end{equation}
This allows to decompose the spectrum of the RSOS model into topological sectors with 'quantum dimension'
\begin{equation}
  d_q(j) = \frac{\sin(\pi(2j+1)/r)}{\sin(\pi/r)}\,,
\end{equation}
labeled by the quantum number $j$.
In addition there is a discrete symmetry due to the invariance of the Boltzmann weights under a reflection of all heights, i.e.\ $a\to r-a$.\footnote{Note that this reflection is an automorphism of the underlying fusion algebra $su(2)_{r-2}$ for odd $r$, see the discussion for $r=5$ in Appendix~\ref{ch:app2}. This allows to restrict the possible topological quantum numbers in (\ref{eq:asymp_fouri}) to take integer values $j\in\{0,1,\dots,(r-3)/2\}$ .} This symmetry is inherited to the transfer matrix and the reduced density operators.

Starting with the single-site density matrix $D_1(\lambda_1)$ we observe that only its diagonal elements are allowed to be non-zero. We will now prove that $D_1(\lambda)$ is independent of the spectral parameter $\lambda$ in any eigenstate $\ket{\Phi}$ of the transfer matrix (although the matrix elements may still depend on the choice of inhomogeneities $\{u_i\}_{i=1}^L$): to compute 
$D_1^{[12]}(\lambda)$ we note that due to the adjacency condition
\begin{align}
\label{eq:D1RSOS-12}
    D_1^{[12]}(\lambda) = \bra{12}D_1(\lambda)\ket{12}&=\sum_{\alpha_1}\bra{1\alpha_1} D_1(\lambda)\ket{1\alpha_1} 
    =\frac{\bra{\Phi}P_1^{(1)}t(\lambda)\ket{\Phi}}{\braket{\Phi}{\Phi}\Lambda(\lambda)}
    =\frac{\bra{\Phi}P_1^{(1)}\ket{\Phi}}{\braket{\Phi}{\Phi}}\, ,
\end{align}
where we have used the definition of $D_1$.
Note that the one-site projection operators, defined as 
\begin{equation}
    \bra{\bm{a}}P_\ell^{(\bar{a})}\ket{\bm{b}}=\delta_{a_\ell \bar{a}} \prod \delta_{a_j b_j}\, ,\quad \ket{\bm{a}},\ket{\bm{b}} \in \mathcal{H}^L\,,
\end{equation}
are independent of the spectral parameter.  Hence, the $1$-point function (\ref{eq:D1RSOS-12}) is the {local height probability} for finding a spin $a=1$ if $\ket{\Phi} = \left(\bra{\Phi}\right)^\dagger$.
With the same reasoning, one concludes that  
$D_1^{[21]}(\lambda)$ and the reflected matrix elements are equal to (\ref{eq:D1RSOS-12}). Following the same route, we calculate
\begin{align}
   D_1^{[21]}(\lambda)+D_1^{[23]}(\lambda)
    &=\sum_{\alpha_1}D_1^{[2\alpha_1]}(\lambda)
    =\frac{\bra{\Phi}P_1^{(2)}\ket{\Phi}}{\braket{\Phi}{\Phi}}\, .
\end{align}
Given that $D_1^{[21]}(\lambda)$
is constant we find that $D_1^{[23]}(\lambda)$
is also independent of $\lambda$. Repeating this procedure we find that in fact all matrix elements are independent of the spectral parameter and given as sums of the local height probabilities. Generically the latter depend on state $\ket{\Phi}$ and the inhomogeneities. For the critical RSOS models considered here we find, however, that they are functions only of $r$ and the local spin in the topological sectors with quantum dimension $d_q=1$.  Using the known values for the local height probabilities in the thermodynamic ground state of the homogeneous system \cite{AnBF84} we find
\begin{equation}
\label{eq:D1RSOS-thermo}
    D_1^{[a,a+1]}(\lambda)
    = \frac{\sin\left(\pi a/r\right) \sin\left(\pi(a+1)/r\right)}{r\,\cos\left(\pi/r\right)}\,
\end{equation}
for the non-zero elements of the single-site density matrix in these sectors.

\paragraph{The $r=4$ RSOS model.} For the simplest nontrivial case, $r=4$, the height variables take values $1 \leq a\leq 3$, and from the considerations above we immediately get $D_1^{[\alpha\beta]}(\lambda)
={\bra{\Phi}P_1^{(1)}\ket{\Phi}}/{\braket{\Phi}{\Phi}}$ for all states $\ket{\alpha\beta}\in\mathcal{V}^1$.  Using the trace condition with $\dim\mathcal{V}^1=4$ it follows that 
\begin{equation}\label{eq:D1RSOSr4}
D_1(\lambda)
=\frac14\,\mathds{1}\,,
\end{equation}
independent of the choice of inhomogeneities and agreement with Eq.~(\ref{eq:D1RSOS-thermo}).

For the two-site density matrix we consider the auxiliary space $\mathcal{V}^2$ with dimension six and we choose
\begin{equation} \label{eq:RSOSr4_V2basis}
  \{\ket{1 2 1},\ket{1 2 3}, \ket{2 1 2},\ket{2 3 2}, \ket{3 2 1},\ket{3 2 3}\}
 \end{equation}
as a basis. 
Similar to the previous reasonings we find
\begin{align}\label{eq:sum_rule1}
    \bra{212} D_2(\lambda_1,\lambda_2)\ket{212}&=\sum_{\alpha_2}\bra{21\alpha_2}D_2(\lambda_1,\lambda_2)\ket{21\alpha_2}
    =D_1^{[21]}(\lambda_1)=\frac{1}{4}
\end{align}
and also, due to reflection symmetry, $\bra{232} D_2(\lambda_1,\lambda_2) \ket{232}=1/4$.
In addition, we have that
\begin{equation}\label{eq:sum_rule2}
    \sum_{\alpha_2}\bra{12\alpha_2}D_2(\lambda_1,\lambda_2)\ket{12\alpha_2}=D_1^{[12]}(\lambda_1)=\frac{1}{4}
\end{equation}
Hence, we find that the non-zero blocks in $D_2(\lambda_1,\lambda_2)$ are
\begin{equation} \label{eq:D2RSOSr4}
\begin{aligned}
 D_2^{[11]}(\lambda_1,\lambda_2) &= \frac{1}{8} + \frac12 f(\lambda_1,\lambda_2)\,,\quad
 D_2^{[13]}(\lambda_1,\lambda_2) = \frac{1}{8} - \frac12 f(\lambda_1,\lambda_2)\,,\\
 D_2^{[22]}(\lambda_1,\lambda_2) &=\begin{pmatrix}
   \frac{1}{4} & g(\lambda_1,\lambda_2) \\
   g(\lambda_1,\lambda_2) & \frac{1}{4} \\
 \end{pmatrix}\,,
\end{aligned}
\end{equation}
$D_2^{[31]}$ and $D_2^{[33]}$ follow from height reflection $a_i\to r-a_i$. Generically the two functions $f$ and $g$ are independent. Evaluating equation \eqref{eq:func_eq_ybe} we find that $f(u,v)=f(v,u)$ and $g(u,v)=g(v,u)$, i.e.\ the two site density operator for $r=4$ is symmetric under exchange of the arguments.

Taking \eqref{eq:D2RSOSr4} as an ansatz in the functional equation \eqref{eq:func_equation} we obtain $2L$ linear relations for the unknown functions $f$ and $g$ at $\lambda_2\in\{u_1,u_2,\dots ,u_L\}$:
\begin{equation}
\label{eq:func_r4}
\begin{aligned}
  \cos(2 \lambda_{12})\,f(\lambda_1,\lambda_2+\lambda) + \sin(2 \lambda_{12})\, g(\lambda_1,\lambda_2) &= \frac{1}{4}\,,\\
  \cos(2 \lambda_{12})\,g(\lambda_1,\lambda_2+\lambda)+ \sin(2 \lambda_{12})\, f(\lambda_1,\lambda_2) &= \frac{1}{4}\,.
\end{aligned}
\end{equation}
For the actual computation of of the density matrices we note that, as a consequence of (\ref{eq:def_D}) the elements of $D_N(\lambda_1,\dots,\lambda_N)\prod_{k=1}^N\Lambda(\lambda_k)$ are Fourier polynomials in the spectral parameters $\lambda_k$.  We have checked that, for small $N$ and system sizes, the $(L+1)^N$ unknown Fourier coefficients can be determined uniquely for a given transfer matrix eigenvalue $\Lambda(u)$ using the discrete functional equations (\ref{eq:func_r4}) for $N=2$ and similarly (\ref{eq:func_equation}) for general $N$ once $D_{N-1}$ is known (cf.\ the appearance of $D_1$ in the sum rules (\ref{eq:sum_rule1}) and (\ref{eq:sum_rule2}) for $D_2$). 

This procedure is simplified when we consider density operators for  eigenstates in the sectors with quantum dimension $d_q(j)=1$, i.e.\ topological quantum numbers $j\in\{0,1\}$: here we find that $D_2$ is determined by a single function of the spectral parameters $f(\lambda_1,\lambda_2)\equiv g(\lambda_1,\lambda_2)$ such that equations (\ref{eq:func_r4}) for the elements of the two-site density matrix degenerate to a set of $L$ equations.
Another simplification in these sectors is found for spectral parameter $\lambda_2\to i\infty$: in this limit the functions $f$ and $g$ vanish and $D_2(\lambda_1,\lambda_2)$ becomes the single-site density matrix $D_1(\lambda_1)$, written as an operator on $\mathcal{V}^2$ using the basis (\ref{eq:RSOSr4_V2basis}).  In fact, we find that a similar reduction relating $D_N$ for large $\lambda_N$ to $D_{N-1}$ for $N\geq 2$ holds in the topological sectors with quantum dimension $d_q=1$ of the RSOS models where (recall that $D_1$ is independent of the spectral parameter and diagonal):\,\footnote{A similar reduction has been found to be satisfied by the density matrices of the Heisenberg spin chain \cite{Sato.etal11}.}
\begin{equation}\label{eq:reduction}
\begin{aligned}
    \lim_{\lambda_N\to i\infty}  &\left[D_N(\lambda_1,\dots,\lambda_N)\right]^{\alpha_0\dots\alpha_N,\beta_0\dots\beta_N} \\
    &= \left[D_{N-1}(\lambda_1,\dots,\lambda_{N-1})\right]^{\alpha_0\dots\alpha_{N-1},\beta_0\dots\beta_{N-1}} \frac{\left[D_1\right]^{\alpha_{N-1}\alpha_N,\beta_{N-1}\beta_N}}{\sum_\alpha \left[D_1\right]^{\alpha_{N-1}\alpha,\beta_{N-1}\alpha}}\,.
\end{aligned}
\end{equation}
Using (\ref{eq:func_eq_ybe}) one obtains expressions for $D_N$ in the limit of large $\lambda_k$, $k<N$.

Hence, the asymptotics of the $N$-site density matrix is determined by the ($N-1$)-site one, e.g.\ (recall that $f=g$ in these sectors)
\begin{equation}
    \lim_{\lambda_3\to i\infty} D_3(\lambda_1,\lambda_2,\lambda_3) = \frac18\,\mathds{1} + \frac{f(\lambda_1,\lambda_2)}{2} \begin{pmatrix}
  ~1~ & 0 & 0 & 0 & 0 & 0 & 0 & 0\\
  0 & -1 & 0 & 0 & 0 & 0 & 0 & 0\\
  0 & 0 & 0 & ~1~ & 0 & 0 & 0 & 0\\
  0 & 0 & ~1~ & 0 & 0 & 0 & 0 & 0\\
  0 & 0 & 0 & 0 & 0 & ~1~ & 0 & 0 \\
  0 & 0 & 0 & 0 & ~1~ & 0 & 0 & 0 \\
  0 & 0 & 0 & 0 & 0 & 0 & -1 & 0\\
  0 & 0 & 0 & 0 & 0 & 0 & 0 & ~1~
 \end{pmatrix}\,
\end{equation}
for the three-site density matrix of the $r=4$ model in the basis
\begin{equation*}
  \{\ket{1212},\ket{1232}\} \cup 
  \{\ket{2121},\ket{2321}\} \cup
  \{\ket{2123},\ket{2323}\} \cup
  \{\ket{3212},\ket{3232}\}
\end{equation*}
of $\mathcal{V}^3$.

Remarkably, it has been shown that the density matrices of the Heisenberg spin chain can be written as
\begin{equation}
\label{eq:DNfac}
 D_N(\lambda_1,\dots,\lambda_N) = \sum_{m=0}^{[N/2]} \sum_{I,J} \left(\prod_{p=1}^m \omega(\lambda_{i_p},\lambda_{j_p}) \right) f_{N;I,J}(\lambda_1,\dots,\lambda_N)
\end{equation}
in terms of a nearest neighbour two-point function $\omega$ and a set of recursively defined elementary functions $f_{N;I,J}$ of the spectral parameters $\lambda_j$, so-called `structure functions' \cite{BJMST06}.  Here $I=(i_1,\dots,i_m)$ and $J=(j_1,\dots,j_m)$ such that $I \cap J=\emptyset$, $1\leq i_p<j_p\leq N$ and $i_1<\dots <i_m$.  

For the density matrices in eigenstates from the topological sectors with quantum dimension $d_q(j)=1$ ($j\in\{0,1\}$ for the $r=4$ RSOS model) we observe a similar behaviour, e.g.\ for the three-point density matrix: motivated by Eq.~(\ref{eq:DNfac}) we assume that the matrix elements of $D_3(\lambda_1,\lambda_2,\lambda_3)$ can be written as
\begin{equation}
\label{eq:factorization_D3}
\begin{aligned}
   f_0(\lambda_1,\lambda_2,\lambda_3) &+f_{1,2}(\lambda_1,\lambda_2,\lambda_3)\, f(\lambda_1,\lambda_2)\\
   &+f_{2,3}(\lambda_1,\lambda_2,\lambda_3)\, f(\lambda_2,\lambda_3)
       +f_{1,3}(\lambda_1,\lambda_2,\lambda_3)\, f(\lambda_1,\lambda_3) 
\end{aligned}
\end{equation}
where $f_0$ and the $f_{I,J}$ are rational functions of $\mathrm{e}^{2i\lambda_{12}}$ and $\mathrm{e}^{2i\lambda_{23}}$ ($\lambda_{k\ell}\equiv\lambda_k-\lambda_\ell$), and $f(u,v)$ is the single function from (\ref{eq:D2RSOSr4}) which determines the two-site density matrix in these topological sectors.

Most importantly, the model parameters such as the system size $L$ and the inhomogeneities $\{u_k\}$  enter the expressions (\ref{eq:DNfac}) only via the two-point function $\omega$ (or $f$ in (\ref{eq:factorization_D3})).  This fact can be used to implement an efficient algorithm\footnote{A similar method has been used to compute expectation values of local operators for the spin-1/2 Heisenberg chain in a particular basis \cite{FrSm18,Smirnov18,MiSm19}.} for the numerical calculation of $f_0$ and $f_{I,J}$ in the ansatz (\ref{eq:factorization_D3}) for the $3$-site density matrix of the $r=4$ RSOS model (or the structure functions $f_{N;I,J}$ appearing in an ansatz such as (\ref{eq:DNfac}) for elements of the $N$-site density matrix $D_N$):
\begin{enumerate}
\item choose a set of spectral parameters $\Lambda=\{\lambda_1,\dots,\lambda_N\}$,
\item diagonalize the transfer matrix of a sufficiently small system with randomly chosen inhomogeneities,
\item pick an eigenstate of the transfer matrix (from the topological sector considered) and compute the generalized $N$-site density matrix $D_N(\lambda_1,\dots,\lambda_N)$ and the two-site density matrix $D_2(\lambda_j,\lambda_k)$ for pairs $(\lambda_j,\lambda_k)$ from  $\Lambda$ using their definition (\ref{eq:def_D}),
\item compare $D_2$ to (\ref{eq:D2RSOSr4}) to obtain numerical values for the corresponding two-point functions $f(\lambda_j,\lambda_k)$,
\item insert the data from steps 3 and 4 into (\ref{eq:factorization_D3}) (resp. (\ref{eq:DNfac})) to get a linear equation relating the structure functions,
\item repeat steps 2 to 5 to build a system of linear equations which can be solved for the structure functions $f_{N;I,J}(\lambda_1,\dots,\lambda_N)$.
\end{enumerate}
Once these functions are known for a range of spectral parameters it is straightforward to find analytical expressions, e.g.\ by Fourier analysis, which can be checked using (\ref{eq:func_equation}).

A slight complication in the present case of the $r=4$ RSOS model is that the decomposition (\ref{eq:factorization_D3}) is not unique.  Evaluating the diagonal element $\underline{\bm{\alpha}}= \underline{\bm{\beta}}=(1,2,1,2)$ of the functional equation (\ref{eq:func_equation}) for $D_3(x,y,z)$ we find an additional relation satisfied by the two-point function $f$:
\begin{equation}
\label{eq:r4leaf}
    \sin(2\lambda_{12})\,f( \lambda_1,\lambda_2)
  + \sin(2\lambda_{23})\,f(\lambda_2,\lambda_3) 
  - \sin(2\lambda_{13})\,f(\lambda_1,\lambda_3) = 0\,.
\end{equation}
This identity holds for arbitrary values of $\lambda_j$, $j=1,2,3$, as a consequence of $\alpha_2=1$ being a leaf node on the adjacency graph (c.f.\ the remark in Appendix~\ref{ch:app}).

Taking this into account we have used the procedure outlined above to compute the factorized form of the three-point density matrix $D_{N=3}$ of the $r=4$ RSOS model. Remarkably, it turns out to be sufficient to compute the initial data for a system of length $L=2<N$.  Moreover, we find that the structure functions are the same for all eigenstates $\ket{\Phi}$ of the transfer matrix in the topological sectors considered here.  As a result we obtain
\begin{equation}
\label{eq:D3RSOSr4}
    \begin{aligned}
      D_3^{[12]}(\lambda_1,\lambda_2,\lambda_3) = \frac18\, \mathds{1} &+ \frac{1}{2\sin(2\lambda_{23})}
      \begin{pmatrix}
      \sin(2\lambda_{23}) & -1 \\
      1& -\sin(2\lambda_{23})
      \end{pmatrix}f(\lambda_1,\lambda_2)
      \\
      &+ \frac{\cos(2\lambda_{23})}{2\sin(2\lambda_{23})}
      \begin{pmatrix}
      0 & ~1~ \\ - 1 & 0
      \end{pmatrix}{f(\lambda_1,\lambda_3)}
      + 
      \frac12\begin{pmatrix}
      0 & ~1~\\ ~1~ & 0
      \end{pmatrix}f(\lambda_2,\lambda_3)\,,
      \\
      D_3^{[21]}(\lambda_1,\lambda_2,\lambda_3) = \frac18\,\mathds{1} &+ 
      \frac12 \begin{pmatrix} 0 & ~1~ \\ ~1~ & 0 \end{pmatrix}\, f(\lambda_1,\lambda_2)
      + \frac{\cos(2\lambda_{12})}{2\sin(2\lambda_{12})}
      \begin{pmatrix}
      0 & ~1~ \\ - 1 & 0
      \end{pmatrix}{f(\lambda_1,\lambda_3)}\\
      &+ \frac{1}{2\sin(2\lambda_{12})}
      \begin{pmatrix}
      \sin(2\lambda_{12}) & -1 \\
      1& -\sin(2\lambda_{12})
      \end{pmatrix}f(\lambda_2,\lambda_3)\,.
    \end{aligned}
\end{equation}
As before the other non-zero blocks follow from the height reflection symmetry of the density matrix.  These expressions are unique up to transformations based on Eq.~(\ref{eq:r4leaf}).

Again we can consider the homogeneous limit $u_k\equiv0$ for $k=1,\dots,L$ where the expectation values of $N$-point functions can be obtained from the density matrix 
\begin{equation}
    \left.D_N(\lambda_1,\dots,\lambda_N)\right|_{\lambda_k\equiv0}
\end{equation}
according to Eq.~(\ref{eq:inv_prob_elop}).  In this case the one-point function $D_1(0)$ is already given by (\ref{eq:D1RSOSr4}). In addition, $D_2(0,0)$ is completely fixed by the two-point function $f(0,0)$. 

For the computation of $D_3(0,0,0)$ the singularities for $\lambda_1=\lambda_2$ and $\lambda_2=\lambda_3$ in Eq. (\ref{eq:D3RSOSr4}) has to be taken care of.  We expand the two-point function as
\begin{equation}
\label{eq:taylor2p}
    f(\lambda_1,\lambda_2)\simeq (0,0)+(1,0)\lambda_1+(0,1)\lambda_2+\frac{1}{2}\left((2,0)\lambda_1^2+ 2\,(1,1) \lambda_1 \lambda_2+ (0,2)\lambda_2^2\right)+\dots
\end{equation}
with $(k,\ell)\equiv \partial_1^k\, \partial_2^\ell\,f(\lambda_1,\lambda_2)|_{\lambda_1=\lambda_2=0}$. Note that $(k,\ell)=(\ell,k)$ due to the symmetry of $f(\lambda_1,\lambda_2)$. Additional relations between the coefficients of the $r=4$ two-point function follow from the identity (\ref{eq:r4leaf}), e.g.\ $(1,0)=0$ and $(2,0)-2(1,1) =4(0,0)$.  As a result the singularities are removed and the homogeneous limit of $D_3$ is found to be
\begin{equation}
\label{eq:D3RSOSr4_homo}
 \left.D_3(\lambda_1,\lambda_2,\lambda_3)\right|_{\lambda_k\equiv0} = \frac18\,\mathds{1} + \frac{f(0,0)}{2} \begin{pmatrix}
  ~1~ & 1 & 0 & 0 & 0 & 0 & 0 & 0\\
  1 & -1 & 0 & 0 & 0 & 0 & 0 & 0\\
  0 & 0 & -1 & ~1~ & 0 & 0 & 0 & 0\\
  0 & 0 & ~1~ & 1 & 0 & 0 & 0 & 0\\
  0 & 0 & 0 & 0 & 1 & ~1~ & 0 & 0 \\
  0 & 0 & 0 & 0 & ~1~ & -1 & 0 & 0 \\
  0 & 0 & 0 & 0 & 0 & 0 & -1 & 1\\
  0 & 0 & 0 & 0 & 0 & 0 & 1 & ~1~
 \end{pmatrix}\,.
\end{equation}
As a consequence the two- and three-point correlations in a transfer matrix eigenstate $\ket{\Phi}$ from the topological sectors with $d_q=1$ of the homogeneous $r=4$ RSOS model are given in terms of $(0,0)$, i.e.\ the numerical value of the two-point function at spectral parameters $\lambda_1=\lambda_2=0$, alone.  The latter is directly related to the corresponding eigenvalue of the RSOS hamiltonian $H=J\,\partial_u \ln t(u)|_{u=0}$, i.e.\
\begin{equation}
    E_\Phi = 4 J\, L \, f(0,0) 
\end{equation}
Hence, explicit expressions for two- and three-point functions in the ground states of the infinite system can be obtained from Eqs.~(\ref{eq:D2RSOSr4}) and (\ref{eq:D3RSOSr4_homo}) using the known results for the energy density of the RSOS model in the thermodynamic limit \cite{BaRe89}. We find
\begin{equation}
    f(0,0)=\pm\frac{1}{2\pi}
\end{equation}
for the ground state of the RSOS hamiltonian with $J=-1$ ($+1$).
\paragraph{The $r=5$ RSOS model.} 
As a second example we consider the $r=5$ RSOS model with local heights $1\leq a\leq 4$. Similarly as above, we can compute the single site density matrix.  In states from the sector with topological quantum number $j=0$ (recall that the topological sectors in the odd $r$ RSOS models are labelled by integers $0\leq j\leq(r-3)/2=1$) we find that the matrix elements are independent of the system size and the inhomogeneities $\{u_k\}$, namely
\begin{equation} \label{D1RSOSr5}
    D_1^{[\alpha\beta]}(\lambda) =
    \begin{cases}
      1/(5+\sqrt{5}) & \text{for~} (\alpha\beta)\in\{(12),(21),(34),(43)\}\,,\\
      \sqrt{5}/10 & \text{for~} (\alpha\beta)\in\{(23),(32)\}\,,
    \end{cases}
\end{equation}
as given by Eq.~(\ref{eq:D1RSOS-thermo}).

The auxiliary space $\mathcal{V}^2$ for the two-site density matrix of the $r=5$ model has dimension ten.  Due to the adjacency condition the Hilbert space of states splits into two spanned by fusion paths with $a_0$ even and odd, respectively.  The transfer matrix is a map between these two subspaces.  Similarly, products of an even number of transfer matrices (or more general single row operators) are therefore block diagonal and may be written as the sum of an even and and odd part. In view of this decomposition of the Hilbert space we chose the following basis for $\mathcal{V}^2$:
\begin{equation}
    \{\ket{121}, \ket{123}, \ket{321},
      \ket{323}, \ket{343} \} \cup \{
      \ket{212}, \ket{232}, \ket{234},
      \ket{432}, \ket{434}\}.
\end{equation}
The two sets are related by reflection $a_i\to r-a_i$ and hence we may restrict ourselves to the subspace generated by the first. Again the structure of the density operator $D_2(\lambda_1,\lambda_2)$ is constrained by sum rules such as (\ref{eq:sum_rule1}) and (\ref{eq:sum_rule2}). We find the non-zero blocks of $D_2$ in the first subspace with odd $a_0$ to be ($b$ is a constant)
\begin{equation}
\label{eq:D2RSOSr5}
\begin{aligned}
D_2^{\text{[1,1]}}(\lambda_1,\lambda_2) &= -\frac{1}{2}+4 D_1^{[21]}+f(\lambda_1,\lambda_2)\,,\\ 
D_2^{\text{[1,3]}}(\lambda_1,\lambda_2) &= D_2^{\text{[3,1]}}(\lambda_1,\lambda_2) = e(\lambda_1,\lambda_2)\,,\\
D_2^{\text{[3,3]}}(\lambda_1,\lambda_2) &= \begin{pmatrix}
d(\lambda_1,\lambda_2) & c_1(\lambda_1,\lambda_2)\\ c_2(\lambda_1,\lambda_2) & b \end{pmatrix}\,.
\end{aligned}
\end{equation}
The sum rules immediately imply 
\begin{equation}\label{eq:simp1}
    e(\lambda_1,\lambda_2)=  \frac{1}{2}-3 D_1^{[21]}-f(\lambda_1,\lambda_2)\,,
    \quad
    b = D_1^{[21]}\,.
\end{equation}
Furthermore the trace condition $\mathrm{tr}_{\mathcal{V}^2} D_2(\lambda_1,\lambda_2)=1$ yields 
\begin{equation}
    d(\lambda_1,\lambda_2)=f(\lambda_1,\lambda_2)+D_1^{[21]}\, .
\end{equation}
Using the relations (\ref{eq:func_eq_ybe}) find that the off-diagonal functions are related via $c_1(\lambda_1,\lambda_2)=c_2(\lambda_2,\lambda_1)\equiv ({\sqrt5 +2})^{\frac12}\,g(\lambda_1,\lambda_2)$ and
%
\begin{equation}\label{eq:a_by_c}
    f(\lambda_1,\lambda_2)= \frac{g(\lambda_1,\lambda_2)+g(\lambda_2,\lambda_1)}{2} + \left({5+2\sqrt5}\right)^\frac12\,
    \cot\lambda_{12}\,\frac{g(\lambda_1,\lambda_2)-g(\lambda_2,\lambda_1)}{2}\,.
\end{equation}

We find further simplifications for eigenstates of the transfer matrix belonging to the $j=0$ topological sector (where $b=1/(5+\sqrt{5})$ according to Eq.~(\ref{D1RSOSr5})): in this sector the off-diagonal elements of $D_2$ coincide, i.e.\ $g(\lambda_1,\lambda_2) = g(\lambda_2,\lambda_1)$, 
and therefore
    $g(\lambda_1,\lambda_2)= f(\lambda_1,\lambda_2)$\,.
As a consequence $D_2$ can again be expressed in terms of a single scalar function $f(\lambda_1,\lambda_2)$ satisfying the functional equation
\begin{equation}\label{eq:func_r5}
\begin{aligned}
f(\lambda_1,\lambda_2+\lambda)
 =&\frac{1}{5+3\sqrt{5} \cos(2\lambda_{12})} +
 \frac{\cos(2(\lambda_{12}-\lambda)) - \cos(2\lambda)}{\cos(2\lambda_{12}) - \cos(2\lambda)}\,f(\lambda_1,\lambda_2)\,
\end{aligned}
\end{equation} 
for $\lambda_2\in\{u_1.\dots,u_L\}$.
As in the $r=4$ RSOS model the density matrices $D_N$ can be computed recursively for any given transfer matrix eigenvalue using their analytical properties and the functional equations.
In particular, we find that the asymptotic behaviour of the $N$-site density matrices is related to the $N-1$-site one as given by (\ref{eq:reduction}), e.g.\
\begin{equation}
  \lim_{\lambda_2\to i\infty} D_2^{\text{(odd,odd)}}(\lambda_1,\lambda_2) = 
  \frac{1}{2(\sqrt{5}+5)}\,\left(
\begin{array}{ccccc}
 3-\sqrt{5} & 0 & 0 & 0 & 0  \\
 0 & \sqrt{5}-1 & 0 & 0 & 0\\
 0 & 0 & \sqrt{5}-1 & 0 & 0\\
 0 & 0 & 0 & ~2~ & 0\\
 0 & 0 & 0 & 0 & ~2~
\end{array}\right)\,
\end{equation}
in the topological sector with quantum dimension $d_q=1$, i.e.\ $j=0$ for the $r=5$ RSOS model.

Similar as in (\ref{eq:D3RSOSr4}) for $r=4$ we have been able to express the three-point density matrix of the $r=5$ RSOS model in this topological sector as a sum of terms factorizing into spectral-parameter dependent elementary functions and the two-point function $f(\lambda_1,\lambda_2)$ solving the functional equation (\ref{eq:func_r5}).  
Proceeding as for $r=4$ we find the factorization of the $D_3$ in the one-dimensional block corresponding to the sequence $\underline{\bm{\alpha}}=(1234)$ of heights to be
\begin{equation}
\begin{aligned}
    D_3^{[14]}(\lambda_1,\lambda_2,\lambda_3) =
    \frac{7}{4 \sqrt{5}}-\frac{3}{4}
    &- \frac14\left(\sqrt{5}+1 +  (3 \sqrt{5}-5)\,\cot\lambda_{13}\,\cot\lambda_{23}\right)  f(\lambda_1,\lambda_2)\,
    \\
    &- \frac14\left(\sqrt{5}+1 +(3 \sqrt{5}-5)\,\cot\lambda_{12}\,\cot\lambda_{13}\right) f(\lambda_2,\lambda_3)
    \\
    &- \frac14\left(\sqrt{5}+1-(3 \sqrt{5}-5)\,\cot\lambda_{12}\,\cot\lambda_{23}\right) f(\lambda_1,\lambda_3)\,.
\end{aligned}
\end{equation}
We present the complete list of non-zero matrix elements of $D_3$ in Appendix~\ref{ch:appD3r5}. 

Now it is straightforward, to calculate $D_3$ in the homogeneous limit. Expanding the two-point function as in Eq.~(\ref{eq:taylor2p}) for the case $r=4$ we find for the one-dimensional block considered above
\begin{equation}
    D_3^{[14]}(0,0,0)=\frac{7}{4 \sqrt{5}}-\frac{3}{4}-2\,(0,0)
    +\frac{1}{8} \left(3 \sqrt{5}-5\right) \left(2\,(1,1)-(2,0)\right)\,.
\end{equation}
All other matrix elements may be computed using Appendix \ref{ch:appD3r5}.
\subsection{Fibonacci anyons}
\label{sec:fibonacci}
As discussed earlier we can relate face models to one-dimensional quantum chains with anyonic degrees of freedom on each lattice site.
Considering the Hamiltonian limit of the homogeneous RSOS model with $r=5$, i.e.\ $u_i\equiv0$ in (\ref{eq:def_monodromy}), one obtains an integrable model of $su(2)_3$ or Fibonacci anyons \cite{FeTr2007,Tre08}. Despite its simplicity this non-Abelian anyon model gives rise to universal quantum computation. It contains only two types of anyons, the trivial anyon $1$ and a second one, $\tau$.  Here we will use the functional equation (\ref{eq:func_equation}) to compute the two-site density matrix for the chain of $\tau$-anyons.

The Hilbert space of fusion paths for these anyons can be shown to be isomorphic to the $a_0$ odd part of the RSOS Hilbert space $\mathcal{H}^L_{\text{per}}$ for $r=5$. A Hamiltonian for a chain of $L$ $\tau$-anyons with local interaction can be constructed using the operators $P^{(\tau\tau\to1)}=\mathds{1}-P^{(\tau\tau\to\tau)}$ projecting on one of the outcomes of fusing neighbouring anyons according to the rule $\tau\otimes\tau=1\oplus\tau$.  In Appendix~\ref{ch:app2} we show that these operators can be expressed in terms of the Boltzmann weights of the RSOS model (\ref{eq:rsos_weights}).  This allows for an embedding of the anyon Hamiltonian
\begin{equation}
\label{eq:hfibo}
    H=J\sum_{n=0}^{L-1} P_n^{(\tau\tau\to1)}
\end{equation}
into the family of commuting operators generated by the transfer matrix of the RSOS transfer matrix (\ref{eq:def_transf}).  By choosing a negative (positive) coupling constant $J$ fusion of two neighbouring anyons to a trivial ($\tau$) anyon is energetically favoured.

We will now use the inverse problem and relate the energies of the anyon model to the density operator of the homogeneous model. Note that in this case $t(0)$ is the translation operator with eigenvalues $\Lambda(0)=\exp(2 \pi i k/L)$ for some integer $k$.  Furthermore, we have $\rho(0)=-1$ for the RSOS model.
Assuming that the eigenstates of the transfer matrix are normalized, $\braket{\Phi}{\Phi}=1$, the two-point function (\ref{eq:inv_prob_elop}) is
\begin{equation}
    \label{eq:inv_prob_hom}
    \bra{\Phi} E^{\alpha_0\alpha_1\alpha_2}_{\beta_0\beta_1\beta_2}\ket{\Phi}
    =D_2(0,0)^{\{\underline{\bm{\alpha}}\}\{\underline{\bm{\beta}}\}} \,.
\end{equation}
Translation invariance, i.e $[H,t(0)]=0$, implies that the energy is $E_\Phi=L\,J\,\bra{\Phi}P^{(\tau\tau\to1)}_1\ket{\Phi}$ for any eigenstate $\ket{\Phi}$. As $P^{(\tau\tau\to1)}_1$ depends only on the first three heights of the chain, we can directly use \eqref{eq:inv_prob_hom} and obtain
\begin{equation}\label{eq:energy_inv_prob}
    {E_\Phi}=L\,J\,\mathrm{tr}_{\mathcal{V}^2}\left(P^{(\tau\tau\to1)}_1 D_2(0,0)\right)\,.
\end{equation}
relating the energy density of the anyon chain to certain correlators of the RSOS model.

Plugging in the the explicit expression of $P^{(\tau\tau\to1)}_1$ into \eqref{eq:energy_inv_prob} and using the simplified form of the two-site density matrix (\ref{eq:D2RSOSr5}) for eigenstates $\ket{\Phi}$ in the topological sector $j=0$ of the $r=5$ RSOS model\,\footnote{This condition holds in particular for the ground states of the antiferromagnetic anyon chain ($J>0$) with $L\mod 3=0$
and the ferromagnetic model ($J<0$).} we finally obtain
\begin{equation}
    \frac{E_\Phi}{L}=\frac{J}{2} \left(\sqrt{5}+5\right)\, (0,0)+\frac{J}{2} \left(3-\sqrt{5}\right)\,.
\end{equation}
As for the $r=4$ RSOS model the ground state energies of the anyon model in the thermodynamic limit are known \cite{BaRe89} giving 
\begin{equation}
    (0,0) = \begin{cases}
        -2 + \sqrt{5} + \frac13 \sqrt{\frac56 (25 - 11 \sqrt{5})} & \text{~for~}J>0\\
        1-2/\sqrt{5} & \text{~for~}J<0
        \end{cases}
\end{equation}
for the corresponding two-point functions $f(0,0)$.
Finally, we show how our results can be used for the computation of $3$-point functions. Therefore, we consider the operator $P^{(\tau\tau\tau \to 1)}$ which projects the fusion product of three consecutive $\tau$-anyons to an anyon of type $1$.
Using the homogeneous limit of $D_3$ (again for eigenstates $\ket{\Phi}$ in the topological sector $j=0$ of the $r=5$ RSOS model) as discussed in the previous section and (\ref{eq:inv_prob_elop}) we find
\begin{equation}\label{eq:3_point_func}
\begin{aligned}
    \bra{\Phi} P_1^{(\tau\tau\tau \to 1)}\ket{\Phi}=\sqrt{5}-2&-2 \left(\sqrt{5}+5\right) (0,0)
    -\frac54\left(\sqrt5-1\right) \left(2\,(1,1)-(2,0)\right)\,.
\end{aligned}
\end{equation}

\section{Conclusion}
We have studied correlation functions for generic models with interactions-round-a-face on a square lattice (and the related anyonic quantum chains). To make use of the Yang-Baxter integrability local operators have been expressed in terms of (generalized) transfer matrices for inhomogeneous versions of these models, see Eqs.~(\ref{eq:invprob}) and (\ref{eq:N-point}).  This allowed to encode correlation functions in $N$-point reduced density matrices $D_N$, Eq.~(\ref{eq:def_D}), depending on a set of $N$ spectral parameters.  We have constructed a set of discrete functional equations of reduced quantum Knizhnik-Zamolodchikov type (\ref{eq:func_equation}) which determine the functional dependence of $D_N$ on these spectral parameters. 

This framework has been applied to several critical solid-on-solid models: in the simple 'reference' state (\ref{eq:csos-ref}) of the $r=3$ CSOS model we have obtained explicit expressions for the generalized reduced density matrices on up to three neighbouring sites in terms of the one-point function depending on the spectral parameter and the choice of inhomogeneities.
By contrast, the one-point functions in the RSOS models are independent of the spectral parameter.  For the $r=4$ and $5$ RSOS models we have been able to express the two-site density matrices in terms of two unknown functions similar to the spin-$1/2$ Heisenberg chains \cite{JiMS09_HiddenGrassmann_iii,AuKl12}.  We find that these functions (and, together with the application of the sum rules, the elements of the $N>2$-site density matrices) are uniquely determined by the discrete functional equations for any transfer matrix eigenstate, given the analytical properties inherited from the Boltzmann weights.

Additional properties of the density matrices are found in topological sectors with quantum dimension $d_q=1$ (containing the ground states of the RSOS model): here we have observed a reduction relating $D_N$ to $D_{N-1}$ when one of the spectral parameters is sent to infinity. This resembles the asymptotic condition on the density matrices complementing the discrete qKZ equation for the Heisenberg spin chain guaranteeing the uniquess of its solution \cite{Sato.etal11,AuKl12}. 
Moreover, we have found that the reduced density operators of the $r=4$ and $5$ RSOS models in these topological sectors can be expressed in terms of a single function determining the nearest-neighbour two-site correlations in the particular transfer matrix eigenstate together with a set of elementary structure functions.  This observation and preliminary results for $r>5$ lead us to conjecture that this holds for all RSOS models in these topological sectors.

For the remaining tasks of calculation of the nearest-neighbour function (the \emph{physical} part of the correlation functions) and the structure functions (the \emph{algebraic} part) in the topological sectors with $d_q=1$ we have used an ansatz motivated by the 'factorized' form of density matrices for the spin-$1/2$ Heisenberg chain \cite{BJMST06} together with the observation that the density operator depends on the particular realization of the model, i.e.\ the system size and choice of inhomogeneities, only via the two-site function.  This allows for an efficient computation of the structure functions.  We have applied this algorithm to reduce the calculation of the density operators on three neighbouring sites for the $r=4$ and $r=5$ RSOS models and related correlation functions for the Fibonacci anyon chain to that of the physical part.
The latter solves discrete difference equations (\ref{eq:func_r4}) and (\ref{eq:func_r5}) resulting from the functional equation for the two-site density matrix.  Explicit expressions for the two-site functions solving these equations are limited to RSOS models of sufficiently small length or a special state as for the CSOS model.  In the fermionic basis approach for the spin-$1/2$ Heisenberg model the physical part of the correlation functions at both finite length and finite temperature has been characterized in terms of integral formulae and difference equations \cite{JiMS09_HiddenGrassmann_iii} or in terms of solutions to linear and non-linear integral equations \cite{BoGo09}.  For face models such representation of the two-site function is not known.  As a first step, one might consider the zero temperature ground state of these models in the thermodynamic limit: choosing a continuous distribution of the inhomogeneities the functional equations (\ref{eq:func_equation}) are expected to hold for arbitrary complex values of the spectral parameter $\lambda_N$ allowing for their solution.

To conclude let us mention just two open problems which may be addressed based on the approach presented here:  first of all, in the context of RSOS models a comparison of the density matrices with the corresponding quantities for the related anisotropic Heisenberg chains at roots of unity can provide insights on the boundary contributions to correlation functions resulting from the peculiar fusion path nature of the RSOS Hilbert space.  Secondly we want to emphasize that the discrete functional equations (\ref{eq:func_equation}) for the density operators hold for generic integrable IRF models (such equations are also known for vertex models and spin chains related to quantum groups \cite{KlNR20}).  Together with the algorithm used here for the computation of the structure functions in factorized expressions (\ref{eq:DNfac}) and (\ref{eq:factorization_D3}) this may well allow to shed some light on the question whether the factorization of correlation functions is a general property of integrable models which extends beyond RSOS models and spin-$1/2$ chains.

\section*{Acknowledgements}
We would like to thank H.~Boos, F.~G\"ohmann, and A.~Kl\"umper for stimulating discussions.
%
This work is part of the programme of the research unit \emph{Correlations in Integrable Quantum Many-Body Systems} (FOR\,2316). Partial Funding has been provided by the Deutsche Forschungsgemeinschaft under grant no.\ FR~737/8.

\appendix
\section{Proof of Theorem 1}\label{ch:appinv}
Here we provide the proof of (\ref{eq:invprob}) for arbitrary values of $L$ and $1\leq n<L$. Let us first look at its graphical representation (double arrows indicating periodical boundary conditions in the horizontal direction): for $\ket{\bm{a}},\ket{\bm{b}}\in\mathcal{H}^L_\mathrm{per}$ we consider the matrix element
\begin{equation}
\label{eq:invprbL}
\begin{aligned}
    \bra{\bm{a}}\left(\prod_{k=1}^{n-1} t(u_k)\right)&T_{\alpha\beta}(u_n) T^{\alpha\beta}(u_{n+1}) \left(\prod_{k=n+2}^L t(u_k)\right) \ket{\bm{b}}= \\ 
    &= \begin{tikzpicture}[baseline=(current bounding box.center)]
      \def \w {8}
      \def \scal {1}
      \foreach \x in {0,1,2,3,5,6,7,8}{
      \draw [>>->>] (-\w/30,-\x*\w/5)--(\w+\w/30,-\x*\w/5);}
      \draw (0,-4*\w/5)--(\w,-4*\w/5);
      \foreach \x in{0,1,2,3,4,5}{
      \draw (\x*\w/5,0)--(\x*\w/5,-8*\w/5);}
      \foreach \x in {1,2,3,5,6,7}{
      \foreach \y in {0,1,2,3,4,5}{
      \node[draw,circle,inner sep=1pt,fill] at (\y*\w/5,-\x*\w/5) {};}}
      \foreach \y in {1,2,3,4}{
      \node[draw,circle,inner sep=1pt,fill] at (\y*\w/5,-4*\w/5) {};}
      \node[scale=\scal] at (0.5*\w/5,-0.5*\w/5) {$0$};
      \node[scale=\scal] at (3*0.5*\w/5,-0.5*\w/5) {$u_{1,2}$};
      \node[scale=\scal] at (5*0.5*\w/5,-0.5*\w/5) {\dots};
      \node[scale=\scal] at (7*0.5*\w/5,-0.5*\w/5) {$u_{1,L-1}$};
      \node[scale=\scal] at (9*0.5*\w/5,-0.5*\w/5) {$u_{1,L}$};
      \node[scale=\scal] at (0.5*\w/5,-3*0.5*\w/5) {$u_{2,1}$};
      \node[scale=\scal] at (3*0.5*\w/5,-3*0.5*\w/5) {$0$};
      \node[scale=\scal] at (5*0.5*\w/5,-3*0.5*\w/5) {\dots};
      \node[scale=\scal] at (7*0.5*\w/5,-3*0.5*\w/5) {$u_{2,L-1}$};
      \node[scale=\scal] at (9*0.5*\w/5,-3*0.5*\w/5) {$u_{2,L}$};
      \node[scale=\scal] at (0.5*\w/5,-5*0.5*\w/5) {\vdots};
      \node[scale=\scal] at (3*0.5*\w/5,-5*0.5*\w/5) {\vdots};
      \node[scale=\scal] at (5*0.5*\w/5,-5*0.5*\w/5) {\vdots};
      \node[scale=\scal] at (7*0.5*\w/5,-5*0.5*\w/5) {\vdots};
      \node[scale=\scal] at (9*0.5*\w/5,-5*0.5*\w/5) {\vdots};
      \node[scale=\scal] at (0.5*\w/5,-7*0.5*\w/5) {$u_{n,1}$};
      \node[scale=\scal] at (3*0.5*\w/5,-7*0.5*\w/5) {$u_{n,2}$};
      \node[scale=\scal] at (5*0.5*\w/5,-7*0.5*\w/5) {\dots};
      \node[scale=\scal] at (7*0.5*\w/5,-7*0.5*\w/5) {$u_{n,L-1}$};
      \node[scale=\scal] at (9*0.5*\w/5,-7*0.5*\w/5) {$u_{n,L}$};
      \node[scale=\scal] at (0.5*\w/5,-9*0.5*\w/5) {$u_{n+1,1}$};
      \node[scale=\scal] at (3*0.5*\w/5,-9*0.5*\w/5) {$u_{n+1,2}$};
      \node[scale=\scal] at (5*0.5*\w/5,-9*0.5*\w/5) {\dots};
      \node[scale=\scal] at (7*0.5*\w/5,-9*0.5*\w/5) {$u_{n+1,L-1}$};
      \node[scale=\scal] at (9*0.5*\w/5,-9*0.5*\w/5) {$u_{n+1,L}$};
      \node[scale=\scal] at (0.5*\w/5,-11*0.5*\w/5) {\vdots};
      \node[scale=\scal] at (3*0.5*\w/5,-11*0.5*\w/5) {\vdots};
      \node[scale=\scal] at (5*0.5*\w/5,-11*0.5*\w/5) {\vdots};
      \node[scale=\scal] at (7*0.5*\w/5,-11*0.5*\w/5) {\vdots};
      \node[scale=\scal] at (9*0.5*\w/5,-11*0.5*\w/5) {\vdots};
      \node[scale=\scal] at (0.5*\w/5,-13*0.5*\w/5) {$u_{L-1,1}$};
      \node[scale=\scal] at (3*0.5*\w/5,-13*0.5*\w/5) {$u_{L-1,2}$};
      \node[scale=\scal] at (5*0.5*\w/5,-13*0.5*\w/5) {\dots};
      \node[scale=\scal] at (7*0.5*\w/5,-13*0.5*\w/5) {$0$};
      \node[scale=\scal] at (9*0.5*\w/5,-13*0.5*\w/5) {$u_{L-1,L}$};
      \node[scale=\scal] at (0.5*\w/5,-15*0.5*\w/5) {$u_{L,1}$};
      \node[scale=\scal] at (3*0.5*\w/5,-15*0.5*\w/5) {$u_{L,2}$};
      \node[scale=\scal] at (5*0.5*\w/5,-15*0.5*\w/5) {\dots};
      \node[scale=\scal] at (7*0.5*\w/5,-15*0.5*\w/5) {$u_{L,L-1}$};
      \node[scale=\scal] at (9*0.5*\w/5,-15*0.5*\w/5) {$0$};
      \node[above] at (0,0) {$a_0$};
      \node[above] at (1*\w/5,0) {$a_1$};
      \node[above] at (2*\w/5,0) {$a_2$};
      \node[above] at (3*\w/5,0) {$a_{L-2}$};
      \node[above] at (4*\w/5,0) {$a_{L-1}$};
      \node[above] at (5*\w/5,0) {$a_{L}$};
      \node[below] at (0,-8*\w/5) {$b_0$};
      \node[below] at (1*\w/5,-8*\w/5) {$b_1$};
      \node[below] at (2*\w/5,-8*\w/5) {$b_2$};
      \node[below] at (3*\w/5,-8*\w/5) {$b_{L-2}$};
      \node[below] at (4*\w/5,-8*\w/5) {$b_{L-1}$};
      \node[below] at (5*\w/5,-8*\w/5) {$b_{L}$};
      \node[left] at (0,-4*\w/5) {$\alpha$};
      \node[right] at (\w,-4*\w/5) {$\beta$};
    \end{tikzpicture}    
\end{aligned}\,.
\end{equation}
After using the initial condition (\ref{eq:initface}) in each row the Boltzmann weights can be turned into Kronecker $\delta$'s by repeated use of unitarity (\ref{eq:uniface}). To understand the principle we have a closer look at the four rows $n-1$, $n$, $n+1$ and $n+2$, i.e.\
\begin{equation*}
\begin{aligned}
    \dotsb t(u_{n-1}) T_{\alpha\beta}(u_n)& T^{\alpha\beta}(u_{n+1}) t(u_{n+2})\dotsb = \\ 
    &= \begin{tikzpicture}[baseline=(current bounding box.center)]
    \def \w {7.5}
    \def \scal {0.85}
    \fill [gray!20] (2*\w/6,0)--(3*\w/6,0)--(3*\w/6,-\w/6)--(2*\w/6,-2*\w/6)--(\w/6,-2*\w/6)--(\w/6,-1*\w/6);
    \fill [gray!30] (3*\w/6,-\w/6)--(4*\w/6,-\w/6)--(4*\w/6,-2*\w/6)--(3*\w/6,-3*\w/6)--(2*\w/6,-3*\w/6)--(2*\w/6,-2*\w/6);
    \fill [gray!20] (4*\w/6,-2*\w/6)--(5*\w/6,-2*\w/6)--(5*\w/6,-3*\w/6)--(4*\w/6,-4*\w/6)--(3*\w/6,-4*\w/6)--(3*\w/6,-3*\w/6);
    \foreach \x in {0,1,2,3,4,5,6}{
    \draw (\x*\w/6,0)--(\x*\w/6,-4*\w/6);}
    \foreach \x in {0,1,3,4}{
    \draw [>>->>] (-\w/30,-\x*\w/6)--(\w+\w/30,-\x*\w/6);}
    \draw (0,-2*\w/6)--(\w,-2*\w/6);
    \draw [dotted,thick] (1*\w/6,-1*\w/6)--(2*\w/6,0*\w/6);
    \draw [dotted,thick] (2*\w/6,-2*\w/6)--(3*\w/6,-1*\w/6);
    \draw [dotted,thick] (3*\w/6,-3*\w/6)--(4*\w/6,-2*\w/6);
    \draw [dotted,thick] (4*\w/6,-4*\w/6)--(5*\w/6,-3*\w/6);
    \node[scale=\scal] at (1*0.5*\w/6,-1*0.5*\w/6) {\dots};
    \node[scale=\scal] at (5*0.5*\w/6,-1*0.5*\w/6) {$u_{n-1,n}$};
    \node[scale=\scal] at (7*0.5*\w/6,-1*0.5*\w/6) {$u_{n-1,n+1}$};
    \node[scale=\scal] at (9*0.5*\w/6,-1*0.5*\w/6) {$u_{n-1,n+2}$};
    \node[scale=\scal] at (11*0.5*\w/6,-1*0.5*\w/6) {\dots};
    \node[scale=\scal] at (1*0.5*\w/6,-3*0.5*\w/6) {\dots};
    \node[scale=\scal] at (3*0.5*\w/6,-3*0.5*\w/6) {$u_{n,n-1}$};
    \node[scale=\scal] at (7*0.5*\w/6,-3*0.5*\w/6) {$u_{n,n+1}$};
    \node[scale=\scal] at (9*0.5*\w/6,-3*0.5*\w/6) {$u_{n,n+2}$};
    \node[scale=\scal] at (11*0.5*\w/6,-3*0.5*\w/6) {\dots};
    \node[scale=\scal] at (1*0.5*\w/6,-5*0.5*\w/6) {\dots};
    \node[scale=\scal] at (3*0.5*\w/6,-5*0.5*\w/6) {$u_{n+1,n-1}$};
    \node[scale=\scal] at (5*0.5*\w/6,-5*0.5*\w/6) {$u_{n+1,n}$};
    \node[scale=\scal] at (9*0.5*\w/6,-5*0.5*\w/6) {$u_{n+1,n+2}$};
    \node[scale=\scal] at (11*0.5*\w/6,-5*0.5*\w/6) {\dots};
    \node[scale=\scal] at (1*0.5*\w/6,-7*0.5*\w/6) {\dots};
    \node[scale=\scal] at (3*0.5*\w/6,-7*0.5*\w/6) {$u_{n+2,n-1}$};
    \node[scale=\scal] at (5*0.5*\w/6,-7*0.5*\w/6) {$u_{n+2,n}$};
    \node[scale=\scal] at (7*0.5*\w/6,-7*0.5*\w/6) {$u_{n+2,n+1}$};
    \node[scale=\scal] at (11*0.5*\w/6,-7*0.5*\w/6) {\dots};
    \node[left] at (0,-2*\w/6) {$\alpha$};
    \node[right] at (\w,-2*\w/6) {$\beta$};
    \foreach \x in {0,1,2,3,4,5,6}{
    \node[draw,circle,inner sep=1pt,fill] at (\x*\w/6,0) {};}
    \foreach \x in {0,1,2,3,4,5,6}{
    \node[draw,circle,inner sep=1pt,fill] at (\x*\w/6,-1*\w/6) {};}
    \foreach \x in {1,2,3,4,5}{
    \node[draw,circle,inner sep=1pt,fill] at (\x*\w/6,-2*\w/6) {};}
    \foreach \x in {0,1,2,3,4,5,6}{
    \node[draw,circle,inner sep=1pt,fill] at (\x*\w/6,-3*\w/6) {};}
    \foreach \x in {0,1,2,3,4,5,6}{
    \node[draw,circle,inner sep=1pt,fill] at (\x*\w/6,-4*\w/6) {};}
    \end{tikzpicture}\\
    &=\rho_{n-1,n}\rho_{n,n+1}\rho_{n+1,n+2}\times\\
    &\times\begin{tikzpicture}[baseline=(current bounding box.center)]
        \def \w {7.5}
    \def \scal {0.85}
    \fill [gray!20] (3*\w/6,0)--(4*\w/6,0)--(4*\w/6,-\w/6)--(2*\w/6,-3*\w/6)--(\w/6,-3*\w/6)--(\w/6,-2*\w/6);
    \fill [gray!30] (4*\w/6,-\w/6)--(5*\w/6,-\w/6)--(5*\w/6,-2*\w/6)--(3*\w/6,-4*\w/6)--(2*\w/6,-4*\w/6)--(2*\w/6,-3*\w/6);
    \foreach \x in {0,1,5,6}{
    \draw (\x*\w/6,0)--(\x*\w/6,-4*\w/6);}
    \foreach \x in {0,1,3,4}{
    \draw [>>-] (-\w/30,-\x*\w/6)--(\w/6,-\x*\w/6);
    \draw [->>] (5*\w/6,-\x*\w/6)--(\w+\w/30,-\x*\w/6);}
    \draw (0,-2*\w/6)--(\w/6,-2*\w/6);
    \draw (5*\w/6,-2*\w/6)--(\w,-2*\w/6);
    \draw (0,0)--(\w,0);
    \draw (0,-4*\w/6)--(\w,-4*\w/6);
    \draw (3*\w/6,0)--(3*\w/6,-\w/6)--(5*\w/6,-\w/6);
    \draw (4*\w/6,0)--(4*\w/6,-2*\w/6)--(5*\w/6,-2*\w/6);
    \draw (\w/6,-2*\w/6)--(2*\w/6,-2*\w/6)--(2*\w/6,-3*\w/6)--(\w/6,-3*\w/6);
    \draw (2*\w/6,-3*\w/6)--(3*\w/6,-3*\w/6)--(3*\w/6,-4*\w/6)--(2*\w/6,-4*\w/6)--(2*\w/6,-3*\w/6);
    \draw [dotted,thick] (1*\w/6,-1*\w/6)--(2*\w/6,0*\w/6);
    \draw [dotted,thick] (2*\w/6,-2*\w/6)--(3*\w/6,-1*\w/6);
    \draw [dotted,thick] (3*\w/6,-3*\w/6)--(4*\w/6,-2*\w/6);
    \draw [dotted,thick] (4*\w/6,-4*\w/6)--(5*\w/6,-3*\w/6);
    \draw [dotted,thick] (1*\w/6,-2*\w/6)--(3*\w/6,0*\w/6);
    \draw [dotted,thick] (2*\w/6,-3*\w/6)--(4*\w/6,-1*\w/6);
    \draw [dotted,thick] (3*\w/6,-4*\w/6)--(5*\w/6,-2*\w/6);
    \node[scale=\scal] at (1*0.5*\w/6,-1*0.5*\w/6) {\dots};
    \node[scale=\scal] at (7*0.5*\w/6,-1*0.5*\w/6) {$u_{n-1,n+1}$};
    \node[scale=\scal] at (9*0.5*\w/6,-1*0.5*\w/6) {$u_{n-1,n+2}$};
    \node[scale=\scal] at (11*0.5*\w/6,-1*0.5*\w/6) {\dots};
    \node[scale=\scal] at (1*0.5*\w/6,-3*0.5*\w/6) {\dots};
    \node[scale=\scal] at (9*0.5*\w/6,-3*0.5*\w/6) {$u_{n,n+2}$};
    \node[scale=\scal] at (11*0.5*\w/6,-3*0.5*\w/6) {\dots};
    \node[scale=\scal] at (1*0.5*\w/6,-5*0.5*\w/6) {\dots};
    \node[scale=\scal] at (3*0.5*\w/6,-5*0.5*\w/6) {$u_{n+1,n-1}$};
    \node[scale=\scal] at (11*0.5*\w/6,-5*0.5*\w/6) {\dots};
    \node[scale=\scal] at (1*0.5*\w/6,-7*0.5*\w/6) {\dots};
    \node[scale=\scal] at (3*0.5*\w/6,-7*0.5*\w/6) {$u_{n+2,n-1}$};
    \node[scale=\scal] at (5*0.5*\w/6,-7*0.5*\w/6) {$u_{n+2,n}$};
    \node[scale=\scal] at (11*0.5*\w/6,-7*0.5*\w/6) {\dots};
    \node[left] at (0,-2*\w/6) {$\alpha$};
    \node[right] at (\w,-2*\w/6) {$\beta$};
    \foreach \x in {0,1,2,3,4,5,6}{
    \node[draw,circle,inner sep=1pt,fill] at (\x*\w/6,0) {};}
    \foreach \x in {0,1,3,4,5,6}{
    \node[draw,circle,inner sep=1pt,fill] at (\x*\w/6,-1*\w/6) {};}
    \foreach \x in {1,2,4,5}{
    \node[draw,circle,inner sep=1pt,fill] at (\x*\w/6,-2*\w/6) {};}
    \foreach \x in {0,1,2,3,5,6}{
    \node[draw,circle,inner sep=1pt,fill] at (\x*\w/6,-3*\w/6) {};}
    \foreach \x in {0,1,2,3,4,5,6}{
    \node[draw,circle,inner sep=1pt,fill] at (\x*\w/6,-4*\w/6) {};}
    \end{tikzpicture}\\
    &=\rho_{n-1,n}\rho_{n,n+1}\rho_{n+1,n+2}\rho_{n-1,n+1}\rho_{n,n+2}\times\\
    &\times\begin{tikzpicture}[baseline=(current bounding box.center)]
        \def \w {7.5}
    \def \scal {0.85}
    \fill [gray!20] (4*\w/6,0)--(5*\w/6,0)--(5*\w/6,-\w/6)--(2*\w/6,-4*\w/6)--(\w/6,-4*\w/6)--(\w/6,-3*\w/6);
    \foreach \x in {0,1,5,6}{
    \draw (\x*\w/6,0)--(\x*\w/6,-4*\w/6);}
    \foreach \x in {0,1,3,4}{
    \draw [>>-] (-\w/30,-\x*\w/6)--(\w/6,-\x*\w/6);
    \draw [->>] (5*\w/6,-\x*\w/6)--(\w+\w/30,-\x*\w/6);}
    \draw (0,-2*\w/6)--(\w/6,-2*\w/6);
    \draw (5*\w/6,-2*\w/6)--(\w,-2*\w/6);
    \draw (0,0)--(\w,0);
    \draw (0,-4*\w/6)--(\w,-4*\w/6);
    \draw (4*\w/6,0)--(4*\w/6,-\w/6)--(5*\w/6,-\w/6);
    \draw (\w/6,-3*\w/6)--(2*\w/6,-3*\w/6)--(2*\w/6,-4*\w/6);
    \draw [dotted,thick] (1*\w/6,-1*\w/6)--(2*\w/6,0*\w/6);
    \draw [dotted,thick] (1*\w/6,-2*\w/6)--(3*\w/6,0*\w/6);
    \draw [dotted,thick] (1*\w/6,-3*\w/6)--(4*\w/6,0*\w/6);
    \draw [dotted,thick] (2*\w/6,-3*\w/6)--(4*\w/6,-1*\w/6);
    \draw [dotted,thick] (2*\w/6,-4*\w/6)--(5*\w/6,-1*\w/6);
    \draw [dotted,thick] (3*\w/6,-4*\w/6)--(5*\w/6,-2*\w/6);
    \draw [dotted,thick] (4*\w/6,-4*\w/6)--(5*\w/6,-3*\w/6);
    \node[scale=\scal] at (1*0.5*\w/6,-1*0.5*\w/6) {\dots};
    \node[scale=\scal] at (9*0.5*\w/6,-1*0.5*\w/6) {$u_{n-1,n+2}$};
    \node[scale=\scal] at (11*0.5*\w/6,-1*0.5*\w/6) {\dots};
    \node[scale=\scal] at (1*0.5*\w/6,-3*0.5*\w/6) {\dots};
    \node[scale=\scal] at (11*0.5*\w/6,-3*0.5*\w/6) {\dots};
    \node[scale=\scal] at (1*0.5*\w/6,-5*0.5*\w/6) {\dots};
    \node[scale=\scal] at (11*0.5*\w/6,-5*0.5*\w/6) {\dots};
    \node[scale=\scal] at (1*0.5*\w/6,-7*0.5*\w/6) {\dots};
    \node[scale=\scal] at (3*0.5*\w/6,-7*0.5*\w/6) {$u_{n+2,n-1}$};
    \node[scale=\scal] at (11*0.5*\w/6,-7*0.5*\w/6) {\dots};
    \node[left] at (0,-2*\w/6) {$\alpha$};
    \node[right] at (\w,-2*\w/6) {$\beta$};
    \foreach \x in {0,1,2,3,4,5,6}{
    \node[draw,circle,inner sep=1pt,fill] at (\x*\w/6,0) {};}
    \foreach \x in {0,1,4,5,6}{
    \node[draw,circle,inner sep=1pt,fill] at (\x*\w/6,-1*\w/6) {};}
    \foreach \x in {1,5}{
    \node[draw,circle,inner sep=1pt,fill] at (\x*\w/6,-2*\w/6) {};}
    \foreach \x in {0,1,2,5,6}{
    \node[draw,circle,inner sep=1pt,fill] at (\x*\w/6,-3*\w/6) {};}
    \foreach \x in {0,1,2,3,4,5,6}{
    \node[draw,circle,inner sep=1pt,fill] at (\x*\w/6,-4*\w/6) {};}
    \end{tikzpicture}\\
    &=\rho_{n-1,n}\rho_{n,n+1}\rho_{n+1,n+2}\rho_{n-1,n+1}\rho_{n,n+2}\rho_{n-1,n+2}\times\\
    &\times\begin{tikzpicture}[baseline=(current bounding box.center)]
    \def \w {7.5}
    \def \scal {0.85}
    \foreach \x in {0,1,5,6}{
    \draw (\x*\w/6,0)--(\x*\w/6,-4*\w/6);}
    \foreach \x in {0,1,3,4}{
    \draw [>>-] (-\w/30,-\x*\w/6)--(\w/6,-\x*\w/6);
    \draw [->>] (5*\w/6,-\x*\w/6)--(\w+\w/30,-\x*\w/6);}
    \draw (0,-2*\w/6)--(\w/6,-2*\w/6);
    \draw (5*\w/6,-2*\w/6)--(\w,-2*\w/6);
    \draw (0,0)--(\w,0);
    \draw (0,-4*\w/6)--(\w,-4*\w/6);
    \draw [dotted,thick] (1*\w/6,-1*\w/6)--(2*\w/6,0*\w/6);
    \draw [dotted,thick] (1*\w/6,-2*\w/6)--(3*\w/6,0*\w/6);
    \draw [dotted,thick] (1*\w/6,-3*\w/6)--(4*\w/6,0*\w/6);
    \draw [dotted,thick] (1*\w/6,-4*\w/6)--(5*\w/6,0*\w/6);
    \draw [dotted,thick] (2*\w/6,-4*\w/6)--(5*\w/6,-1*\w/6);
    \draw [dotted,thick] (3*\w/6,-4*\w/6)--(5*\w/6,-2*\w/6);
    \draw [dotted,thick] (4*\w/6,-4*\w/6)--(5*\w/6,-3*\w/6);
    \node[scale=\scal] at (1*0.5*\w/6,-1*0.5*\w/6) {\dots};
    \node[scale=\scal] at (11*0.5*\w/6,-1*0.5*\w/6) {\dots};
    \node[scale=\scal] at (1*0.5*\w/6,-3*0.5*\w/6) {\dots};
    \node[scale=\scal] at (11*0.5*\w/6,-3*0.5*\w/6) {\dots};
    \node[scale=\scal] at (1*0.5*\w/6,-5*0.5*\w/6) {\dots};
    \node[scale=\scal] at (11*0.5*\w/6,-5*0.5*\w/6) {\dots};
    \node[scale=\scal] at (1*0.5*\w/6,-7*0.5*\w/6) {\dots};
    \node[scale=\scal] at (11*0.5*\w/6,-7*0.5*\w/6) {\dots};
    \node[left] at (0,-2*\w/6) {$\alpha$};
    \node[right] at (\w,-2*\w/6) {$\beta$};
    \foreach \x in {0,1,2,3,4,5,6}{
    \node[draw,circle,inner sep=1pt,fill] at (\x*\w/6,0) {};}
    \foreach \x in {0,1,5,6}{
    \node[draw,circle,inner sep=1pt,fill] at (\x*\w/6,-1*\w/6) {};}
    \foreach \x in {1,5}{
    \node[draw,circle,inner sep=1pt,fill] at (\x*\w/6,-2*\w/6) {};}
    \foreach \x in {0,1,5,6}{
    \node[draw,circle,inner sep=1pt,fill] at (\x*\w/6,-3*\w/6) {};}
    \foreach \x in {0,1,2,3,4,5,6}{
    \node[draw,circle,inner sep=1pt,fill] at (\x*\w/6,-4*\w/6) {};}
    \end{tikzpicture}\\
\end{aligned}
\end{equation*}
where we used $u_{i,j}\equiv u_i-u_j$ and $\rho_{i,j}\equiv \rho(u_i-u_j)\rho(u_j-u_i)$. Continuing this procedure throughout (\ref{eq:invprbL}) as on the shaded regions shown above, it is easy to see that
\begin{equation*}
\begin{aligned}
    \bra{\bm{a}}\left(\prod_{k=1}^{n-1} t(u_k)\right)&T_{\alpha\beta}(u_n) T^{\alpha\beta}(u_{n+1}) \left(\prod_{k=n+2}^L t(u_k)\right) \ket{\bm{b}}=\\&=\prod\limits_{k,\ell=1}^L\rho(u_k-u_\ell)\;
\begin{tikzpicture}[baseline=(current bounding box.center)]
    \def \w {7.5};
    \def \scal {1};
    \draw (0,0)--(\w,0)--(\w,-\w)--(0,-\w)--(0,0);
    \node[above=5pt] at (\w/6,0) {\dots};
    \node[above,scale=\scal] at (2*\w/6,0) {$a_{n-1}$};
    \node[above,scale=\scal] at (2*\w/4,0) {$a_{n}=\alpha$};
    \node[above,scale=\scal] at (4*\w/6,0) {$a_{n+1}$};
    \node[above=5pt] at (5*\w/6,0) {\dots};
    \node[below=5pt] at (\w/6,-\w) {\dots};
    \node[below,scale=\scal] at (2*\w/6,-\w) {$b_{n-1}$};
    \node[below,scale=\scal] at (2*\w/4,-\w) {$b_{n}=\beta$};
    \node[below,scale=\scal] at (4*\w/6,-\w) {$b_{n+1}$};
    \node[below=5pt] at (5*\w/6,-\w) {\dots};
    \node[left,scale=\scal] at (0,-2*\w/4) {$\alpha$};
    \node[right,scale=\scal] at (\w,-2*\w/4) {$\beta$};
    \draw[>>-,dotted,thick] (-\w/30,-2*\w/6)--(0,-2*\w/6)--(2*\w/6,0);
    \draw[dotted,thick] (0,-2*\w/4)--(2*\w/4,0);
    \draw[>>-,dotted,thick] (-\w/30,-4*\w/6)--(0,-4*\w/6)--(4*\w/6,0);
    \draw[->>,dotted,thick] (2*\w/6,-\w)--(\w,-2*\w/6)--(\w+\w/30,-2*\w/6);
    \draw[dotted,thick] (2*\w/4,-\w)--(\w,-2*\w/4);
    \draw[->>,dotted,thick] (4*\w/6,-\w)--(\w,-4*\w/6)--(\w+\w/30,-4*\w/6);
\end{tikzpicture}\\
&=\prod\limits_{k,\ell=1}^L\rho(u_k-u_\ell) \delta_{a_n,\alpha}\delta_{b_n,\beta}\prod_{j\neq n}\delta_{a_jb_j}\\
&= \prod\limits_{k,\ell=1}^L\rho(u_k-u_\ell)\,\bra{\bm{a}}\left(E^\alpha_\beta\right)_n\ket{\bm{b}}
\end{aligned}
\end{equation*}
is produced which finishes the proof.
\section{Proof of Theorem 2}\label{ch:app}
Here we will provide the proof for the functional equation \eqref{eq:func_equation}. It consists of three steps. The idea is, to consider the action of $A_N$ on a density operator with one row added, i.e.\ $D_{N+1}(\lambda_1,\dots,\lambda_N,\lambda_N+\lambda)$.  Recall, that for every $n\in\mathbb{N}$ $D_n(\lambda_1,\dots,\lambda_n)^{\{\underline{\bm{\alpha}}\}\{\underline{\bm{\beta}}\}} =0$ if $\alpha_0\neq \beta_0$ or $\alpha_n\neq\beta_n$. In those cases the functional equation holds trivially, so we may assume $\alpha_0=\beta_0$.
Also note that $\left(A_N\left(\lambda_1,\dots,\lambda_N\right)\left[D_{N+1}\left(\lambda_1,\dots,\lambda_N,\lambda_N+\lambda\right)\right]\right)^{\{\underline{\bm{\alpha}}\}\{\underline{\bm{\beta}}\}} $ is a matrix element of an operator $\mathcal{V}^{N+1}\to\mathcal{V}^{N+1}$.  To keep the presentation legible, the following steps will be shown graphically for $N=L=2$, e.g.\
\begin{equation*}
\begin{aligned}
    &A_2\left(\lambda_1,\lambda_2\right)\left[D_3\left(\lambda_1,\lambda_2,\lambda_2+\lambda\right)\right]^{\{\underline{\bm{\alpha}}\}\{\underline{\bm{\beta}}\}} 
  =\frac{\delta_{\alpha_0\beta_0}\delta_{\alpha_2\beta_2}}{\rho(\lambda_1-\lambda_2)\rho(\lambda_2-\lambda_1)\Lambda(\lambda_1)\Lambda(\lambda_2) \Lambda(\lambda_2+\lambda)}\times
  \\
 &\qquad\begin{tikzpicture}

    \def \scalam {0.6}
    \def \scaed {1}
    \def \b {2.6}
    \def \c {\b/2}
    \node[left] at (-2.5*\c,0) {$\times$};
    \foreach \x in {1,0,-1,-2}{
    \draw (0,\x*\c)--(\b,\x*\c);}
    \foreach \x in {0,1,2}{
    \draw (\x*\c,\c)--(\x*\c,-2*\c);}
    \draw (0,0)--(-\c,\c)--(-2*\c,0)--(-\c,-\c)--(0,0);
    \draw (\b+\c,0)--(\b+\c+\c,\c)--(\b+2*\c+\c,0)--(\b+\c+\c,-\c)--(\b+\c,0);
    \draw (\b,-\c)--(\b+\c,0)--(\b+\c+\c,-\c)--(\b+\c,-2*\c)--(\b,-\c);
    
    \draw[dotted] (-\c,\c)--(0,\c);
    \draw[dotted] (-\c,-\c)--(0,-\c);
    \draw[dotted] (2*\c+\b,\c)--(\b,\c);
    \draw[dotted] (\c+\b,0)--(\b,0);
    \draw (0,\c)--(\c,3/2*\c)--(\b,\c);
    \draw (0,-2*\c)--(\c,-3/2*\c-\c)--(\b,-2*\c);
    


    \node[below] at (\b-0.17*\c,-\c) {$\gamma$};
    \node at (-\c,0) {$\lambda_1-\lambda_2$};
    \node at (\b+2*\c,0) {$\lambda_2-\lambda_1$};
    \node at (\b+\c+0*\c/8,-\c+0*\c/4) {$\lambda$};
    
    \node[scale=\scalam] at (\c/2,\c/2) {$\lambda_1-u_1$};
    \node[scale=\scalam] at (\c/2+\c,\c/2) {$\lambda_1-u_2$};
    \node[scale=\scalam] at (\c/2,-\c/2) {$\lambda_2-u_1$};
    \node[scale=\scalam] at (\c/2+\c,-\c/2) {$\lambda_2-u_2$};
    \node[scale=\scalam] at (\c/2,-\c/2-\c) {$\lambda_2+\lambda-u_1$};
    \node[scale=\scalam] at (\c/2+\c,-\c/2-\c) {$\lambda_2+\lambda-u_2$};
    
    \node[left,scale=\scaed] at (-2*\c,0) {$\alpha_0$};
    \node[right,scale=\scaed] at (\b+3*\c,0) {$\beta_0$};
    \node[left,scale=\scaed] at (-2*\c+\b/2,-\c) {$\alpha_1$};
    \node[right,scale=\scaed] at (3*\c+\b/2,-\c) {$\beta_1$};
    \node[below,scale=\scaed] at (\b+\b/2,-2*\c) {$\alpha_2=\beta_2$};
    \node[right,scale=\scaed] at (\b,-2*\c) {$\beta_3$};
    \node[left,scale=\scaed] at (0,-2*\c) {$\alpha_3$};
    
    \node at (\c,5/4 * \c) {$\Phi$};
    \node at (\c,-5/4 * \c-\c) {$\Phi$};
    
    \node[draw,circle,inner sep=1pt,fill] at (-\c,\c) {};
    \node[draw,circle,inner sep=1pt,fill] at (\b+\c,0) {};
    \node[draw,circle,inner sep=1pt,fill] at (0,0) {};
    \node[draw,circle,inner sep=1pt,fill] at (0,-\c) {};
        \node[draw,circle,inner sep=1pt,fill] at (\b/2,0) {};
        \node[draw,circle,inner sep=1pt,fill] at (\b/2,-\c) {};
            \node[draw,circle,inner sep=1pt,fill] at (\b,0) {};
            \node[draw,circle,inner sep=1pt,fill] at (\b,-\c) {};
     \node[draw,circle,inner sep=1pt,fill] at (3/2 * \b,-2*\c) {};
 \end{tikzpicture}
 \end{aligned}
\end{equation*}
Now, writing $\mathcal{V}^{N+1} = \mathcal{V}^N\hat{\otimes}\mathcal{V}^1$ (see Section~\ref{sec:models} for the definition of the symbol $\hat{\otimes}$) we perform two `(constrained) partial traces' over the factor $\mathcal{V}^1$ each leading to one side of the functional equation, i.e.\ operators on $\mathcal{V}^N$. In a final step we show that for the special choice of $\lambda_N$ being one of the inhomogeneities $\{u_i\}$ the constraint can be dropped and both summations lead to the same result.

In a first step we note that from the definition (\ref{eq:def_D}) of the density operators
\begin{equation}
  \sum_{\alpha_{N+1}}D_{N+1}\left(\lambda_1,\ldots\lambda_N,\lambda_N+\lambda\right)^{\{\underline{\bm{\alpha}}\}\{\underline{\bm{\beta}}\}} 
    =D_N\left(\lambda_1,\dots,\lambda_N\right)^{\{\underline{\bm{\alpha}'}\}\{\underline{\bm{\beta}'}\}} \,,
\end{equation}
where ${\underline{\bm{\alpha}}'}=(\alpha_0,\dots,\alpha_N)$ and ${\underline{\bm{\beta}}'}=(\beta_0,\dots,\beta_N)$ with $\alpha_0=\beta_0$ and $\alpha_N=\beta_N$.  This gives immediately
\begin{equation}
    \sum_{\alpha_{N+1}}A_N(\lambda_1,\dots,\lambda_N)\left[D_{N+1}\left(\lambda_1,\ldots\lambda_N,\lambda_N+\lambda\right)\right]^{\underline{\bm{\alpha}}\underline{\bm{\beta}}}
    = A_N(\lambda_1,\dots,\lambda_N)\left[D_{N}\left(\lambda_1,\ldots\lambda_N\right)\right]^{\underline{\bm{\alpha}}'\underline{\bm{\beta}}'}\,.
\end{equation}
Note that this fixes the spin $\gamma$ to be equal to $\alpha_1$ in the graphical representation above (or $\alpha_{N-1}$ for general $N$). Therefore we have obtained the left-hand side of the functional equation (\ref{eq:func_equation}).

For the second step we sum over $\alpha=\alpha_{N}=\beta_{N}$ and fix the spin $\gamma$ to be equal to $\beta_{N-1}$. For $N=L=2$ this becomes (thick dotted lines indicate where the constraint $\delta_{\gamma\beta_1}$ is used)
\begin{align*}
    \delta_{\gamma\beta_1} \sum_{\alpha_2}&
    A_2\left(\lambda_1,\lambda_2\right)\left[D_3\left(\lambda_1,\lambda_2,\lambda_2+\lambda\right)\right]^{\{\underline{\bm{\alpha}}\}\{\underline{\bm{\beta}}\}} =\\
  & =\frac{\delta_{\alpha_0\beta_0}}{\rho(\lambda_1-\lambda_2)\rho(\lambda_2-\lambda_1)\Lambda(\lambda_1)\Lambda(\lambda_2) \Lambda(\lambda_2+\lambda)}\times
  \\
  &\qquad
  \begin{tikzpicture}
    \def \scalam {0.6}
    \def \scaed {1}
    \def \b {2.6}
    \def \c {\b/2}
    \node[left] at (-2.5*\c,0) {$\times\sum\limits_\alpha$};
    \foreach \x in {1,0,-1,-2}{
    \draw (0,\x*\c)--(\b,\x*\c);}
    \foreach \x in {0,1,2}{
    \draw (\x*\c,\c)--(\x*\c,-2*\c);}
    \draw (0,0)--(-\c,\c)--(-2*\c,0)--(-\c,-\c)--(0,0);
    \draw (\b+\c,0)--(\b+\c+\c,\c)--(\b+2*\c+\c,0)--(\b+\c+\c,-\c)--(\b+\c,0);
    \draw (\b,-\c)--(\b+\c,0)--(\b+\c+\c,-\c)--(\b+\c,-2*\c)--(\b,-\c);
    \draw[dotted] (-\c,\c)--(0,\c);
    \draw[dotted] (-\c,-\c)--(0,-\c);
    \draw[dotted] (2*\c+\b,\c)--(\b,\c);
    \draw[dotted] (\c+\b,0)--(\b,0);
    \draw[dotted,ultra thick] (\b,-\c)--(2*\b,-\c);
    \draw[dotted] (\b+\c,-2*\c)--(\b+\c,0);
    \draw (0,\c)--(\c,3/2*\c)--(\b,\c);
    \draw (0,-2*\c)--(\c,-3/2*\c-\c)--(\b,-2*\c);
    %
    %
    \node[below] at (\b-0.17*\c,-\c) {$\gamma$};
    \node at (-\c,0) {$\lambda_1-\lambda_2$};
    \node at (\b+2*\c,0) {$\lambda_2-\lambda_1$};
    \node at (\b+\c+\c/8,-\c+\c/4) {$\lambda$};
    \node[scale=\scalam] at (\c/2,\c/2) {$\lambda_1-u_1$};
    \node[scale=\scalam] at (\c/2+\c,\c/2) {$\lambda_1-u_2$};
    \node[scale=\scalam] at (\c/2,-\c/2) {$\lambda_2-u_1$};
    \node[scale=\scalam] at (\c/2+\c,-\c/2) {$\lambda_2-u_2$};
    \node[scale=\scalam] at (\c/2,-\c/2-\c) {$\lambda_2+\lambda-u_1$};
    \node[scale=\scalam] at (\c/2+\c,-\c/2-\c) {$\lambda_2+\lambda-u_2$};
    \node[left,scale=\scaed] at (-2*\c,0) {$\alpha_0$};
    \node[right,scale=\scaed] at (\b+3*\c,0) {$\beta_0$};
    \node[left,scale=\scaed] at (-2*\c+\b/2,-\c) {$\alpha_1$};
    \node[right,scale=\scaed] at (3*\c+\b/2,-\c) {$\beta_1$};
    \node[below,scale=\scaed] at (\b+\b/2,-2*\c) {$\alpha$};
    \node[right,scale=\scaed] at (\b,-2*\c) {$\beta_3$};
    \node[left,scale=\scaed] at (0,-2*\c) {$\alpha_3$};
    \node at (\c,5/4 * \c) {$\Phi$};
    \node at (\c,-5/4 * \c-\c) {$\Phi$};
    \node[draw,circle,inner sep=1pt,fill] at (-\c,\c) {};
    \node[draw,circle,inner sep=1pt,fill] at (\b+\c,0) {};
    \node[draw,circle,inner sep=1pt,fill] at (0,0) {};
    \node[draw,circle,inner sep=1pt,fill] at (0,-\c) {};
    \node[draw,circle,inner sep=1pt,fill] at (\b/2,0) {};
    \node[draw,circle,inner sep=1pt,fill] at (\b/2,-\c) {};
    \node[draw,circle,inner sep=1pt,fill] at (\b,0) {};
    \node[draw,circle,inner sep=1pt,fill] at (\b,-\c) {};
    \node[draw,circle,inner sep=1pt,fill] at (3/2 * \b,-2*\c) {};
  \end{tikzpicture}
  \\
  &=\frac{\delta_{\alpha_0\beta_0}}{\rho(\lambda_1-\lambda_2)\rho(\lambda_2-\lambda_1)\Lambda(\lambda_1)\Lambda(\lambda_2) \Lambda(\lambda_2+\lambda)}\times\\
  &\qquad\qquad
 \begin{tikzpicture}[baseline={([yshift=-10pt]current bounding box.north)}]
  \def \scalam {0.6}
  \def \scaed {1}
  \def \b {2.6}
  \def \c {\b/2}
    \node[left] at (-2.5*\c,0) {$\times$};
  \foreach \x in {1,0,-1,-2}{
  	\draw (0,\x*\c)--(\b,\x*\c);}
  \foreach \x in {0,1,2}{
  	\draw (\x*\c,\c)--(\x*\c,-2*\c);}
  \draw (0,0)--(-\c,\c)--(-2*\c,0)--(-\c,-\c)--(0,0);
  \draw (\b,0)--(\b+\c,\c)--(\b+2*\c,0)--(\b+\c,-\c)--(\b,0);
 %
  \draw[dotted] (-\c,\c)--(0,\c);
  \draw[dotted] (-\c,-\c)--(0,-\c);
  \draw[dotted] (\c+\b,\c)--(\b,\c);
  \draw[dotted, ultra thick] (\b,-\c)--(\b+\c,-\c);
  \draw (0,\c)--(\c,3/2*\c)--(\b,\c);
  \draw (0,-2*\c)--(\c,-3/2*\c-\c)--(\b,-2*\c);
 %
    \node[below] at (\b-0.17*\c,0) {$\alpha$};
    \node[below] at (\b-0.17*\c,-\c) {$\gamma$};
  \node at (-\c,0) {$\lambda_1-\lambda_2$};
  \node at (\b+\c,0) {$\lambda_2-\lambda_1$};
%
  \node[scale=\scalam] at (\c/2,\c/2) {$\lambda_1-u_1$};
  \node[scale=\scalam] at (\c/2+\c,\c/2) {$\lambda_1-u_2$};
  \node[scale=\scalam] at (\c/2,-\c/2) {$\lambda_2-u_1$};
  \node[scale=\scalam] at (\c/2+\c,-\c/2) {$\lambda_2-u_2$};
  \node[scale=\scalam] at (\c/2,-\c/2-\c) {$\lambda_2+\lambda-u_1$};
  \node[scale=\scalam] at (\c/2+\c,-\c/2-\c) {$\lambda_2+\lambda-u_2$};
  \node[left,scale=\scaed] at (-2*\c,0) {$\alpha_0$};
  \node[right,scale=\scaed] at (\b+2*\c,0) {$\beta_0$};
  \node[left,scale=\scaed] at (-2*\c+\b/2,-\c) {$\alpha_1$};
  \node[right,scale=\scaed] at (2*\c+\b/2,-\c) {$\beta_1$};
  \node[right,scale=\scaed] at (\b,-2*\c) {$\beta_3$};
  \node[left,scale=\scaed] at (0,-2*\c) {$\alpha_3$};
  \node at (\c,5/4 * \c) {$\Phi$};
  \node at (\c,-5/4 * \c-\c) {$\Phi$};
  \node[draw,circle,inner sep=1pt,fill] at (0,0) {};
  \node[draw,circle,inner sep=1pt,fill] at (0,-\c) {};
  \node[draw,circle,inner sep=1pt,fill] at (\b/2,0) {};
  \node[draw,circle,inner sep=1pt,fill] at (\b/2,-\c) {};
  \node[draw,circle,inner sep=1pt,fill] at (\b,0) {};
  \node[draw,circle,inner sep=1pt,fill] at (\b,-\c) {};
\end{tikzpicture}\\
&= \frac{\delta_{\alpha_0\beta_0}}{\Lambda(\lambda_1)\Lambda(\lambda_2) \Lambda(\lambda_2+\lambda)}\quad
\begin{tikzpicture}[baseline={([yshift=-75pt]current bounding box.north)}]
  \def \scalam {0.6}
  \def \scaed {1}
  \def \b {2.6}
  \def \c {\b/2}
  \foreach \x in {1,0,-1,-2}{
  	\draw (0,\x*\c)--(\b,\x*\c);}
  \foreach \x in {0,1,2}{
  	\draw (\x*\c,\c)--(\x*\c,-2*\c);}
  \draw (0,\c)--(\c,3/2*\c)--(\b,\c);
  \draw (0,-2*\c)--(\c,-3/2*\c-\c)--(\b,-2*\c);
  \node[scale=\scalam] at (\c/2,\c/2) {$\lambda_2-u_1$};
  \node[scale=\scalam] at (\c/2+\c,\c/2) {$\lambda_2-u_2$};
  \node[scale=\scalam] at (\c/2,-\c/2) {$\lambda_1-u_1$};
  \node[scale=\scalam] at (\c/2+\c,-\c/2) {$\lambda_1-u_2$};
  \node[scale=\scalam] at (\c/2,-\c/2-\c) {$\lambda_2+\lambda-u_1$};
  \node[scale=\scalam] at (\c/2+\c,-\c/2-\c) {$\lambda_2+\lambda-u_2$};
  \node[left,scale=\scaed] at (0,0) {$\alpha_0$};
  \node[right,scale=\scaed] at (\b,0) {$\beta_0$};
  \node[left,scale=\scaed] at (0,-\c) {$\alpha_1$};
  \node[right,scale=\scaed] at (\b,-\c) {$\beta_1$};
  \node[right,scale=\scaed] at (\b,-2*\c) {$\beta_3$};
  \node[left,scale=\scaed] at (0,-2*\c) {$\alpha_3$};
  \node at (\c,5/4 * \c) {$\Phi$};
  \node at (\c,-5/4 * \c-\c) {$\Phi$};
 %
  \node[draw,circle,inner sep=1pt,fill] at (\b/2,0) {};
  \node[draw,circle,inner sep=1pt,fill] at (\b/2,-\c) {};
  \end{tikzpicture}\\
&=\frac{1}{\Lambda(\lambda_1) \Lambda(\lambda_2+\lambda)}\qquad
\begin{tikzpicture}[baseline={([yshift=-58pt]current bounding box.north)}]
  \def \scalam {0.6}
  \def \scaed {1}
  \def \b {2.6}
  \def \c {\b/2}
  \foreach \x in {0,-1,-2}{
  	\draw (0,\x*\c)--(\b,\x*\c);}
  \foreach \x in {0,1,2}{
  	\draw (\x*\c,0)--(\x*\c,-2*\c);}
 %
  \draw (0,0)--(\c,0.5*\c)--(\b,0);
  \draw (0,-2*\c)--(\c,-3/2*\c-\c)--(\b,-2*\c);
 %
 %
  \node[scale=\scalam] at (\c/2,-\c/2) {$\lambda_1-u_1$};
  \node[scale=\scalam] at (\c/2+\c,-\c/2) {$\lambda_1-u_2$};
  \node[scale=\scalam] at (\c/2,-\c/2-\c) {$\lambda_2+\lambda-u_1$};
  \node[scale=\scalam] at (\c/2+\c,-\c/2-\c) {$\lambda_2+\lambda-u_2$};
  \node[left,scale=\scaed] at (0,0) {$\alpha_0$};
  \node[right,scale=\scaed] at (\b,0) {$\beta_0$};
  \node[left,scale=\scaed] at (0,-\c) {$\alpha_1$};
  \node[right,scale=\scaed] at (\b,-\c) {$\beta_1$};
  \node[right,scale=\scaed] at (\b,-2*\c) {$\beta_3$};
  \node[left,scale=\scaed] at (0,-2*\c) {$\alpha_3$};
  \node at (\c,1/4 * \c) {$\Phi$};
  \node at (\c,-5/4 * \c-\c) {$\Phi$};
  %
  \node[draw,circle,inner sep=1pt,fill] at (\b/2,0) {};
  \node[draw,circle,inner sep=1pt,fill] at (\b/2,-\c) {};
  \end{tikzpicture}
  \quad=
  D_2\left(\lambda_1,\lambda_2+\lambda\right)^{\widetilde{\underline{\bm{\alpha}}}\,\widetilde{\underline{\bm{\beta}}}}
\end{align*} 
with $\widetilde{\underline{\bm{\alpha}}} = (\alpha_0,\alpha_1,\alpha_3)$ and $\widetilde{\underline{\bm{\beta}}}$ accordingly.  Here we have used the initial and crossing conditions to evaluate the Boltzmann weight at $\lambda$, the Yang-Baxter equation to pull the rotated weight from the right to the left and finally $\bra{\Phi} T_{\alpha_0\alpha_0}(\lambda_2) = \Lambda(\lambda_2)\bra{\Phi}\,P_{\alpha_0\alpha_0}$ with the projection operator $P_{\alpha\alpha}: \mathcal{H}_{\text{per}} \to \mathcal{H}_{\alpha\alpha}$.
For general $N>2$ these operations have to be iterated to move the row $T(\lambda_N)$ to the top yielding the right-hand side of Eq.~(\ref{eq:func_equation}).

The final step consists of showing that for the special choice of $\lambda_N$ the two operations shown above yield the same result. Combining the unitarity condition \eqref{eq:uniface} and crossing symmetry \eqref{eq:crossface} it follows immediately that
 \begin{align}\label{eq:invface}
   \begin{tikzpicture}[baseline=(current bounding box.center)]
    \def \c {2}
    \draw (0,\c)--(0,-\c)--(\c,-\c)--(\c,\c)--(0,\c);
    \draw (0,0)--(\c,0);    
    \node[left] at (0,\c) {$a$};
    \node[left] at (0,0) {$c$};
    \node[left] at (0,-\c) {$a$};
    \node[right] at (\c,\c) {$b$};
    \node[right] at (\c,0) {$d$};
    \node[right] at (\c,-\c) {$e$};
       \node at (\c/2,\c/2) {$u$};
    \node at (\c/2,-\c/2) {$u+\lambda$};
     \draw[dotted] plot [smooth] coordinates {(0,\c) (-\c/5,0) (0,-\c)};
    \node[draw,circle,inner sep=1pt,fill] at (0,0) {};
   \end{tikzpicture}=\rho(u)\rho(-u)\,\delta_{be}\,.
  \end{align}
This relation can be iterated which was used to find inversion relations for the transfer matrices of inhomogeneous face models \cite{FrKa14}. Here it is the key ingredient to complete the proof. To this end we focus on the last two lines of $A_N[D_{N+1}](\lambda_1,\dots,u_i,u_i+\lambda)$.  The scalar prefactors appearing in (\ref{eq:def_A}) and in the following operations are suppressed as they do not depend on the spins in the auxiliary spaces and are irrelevant for the proof.  Now we use the initial condition in the $i$-th column and (\ref{eq:invface}) in the following ones until the rightmost line of spins is reached:
\begin{align*}
& \begin{tikzpicture}
\def \b {8}
\def \c {\b/4}
\draw (-\c/2,\c)--(\b,\c);
\draw (-\c/2,0)--(\b,0);
\draw (-\c/2,-\c)--(\b,-\c);
\draw (\b,-\c)--(-\c/2,-2*\c)--(-\c/2-0.1*\b-0.1*\c/2,-2*\c+0.1*\c);
\foreach \x in {0,1,2,3,4}{
\draw (\x*\c,1.5*\c)--(\x*\c,-\c);}
\node[above] at (-0.4*\c,-1.5*\c) {$\Phi$};
\draw (\b,0)--(\b+\c,\c)--(\b+2*\c,0)--(\b+\c,-\c)--(\b,0);
\draw (5*\c,\c)--(5*\c+0.25 *\c,\c+0.25*\c);
\draw (6*\c,0)--(6*\c+0.25*\c,0.25*\c);
\draw[dotted] (\b,\c)--(\b+\c,\c);
\foreach \x in {1,3,5,7}{
\node at (\x*\c/2,1.25*\c) {$\vdots$};}
\node at (5*\c/2,\c/2) {$\dots$};
\node at (5*\c/2,-\c/2) {$\dots$};
\node at (-\c/2,\c/2) {$\dots$};
\node at (-\c/2,-\c/2) {$\dots$};
\node[right] at (\b+1.75*\c,0.75*\c) {$\Ddots$};
\node[scale=0.8] at (\c/2,\c/2) {$0$};
\node[scale=0.8] at (3*\c/2,\c/2) {$u_i-u_{i+1}$};
\node[scale=0.8] at (7*\c/2,\c/2) {$u_i-u_{L}$};
\node[scale=0.8] at (\c/2,-\c/2) {$\lambda$};
\node[scale=0.8] at (3*\c/2,-\c/2) {$u_i-u_{i+1}+\lambda$};
\node[scale=0.8] at (7*\c/2,-\c/2) {$u_i-u_{L}+\lambda$};
\node at (\b+\c,0) {$\lambda$};
\foreach \x in {0,1,2,3,4} {
\node[draw,circle,inner sep=1pt,fill] at (\x*\c,\c) {};
\node[draw,circle,inner sep=1pt,fill] at (\x*\c,0) {};
}
\node[draw,circle,inner sep=1pt,fill] at (\b+\c,\c) {};
\node[below] at (0.975*\b,0) {$\gamma$};
\node[right] at (\b,-\c) {$\beta_{N+1}$};
\node[below] at (\b+\c,-\c) {$\alpha_N=\beta_{N}$};
\node[right] at (\b+2*\c,0) {$\beta_{N-1}$};
\end{tikzpicture}\\
= &\begin{tikzpicture}[baseline=(current bounding box.center)]
\def \b {8}
\def \c {\b/4}
\draw (-\c/2,\c)--(\b,\c);
\draw (-\c/2,0)--(0,0);
\draw (\c,0)--(\b,0);
\draw (-\c/2,-\c)--(\b,-\c);
\draw (\b,-\c)--(-\c/2,-2*\c)--(-\c/2-0.1*\b-0.1*\c/2,-2*\c+0.1*\c);
\node[above] at (-0.4*\c,-1.5*\c) {$\Phi$};
\foreach \x in {0,1,2,3,4}{
\draw (\x*\c,1.5*\c)--(\x*\c,-\c);}
\draw[dotted] (\c,\c)--(0,0)--(\c,-\c);
\draw (\b,0)--(\b+\c,\c)--(\b+2*\c,0)--(\b+\c,-\c)--(\b,0);
\draw (5*\c,\c)--(5*\c+0.25 *\c,\c+0.25*\c);
\draw (6*\c,0)--(6*\c+0.25*\c,0.25*\c);
\draw[dotted] (\b,\c)--(\b+\c,\c);
\foreach \x in {1,3,5,7}{
\node at (\x*\c/2,1.25*\c) {$\vdots$};}
\node at (5*\c/2,\c/2) {$\dots$};
\node at (5*\c/2,-\c/2) {$\dots$};
\node at (-\c/2,\c/2) {$\dots$};
\node at (-\c/2,-\c/2) {$\dots$};
\node[right] at (\b+1.75*\c,0.75*\c) {$\Ddots$};
\node[scale=0.8] at (3*\c/2,\c/2) {$u_i-u_{i+1}$};
\node[scale=0.8] at (7*\c/2,\c/2) {$u_i-u_{L}$};
\node[scale=0.8] at (3*\c/2,-\c/2) {$u_i-u_{i+1}+\lambda$};
\node[scale=0.8] at (7*\c/2,-\c/2) {$u_i-u_{L}+\lambda$};
\node at (\b+\c,0) {$\lambda$};
\foreach \x in {0,1,2,3,4} {
\node[draw,circle,inner sep=1pt,fill] at (\x*\c,\c) {};
\node[draw,circle,inner sep=1pt,fill] at (\x*\c,0) {};
}
\node[draw,circle,inner sep=1pt,fill] at (\b+\c,\c) {};
\node[below] at (0.975*\b,0) {$\gamma$};
\node[right] at (\b,-\c) {$\beta_{N+1}$};
\node[below] at (\b+\c,-\c) {$\alpha_N=\beta_{N}$};
\node[right] at (\b+2*\c,0) {$\beta_{N-1}$};
\end{tikzpicture}\\
=&\begin{tikzpicture}[baseline=(current bounding box.center)]
\def \b {8}
\def \c {\b/4}
\draw (-\c/2,\c)--(\b,\c);
\draw (-\c/2,0)--(0,0);
\draw (2*\c,0)--(\b,0);
\draw (-\c/2,-\c)--(\b,-\c);
\draw (\b,-\c)--(-\c/2,-2*\c)--(-\c/2-0.1*\b-0.1*\c/2,-2*\c+0.1*\c);
\node[above] at (-0.4*\c,-1.5*\c) {$\Phi$};
\foreach \x in {0,2,3,4}{
\draw (\x*\c,1.5*\c)--(\x*\c,-\c);}
\draw (1*\c,1.5*\c)--(1*\c,\c);
\draw[dotted] (\c,\c)--(0,0)--(\c,-\c);
\draw[dotted] plot [smooth] coordinates {(2*\c,\c) (1.75*\c,0) (2*\c,-\c)};
\draw (5*\c,\c)--(5*\c+0.25 *\c,\c+0.25*\c);
\draw (6*\c,0)--(6*\c+0.25*\c,0.25*\c);
\draw (5*\c,\c)--(6*\c,0);
\draw[dotted] (\b,\c)--(\b+\c,\c);
\draw[dotted] (5*\c,\c)--(5*\c,-\c);
\foreach \x in {1,3,5,7}{
\node at (\x*\c/2,1.25*\c) {$\vdots$};}
\node at (5*\c/2,\c/2) {$\dots$};
\node at (5*\c/2,-\c/2) {$\dots$};
\node at (-\c/2,\c/2) {$\dots$};
\node at (-\c/2,-\c/2) {$\dots$};
\node[right] at (\b+1.75*\c,0.75*\c) {$\Ddots$};
\node[scale=0.8] at (7*\c/2,\c/2) {$u_i-u_{L}$};
\node[scale=0.8] at (7*\c/2,-\c/2) {$u_i-u_{L}+\lambda$};
\foreach \x in {0,2,3,4} {
\node[draw,circle,inner sep=1pt,fill] at (\x*\c,\c) {};
\node[draw,circle,inner sep=1pt,fill] at (\x*\c,0) {};
}
\node[draw,circle,inner sep=1pt,fill] at (\c,\c) {};
\node[draw,circle,inner sep=1pt,fill] at (\b+\c,\c) {};
\node[below] at (0.975*\b,0) {$\gamma$};
\node[right] at (\b,-\c) {$\beta_{N+1}$};
\node[below] at (\b+\c,-\c) {$\alpha_N=\beta_{N}$};
\node[right] at (\b+2*\c,0) {$\beta_{N-1}$};
\end{tikzpicture}\\
=&\begin{tikzpicture}[baseline=(current bounding box.center),scale=0.95]
\def \b {8}
\def \c {\b/4}
\draw (-\c/2,\c)--(\b,\c);
\draw (-\c/2,0)--(0,0);
\draw (-\c/2,-\c)--(\b,-\c);
\draw (\b,-\c)--(-\c/2,-2*\c)--(-\c/2-0.1*\b-0.1*\c/2,-2*\c+0.1*\c);
\node[above] at (-0.4*\c,-1.5*\c) {$\Phi$};
\foreach \x in {0,1,2,3,4}{
\draw (\x*\c,1.5*\c)--(\x*\c,\c);}
\draw (0,1.5*\c)--(0,-\c);
\draw (\b,1.5*\c)--(\b,-\c);
\draw[dotted] (\c,\c)--(0,0)--(\c,-\c);
\draw[dotted] plot [smooth] coordinates {(2*\c,\c) (1.75*\c,0) (2*\c,-\c)};
\draw[dotted] plot [smooth] coordinates {(3*\c,\c) (2.75*\c,0) (3*\c,-\c)};
\draw[dotted] plot [smooth] coordinates {(4*\c,\c) (3.75*\c,0) (4*\c,-\c)};
\draw (\b,\c)--(\b+\c,2*\c)--(\b+2*\c,\c)--(\b+\c,0)--(\b,\c);
\foreach \x in {1,3,5,7}{
\node at (\x*\c/2,1.25*\c) {$\vdots$};}
\node at (5*\c/2,0) {$\dots$};
\node at (-\c/2,\c/2) {$\dots$};
\node at (-\c/2,-\c/2) {$\dots$};
\node at (5*\c,\c) {$\lambda_N -\lambda_{N-1}$};
\foreach \x in {0,2,3} {
\node[draw,circle,inner sep=1pt,fill] at (\x*\c,\c) {};
}
\node[draw,circle,inner sep=1pt,fill] at (0,0) {};
\node[draw,circle,inner sep=1pt,fill] at (4*\c,0) {};
\node[draw,circle,inner sep=1pt,fill] at (\c,\c) {};
\node[below] at (0.975*\b,0) {$\gamma$};
\node[right] at (\b,-\c) {$\beta_{N+1}$};
\node[right] at (\b,\c) {$\beta_{N}$};
\node[below] at (\b+\c,0) {$\beta_{N-1}$};
\node[right] at (\b+2*\c,\c) {$\beta_{N-2}$};
\end{tikzpicture}
\end{align*}
Using the initial condition for the rotated weight with spectral parameter $\lambda$ we obtain that $\beta_N=\beta_{N+1}$. By definition of the operator $A_N$ and periodic boundary conditions in quantum space $\mathcal{H}_{\text{per}}$ we also have $\alpha_N=\alpha_{N+1}$. The spin $\gamma$ is not connected to one of the Boltzmann weights any more and therefore the partial traces considered above yield the same result.  This proves the theorem.

Note that the restriction of $\lambda_N\in\{u_i\}$ in the functional equation (\ref{eq:func_equation}) can be dropped for matrix elements where $\alpha_{N-1}$ is a leaf node of the adjacency graph $\mathfrak{G}$: in this case all neighbouring spins are necessarily equal and therefore the lowest two rows can be removed using (\ref{eq:invface}) for arbitrary values of $\lambda_N$.

Finally one should remark that, depending on the definition of the Boltzmann weigths for a particular model, the crossing relation may be modified by height dependent gauge factors, see e.g.\ (\ref{eq:crossgauged}) for the RSOS model. While these factors cancel in calculations where periodic boundary conditions can be imposed they have to be taken care of in the functional equation (\ref{eq:func_equation})  -- either by rescaling the operators $A$ and $D$ or by adding constant (i.e. not spectral parameter dependent) prefactors.

\section{Factorization of \boldmath${D}_3$\unboldmath\ for the \boldmath$r=5$\unboldmath\ RSOS model}
\label{ch:appD3r5}
Similar as in the $r=4$ case, Eq.~(\ref{eq:factorization_D3}), the matrix elements of the three-site density operator $D_3(\lambda_1,\lambda_2,\lambda_3)$ in transfer matrix eigenstates from the $j=0$ sector of the $r=5$ RSOS model can be decomposed in terms factorizing into the two-point function $f(\lambda_i,\lambda_j)$, $1\leq i<j\leq3$, as defined in (\ref{eq:D2RSOSr5}) and a set of structure functions $f_{i,j}(\lambda_1,\lambda_2,\lambda_3)$, i.e.
\begin{equation*}
   f_0 +f_{1,2}(\lambda_1,\lambda_2,\lambda_3)\, f(\lambda_1,\lambda_2)
       +f_{2,3}(\lambda_1,\lambda_2,\lambda_3)\, f(\lambda_2,\lambda_3)
       +f_{1,3}(\lambda_1,\lambda_2,\lambda_3)\, f(\lambda_1,\lambda_3) \,.
\end{equation*}
Furthermore we find that all structure functions can be written as
\begin{equation}
\label{D3structure}
\begin{aligned}
        f_{1,2}(\lambda_1,\lambda_2,\lambda_3)&=\frac{1}{4}\left(f^1_{1,2}+f^2_{1,2}\, \cot(\lambda_{13})+f^3_{1,2}\, \cot(\lambda_{23})+f^4_{1,2}\, \cot(\lambda_{13})\,\cot(\lambda_{23})\right)\\
        f_{1,3}(\lambda_1,\lambda_2,\lambda_3)&=\frac{1}{4}\left(f^1_{1,3}+f^2_{1,3}\, \cot(\lambda_{12})+f^3_{1,3}\, \cot(\lambda_{23})+f^4_{1,3}\, \cot(\lambda_{12})\,\cot(\lambda_{23})\right)\\
        f_{2,3}(\lambda_1,\lambda_2,\lambda_3)&=\frac{1}{4}\left(f^1_{2,3}+f^2_{2,3}\, \cot(\lambda_{12})+f^3_{2,3}\, \cot(\lambda_{13})+f^4_{2,3}\, \cot(\lambda_{12})\,\cot(\lambda_{13})\right)\\
\end{aligned}
\end{equation}
where $f_0$ and $\{f^1_{i,j}\,,\,f^2_{i,j}\,,\,f^3_{i,j}\,,\,f^4_{i,j}\}$ are constants depending on the considered matrix element.
Hence, we can uniquely describe any matrix element by in total $13$ constants. In Table~\ref{table:D3RSOSr5} we list these constants for the non-zero matrix elements $\bra{\underline{\bm{\alpha}}}D_3(\lambda_1,\lambda_2,\lambda_3)\ket{\underline{\bm{\beta}}}$ in the sector with odd $\alpha_0$. All other matrix elements can be obtained by using reflection symmetry.

\begin{landscape}
\centering

\begin{longtable}{ c|c|c|cccc } 
  \caption[]{Coefficients of the structure functions (\ref{D3structure}) appearing in the three site density operator for the $r=5$ RSOS model.\label{table:D3RSOSr5}}\\
  \hline\hline\rule{0pt}{12pt}
  & & & $f_{1,2}^1$ & $f_{1,2}^2$ & $f_{1,2}^3$ & $f_{1,2}^4$\\
  $\underline{\bm{\alpha}}$ & $\underline{\bm{\beta}}$ & $f_0$ & $f_{1,3}^1$ & $f_{1,3}^2$ & $f_{1,3}^3$ & $f_{1,3}^4$\\
  & & & $f_{2,3}^1$ & $f_{2,3}^2$ & $f_{2,3}^3$ & $f_{2,3}^4$\\[5pt]
 \hline\hline
\endfirsthead
  \caption[]{continued}\\
  \hline\hline\rule{0pt}{12pt}
  & & & $f_{1,2}^1$ & $f_{1,2}^2$ & $f_{1,2}^3$ & $f_{1,2}^4$\\
  $\underline{\bm{\alpha}}$ & $\underline{\bm{\beta}}$ & $f_0$ & $f_{1,3}^1$ & $f_{1,3}^2$ & $f_{1,3}^3$ & $f_{1,3}^4$\\
  & & & $f_{2,3}^1$ & $f_{2,3}^2$ & $f_{2,3}^3$ & $f_{2,3}^4$\\[5pt]
  \hline\hline
\endhead
  & & & $4$ & $0$ & $0$ & $0$\\
  $(1212)$ & $(1212)$ & $\frac{1}{2}-\frac{1}{\sqrt{5}}$ & $0$ & $0$ & $0$ & $0$\\
  & & & $0$ & $0$ & $0$ & $0$\\
  \hline
  & & & $-\sqrt{5 \left(\sqrt{5}-2\right)}$ & $\sqrt[4]{5}$ & $-\sqrt[4]{5}$ & $-\sqrt{5 \left(\sqrt{5}-2\right)}$\\
  $(1212)$ & $(1232)$ & $0$ & $-\sqrt{\sqrt{5}+2}$ & $-\sqrt[4]{5}$ & $\sqrt[4]{5}$ & $-\sqrt{5 \left(\sqrt{5}-2\right)}$\\
  & & & $\sqrt{\sqrt{5}+2}$ & $\sqrt[4]{5}$ & $-\sqrt[4]{5}$ & $\sqrt{5 \left(\sqrt{5}-2\right)}$\\
  \hline
  & & & $-\sqrt{5 \left(\sqrt{5}-2\right)}$ & $-\sqrt[4]{5}$ & $\sqrt[4]{5}$ & $-\sqrt{5 \left(\sqrt{5}-2\right)}$\\
  $(1232)$ & $(1212)$ & $0$ & $-\sqrt{\sqrt{5}+2}$ & $\sqrt[4]{5}$ & $-\sqrt[4]{5}$ & $\sqrt{5 \left(\sqrt{5}-2\right)}$\\
  & & & $\sqrt{\sqrt{5}+2}$ & $-\sqrt[4]{5}$ & $\sqrt[4]{5}$ & $-\sqrt{5 \left(\sqrt{5}-2\right)}$\\
  \hline  
  & & & $\sqrt{5}-3$ & $0$ & $0$ & $3 \sqrt{5}-5$\\
  $(1232)$ & $(1232)$ & $\frac{1}{2}-\frac{1}{\sqrt{5}}$ & $1+\sqrt{5}$ & $0$ & $0$ & $-\left(3 \sqrt{5}-5\right)$\\
  & & & $1+\sqrt{5}$ & $0$ & $0$ & $3 \sqrt{5}-5$\\ 
  \hline
  & & & $-(\sqrt{5}+1)$ & $0$ & $0$ & $\left(3 \sqrt{5}-5\right)$\\
  $(1234)$ & $(1234)$ & $\frac{7}{4 \sqrt{5}}-\frac{3}{4}$ & $-(\sqrt{5}+1)$ & $0$ & $0$ & $\left(3 \sqrt{5}-5\right)$\\
  & & & $-(\sqrt{5}+1)$ & $0$ & $0$ & $-\left(3 \sqrt{5}-5\right)$\\
  \hline\newpage
  & & & $-4$ & $0$ & $0$ & $0$\\
  $(3212)$ & $(3212)$ & $\frac{3}{4 \sqrt{5}}-\frac{1}{4}$ & $0$ & $0$ & $0$ & $0$\\
  & & & $0$ & $0$ & $0$ & $0$\\
  \hline
  & & & $\sqrt{5 \left(\sqrt{5}-2\right)}$ & $-\sqrt[4]{5}$ & $\sqrt[4]{5}$ & $\sqrt{5 \left(\sqrt{5}-2\right)}$\\
  $(3212)$ & $(3232)$ & $0$ & $\sqrt{\sqrt{5}+2}$ & $\sqrt[4]{5}$ & $-\sqrt[4]{5}$ & $-\sqrt{5 \left(\sqrt{5}-2\right)}$\\
  & & & $3\sqrt{\sqrt{5}+2}$ & $-\sqrt[4]{5}$ & $\sqrt[4]{5}$ & $\sqrt{5 \left(\sqrt{5}-2\right)}$\\
  \hline
  & & & $-\frac{1}{2} \left(\sqrt{5}+5\right)$ & $-\sqrt{\frac{1}{2} \left(5-\sqrt{5}\right)}$ & $\sqrt{\frac{1}{2} \left(5-\sqrt{5}\right)}$ & $-\frac{1}{2} \left(\sqrt{5}+5\right)$\\
  $(3212)$ & $(3432)$ & $0$ & $\frac{1}{2} \left(-3 \sqrt{5}-7\right)$ & $\sqrt{\frac{1}{2} \left(5-\sqrt{5}\right)}$ & $-\sqrt{\frac{1}{2} \left(5-\sqrt{5}\right)}$ & $\frac{1}{2} \left(\sqrt{5}+5\right)$\\
  & & & $-\frac{1}{2} \left(\sqrt{5}+5\right)$ & $-\sqrt{\frac{1}{2} \left(5-\sqrt{5}\right)}$ & $\sqrt{\frac{1}{2} \left(5-\sqrt{5}\right)}$ & $-\frac{1}{2} \left(\sqrt{5}+5\right)$\\
  \hline
  & & & $\sqrt{5 \left(\sqrt{5}-2\right)}$ & $\sqrt[4]{5}$ & $-\sqrt[4]{5}$ & $\sqrt{5 \left(\sqrt{5}-2\right)}$\\
  $(3232)$ & $(3212)$ & $0$ & $\sqrt{\sqrt{5}+2}$ & $-\sqrt[4]{5}$ & $\sqrt[4]{5}$ & $-\sqrt{5 \left(\sqrt{5}-2\right)}$\\
  & & & $3 \sqrt{\sqrt{5}+2}$ & $\sqrt[4]{5}$ & $-\sqrt[4]{5}$ & $\sqrt{5 \left(\sqrt{5}-2\right)}$\\
  \hline
  & & & $3-\sqrt{5}$ & $0$ & $0$ & $5-3 \sqrt{5}$\\
  $(3232)$ & $(3232)$ & $\frac{3}{4 \sqrt{5}}-\frac{1}{4}$ & $-\sqrt{5}-1$ & $0$ & $0$ & $3 \sqrt{5}-5$\\
  & & & $3-\sqrt{5}$ & $0$ & $0$ & $5-3 \sqrt{5}$\\  
  \hline
  & & & $3 \sqrt{\sqrt{5}+2}$ & $-\sqrt[4]{5}$ & $\sqrt[4]{5}$ & $\sqrt{5 \left(\sqrt{5}-2\right)}$\\
  $(3232)$ & $(3432)$ & $0$ & $\sqrt{\sqrt{5}+2}$ & $\sqrt[4]{5}$ & $-\sqrt[4]{5}$ & $-\sqrt{5 \left(\sqrt{5}-2\right)}$\\
  & & & $\sqrt{5 \left(\sqrt{5}-2\right)}$ & $-\sqrt[4]{5}$ & $\sqrt[4]{5}$ & $\sqrt{5 \left(\sqrt{5}-2\right)}$\\ 
  \hline
  & & & $-\frac{1}{2} \left(\sqrt{5}+5\right)$ & $\sqrt{\frac{1}{2} \left(5-\sqrt{5}\right)}$ & $-\sqrt{\frac{1}{2} \left(5-\sqrt{5}\right)}$ & $-\frac{1}{2} \left(\sqrt{5}+5\right)$\\
  $(3432)$ & $(3212)$ & $0$ & $\frac{1}{2} \left(-3 \sqrt{5}-7\right)$ & $-\sqrt{\frac{1}{2} \left(5-\sqrt{5}\right)}$ & $\sqrt{\frac{1}{2} \left(5-\sqrt{5}\right)}$ & $\frac{1}{2} \left(\sqrt{5}+5\right)$\\
  & & & $-\frac{1}{2} \left(\sqrt{5}+5\right)$ & $\sqrt{\frac{1}{2} \left(5-\sqrt{5}\right)}$ & $-\sqrt{\frac{1}{2} \left(5-\sqrt{5}\right)}$ & $-\frac{1}{2} \left(\sqrt{5}+5\right)$\\
  \hline
  & & & $3 \sqrt{\sqrt{5}+2}$ & $\sqrt[4]{5}$ & $-\sqrt[4]{5}$ & $\sqrt{5 \left(\sqrt{5}-2\right)}$\\
  $(3432)$ & $(3232)$ & $0$ & $ \sqrt{\sqrt{5}+2}$ & $-\sqrt[4]{5}$ & $\sqrt[4]{5}$ & $-\sqrt{5 \left(\sqrt{5}-2\right)}$\\
  & & & $\sqrt{5 \left(\sqrt{5}-2\right)}$ & $\sqrt[4]{5}$ & $-\sqrt[4]{5}$ & $\sqrt{5 \left(\sqrt{5}-2\right)}$\\  
  \hline  & & & $0$ & $0$ & $0$ & $0$\\
  $(3432)$ & $(3432)$ & $\frac{3}{4 \sqrt{5}}-\frac{1}{4}$ & $0$ & $0$ & $0$ & $0$\\
  & & & $-4$ & $0$ & $0$ & $0$\\ 
  \hline\newpage
  & & & $\sqrt{5}+1$ & $0$ & $0$ & $3 \sqrt{5}-5$\\
  $(3234)$ & $(3234)$ & $\frac{1}{2}-\frac{1}{\sqrt{5}}$ & $\sqrt{5}+1$ & $0$ & $0$ & $-(3 \sqrt{5}-5)$\\
  & & & $\sqrt{5}-3$ & $0$ & $0$ & $3 \sqrt{5}-5$\\  
  \hline
  & & & $\sqrt{\sqrt{5}+2}$ & $\sqrt[4]{5}$ & $-\sqrt[4]{5}$ & $-\sqrt{5 \left(\sqrt{5}-2\right)}$\\
  $(3234)$ & $(3434)$ & $0$ & $-\sqrt{\sqrt{5}+2}$ & $-\sqrt[4]{5}$ & $\sqrt[4]{5}$ & $\sqrt{5 \left(\sqrt{5}-2\right)}$\\
  & & & $-\sqrt{5 \left(\sqrt{5}-2\right)}$ & $\sqrt[4]{5}$ & $-\sqrt[4]{5}$ & $-\sqrt{5 \left(\sqrt{5}-2\right)}$\\ 
  \hline
  & & & $\sqrt{\sqrt{5}+2}$ & $-\sqrt[4]{5}$ & $\sqrt[4]{5}$ & $-\sqrt{5 \left(\sqrt{5}-2\right)}$\\
  $(3434)$ & $(3234)$ & $0$ & $-\sqrt{\sqrt{5}+2}$ & $\sqrt[4]{5}$ & $-\sqrt[4]{5}$ & $\sqrt{5 \left(\sqrt{5}-2\right)}$\\
  & & & $-\sqrt{5 \left(\sqrt{5}-2\right)}$ & $-\sqrt[4]{5}$ & $\sqrt[4]{5}$ & $-\sqrt{5 \left(\sqrt{5}-2\right)}$\\  
  \hline
  & & & $0$ & $0$ & $0$ & $0$\\
  $(3434)$ & $(3434)$ & $\frac{1}{2}-\frac{1}{\sqrt{5}}$ & $0$ & $0$ & $0$ & $0$\\
  & & & $4$ & $0$ & $0$ & $0$\\
  \hline\hline
\end{longtable}
\end{landscape}

\section{Non-abelian anyons}\label{ch:app2}
The sequence of fusion processes defining the fusion path basis (\ref{eq:def_fuspath}) of states for an anyonic quantum chain can be reordered by so-called F-moves \cite{Kita06}.  Up to some gauge freedom the latter are determined by the fusion rules (\ref{eq:fusionrules}) and the pentagon equation.  Here we define the F-moves graphically 
\begin{equation}
    \begin{tikzpicture}[baseline={([yshift=-15pt]current bounding box.center)}]
     \def \m {3}
     \draw (0,0)--(\m,0);
     \draw[dashed] (0.333*\m,0.3*\m)--(0.333*\m,0);
     \draw[dashed] (0.666*\m,0.3*\m)--(0.666*\m,0);
      \node[below] at (0,0) {$a$}; 
      \node[below] at (\m,0) {$d$};
      \node[below] at (0.5*\m,0) {$e$};
      \node[above] at (0.333*\m,0.3*\m) {$b$};
      \node[above] at (0.666*\m,0.3*\m) {$c$};
    \end{tikzpicture}=\sum\limits_f\left(F^{abc}_d\right)^e_f\,\begin{tikzpicture}[baseline={([yshift=-15pt]current bounding box.center)}]
     \def \m {3}
     \draw (0,0)--(\m,0);
     \draw[dashed] (0.333*\m,0.3*\m)--(0.5*\m,0.2*\m);
     \draw[dashed] (0.666*\m,0.3*\m)--(0.5*\m,0.2*\m)--(0.5*\m,0);
     \node[below] at (0,0) {$a$}; 
     \node[below] at (\m,0) {$d$};
     \node[above] at (0.58*\m,0) {$f$};
     \node[above] at (0.333*\m,0.3*\m) {$b$};
     \node[above] at (0.666*\m,0.3*\m) {$c$};
     \end{tikzpicture}
\end{equation}
For Fibonacci anyons the only non-trivial F-move is
\begin{equation}
   \left( F^{\tau\tau\tau}_\tau\right)^{e}_{f}=\begin{pmatrix} \phi^{-1} && \phi^{-1/2}\\ \phi^{-1/2} && -1/\phi
   \end{pmatrix}_{ef} 
\end{equation}
where $e,f\in\{1,\tau\}$ and $\phi\equiv\frac{1}{2}(1+\sqrt{5})$ is the golden ratio (see e.g.\ Ref.~\cite{FeTr2007}. All others are $1$ if they are allowed by the fusion rules and $0$ else.

Local projection operators can be expressed in terms of the F-moves as
\begin{equation}
    P^{(\tau\tau\to\ell)}_i\equiv\sum\limits_{a_{i-1}, a_i, a'_i, a_{i+1}} \left[\left(F^{a_{i-1} \tau\tau}_{a_{i+1}}\right)^{a'_i}_\ell\right]^*  \left(F^{a_{i-1} \tau\tau}_{a_{i+1}}\right)^{a_i}_\ell\, \ketbra{\dots a_{i-1} a'_i a_{i+1}\dots}{\dots a_{i-1} a_i a_{i+1}\dots}.
\end{equation}
Though depending on the sites $i-1,i$ and $i+1$ they leave the first and the last invariant. Since the F-moves of the Fibonacci anyons are real valued, we can drop the complex conjugation. 
A straight forward generalization allows to express local three anyon projection operators as
\begin{equation}
\label{eq:pttt}
    \bra{\bm{a}}P^{(\tau\tau\tau\to\ell)}_i\ket{\bm{b}}\equiv\left(\prod\limits_{k\notin\{i,i+1\}} \delta_{a_k b_k}\right)\sum\limits_x \left[\left(F^{b_{i-1}\tau\tau}_{a_{i+1}}\right)^{a_i}_x\,
    \left(F^{b_{i-1}x\tau}_{b_{i+2}}\right)^{a_{i+1}}_\ell\right]^*
    \left(F^{b_{i-1}x\tau}_{b_{i+2}}\right)^{b_{i+1}}_\ell
    \left(F^{b_{i-1}\tau\tau}_{b_{i+1}}\right)^{b_{i}}_x\,.
\end{equation}

The anyons of an $su(2)_3$ theory can be labeled by generalized spins $j=0,\frac12,1,\frac32$. An automorphism of the  corresponding fusion algebra allows to identify $j=0,\frac32$ with the trivial ($x=1$) and $j=\frac12,1$ with the $\tau$-anyon of the Fibonacci chain.  Starting from fusion path states $\ket{x_0 x_1\dots x_L}$ of Fibonacci anyons with $x_n\in\{1,\tau\}$ we obtain the Hilbert space of $r=5$ RSOS model defined in Section \ref{sec:rsos} by mapping
\begin{equation}\label{eq:mapping}
    x_n\mapsto a_n\equiv \begin{cases} 
    1 & \text{for~} x_n=1, \quad n \text{~odd}\\
    2 & \text{for~} x_n=\tau, \quad n \text{~even}\\
    3 & \text{for~} x_n=\tau, \quad n \text{~odd}\\
    4 & \text{for~} x_n=1, \quad n \text{~even}
   \end{cases}\,.
\end{equation}
Note that this mapping gives only half of the basis states of the RSOS model, since $a_n$ will be even (odd) on the even (odd) sublattice. The other half is obtained by switching odd and even in \eqref{eq:mapping}.  This also provides a mapping of the anyon Hamiltonian to an operator in the RSOS model.
Similarly the projection operators $P^{(\tau\tau\to1)}_i$ are mapped (up to a factor $\phi$) to local operators $e_i$ forming a representation of the Temperley-Lieb algebra \cite{FeTr2007,TeLi71}
\begin{equation}
    \bra{\bm{a}}e_i\ket{\bm{b}}=\delta_{a_{i-1}a_{i+1}}\sqrt{\frac{g_{a_i}g_{b_i}}{g_{a_{i-1}}g_{a_{i+1}}}}\prod_{k\neq i} \delta_{a_k b_k} 
\end{equation}
with gauge factors $g_x$ from Eq.~\eqref{eq:gauge-factors}.
Comparing this with the RSOS Boltzmann-weights (\ref{eq:rsos_weights}) we find:
\begin{equation}
  \bra{\bm{a}}e_i\ket{\bm{b}}=\left(\prod_{k\neq i} \delta_{a_k b_k}\right) \, W \left(
    \begin{matrix}
      a_{i-1} & a_i\\
      b_i & a_{i+1}
    \end{matrix}\,\middle|\, 
    \lambda\right).
\end{equation}
We will now express the Hamiltonian by means of the transfer matrix. Therefore, we observe that
\begin{equation}
W' \left(
    \begin{matrix}
      a_{i-1} & a_i\\
      b_i & a_{i+1}
    \end{matrix}\,\middle|\, 
    0\right)= \frac{1}{\sin\lambda}\,W \left(
    \begin{matrix}
      a_{i-1} & a_i\\
      b_i & a_{i+1}
    \end{matrix}\,\middle|\, 
    \lambda\right)-\cot\lambda\, \delta_{b_i a_i}
\end{equation}
where $W'$ is the derivative with respect to the spectral parameter $u$. Furthermore, crossing symmetry and unitarity imply that $t^{-1}(0)=t(\lambda)$ for the homogeneous model (i.e. all inhomogeneities $u_i=0$). In that case, both $t(0)$ and $t(\lambda)$ are shift operators due to the initial condition. Hence, the logarithmic derivative of the transfer matrix at $u=0$ is the sum of local operators
\begin{align}
    \bra{\bm{a}}t^{-1}(0) t'(0)\ket{\bm{b}}&=\sum\limits_{i}\left(\frac{1}{\sin\lambda}\begin{tikzpicture}[baseline=(current bounding box.center)]
     \def \x {1.5}
     \draw (0,\x)--(3*\x,\x);
     \draw (0,0)--(3*\x,0);
     \draw (0,-\x)--(3*\x,-\x);
     \draw (0,\x)--(0,-\x);
     \draw (\x,\x)--(\x,-\x);
     \draw (2*\x,\x)--(2*\x,-\x);
     \draw (3*\x,\x)--(3*\x,-\x);
     \draw [dotted] (0,\x)--(\x,0);
     \draw [dotted] (\x,\x)--(2*\x,0);
     \draw [dotted] (2*\x,\x)--(3*\x,0);
     \draw [dotted] (0,-\x)--(\x,0);
     \draw [dotted] (2*\x,-\x)--(3*\x,0);
     \node at (-0.5*\x,0) {\dots};
     \node at (3.5*\x,0) {\dots};
     \node at (1.5*\x,-0.5*\x) {$\lambda$};
     \node[above] at (0,\x) {$a_{i-1}$};
     \node[above] at (\x,\x) {$a_{i}$};
     \node[above] at (2*\x,\x) {$a_{i+1}$};
     \node[below] at (0,-\x) {$b_{i-1}$};
     \node[below] at (\x,-\x) {$b_{i}$};
     \node[below] at (2*\x,-\x) {$b_{i+1}$};
    \end{tikzpicture}-\cot\lambda\,\prod\limits_{k=0}^L\delta_{a_k b_k}\right)\nonumber\\
    &=\sum\limits_{i}\bra{\bm{a}}\frac{e_i}{\sin\lambda}-\cot\lambda\,\mathds{1} \ket{\bm{b}}
\end{align}
or
\begin{equation}
    \sum_{i} e_i =\sin\lambda\, t^{-1}(0) t'(0)+L \cos\lambda \mathds{1}\,.
\end{equation}
Hence the anyonic Hamiltonian (\ref{eq:hfibo}) 
\begin{equation}
    H = J\,\sum_i P^{(\tau\tau\to1)}_i \mapsto \frac{J}{\phi}\, \sum_i e_i 
\end{equation}
is a member of the commuting family of operators generated by the transfer matrix (\ref{eq:def_transf}) of the $r=5$ RSOS model.

Likewise we can also map the $3$-anyon projector (\ref{eq:pttt}) to an operator acting on the $r=5$ RSOS model Hilbert space.




\nolinenumbers

\end{document}